\documentclass[11pt]{article}

\usepackage[utf8]{inputenc}

\usepackage{amsmath,amsfonts,amsthm,amssymb,color}
\usepackage[usenames,dvipsnames,svgnames,table]{xcolor}
\definecolor{darkgreen}{rgb}{0.0,0,0.9}
\usepackage{mathtools}
\usepackage{authblk}
\usepackage{fullpage}
\usepackage{parskip}
\usepackage{comment}
\usepackage{tikz}
\usepackage{bbm}
\usepackage{dsfont}
\usepackage[sc]{mathpazo}
\usepackage[basic]{complexity}
\usepackage{algorithm2e}
\usepackage[colorlinks=true,
citecolor=OliveGreen,linkcolor=BrickRed,urlcolor=BrickRed,pdfstartview=FitH]{hyperref}
\usepackage[capitalize,nameinlink]{cleveref}
\usepackage{tcolorbox}
\usepackage[short]{optidef}

\newtcolorbox{wbox}
{
	colback  = white,
}


\usepackage{extras-alg22}

\newcommand{\e}{\mbox{\rm evenlevel}}
\renewcommand{\o}{\mbox{\rm oddlevel}}
\newcommand{\mn}{\mbox{\rm minlevel}}
\newcommand{\mx}{\mbox{\rm maxlevel}}
\renewcommand{\t}{\mbox{\rm tenacity}}
\newcommand{\bt}{\mbox{$\CB_{b,t}$}}
\newcommand{\base}{\mbox{\rm base}}
\newcommand{\CB}{\mbox{${\mathcal B}$}}

\newcommand{\bd}{\mbox{\rm bud}}
\newcommand{\bds}{\mbox{${\rm bud^*}$}}

\newcommand{\Ct}{\mbox{\bf Claim({\mbox{\boldmath $t$}})}}

\newcommand{\suppress}[1]{}

\makeatletter
\def\fnum@figure{{\bf Figure \thefigure}}
\def\fnum@table{{\bf Table \thetable}}

\long\def\@mycaption#1[#2]#3{\addcontentsline{\csname
 ext@#1\endcsname}{#1}{\protect\numberline{\csname
  the#1\endcsname}{\ignorespaces #2}}\par
     \begingroup
       \@parboxrestore
          \small
       \@makecaption{\csname fnum@#1\endcsname}{\ignorespaces
#3\endgroup}
      }

\newcommand{\A}{\alpha}
\newcommand{\B}{\beta}


\newcommand*{\Zplus}{\mathbb{Z_+}}

\renewcommand*{\cP}{{\mathcal{P}}}

\newcommand*{\la}{\leftarrow}
\newcommand{\support}{\mathrm{supp}}

\makeatletter
\def\thm@space@setup{%
	\thm@preskip= 10pt
	\thm@postskip=\thm@preskip 
}
\makeatother

\makeatletter
\renewcommand{\paragraph}{%
	\@startsection{paragraph}{4}%
	{\z@}{5pt}{-1em}%
	{\normalfont\normalsize\bfseries}%
}
\makeatother





\newenvironment{fminipage}%
{\begin{Sbox}\begin{minipage}}%
		{\end{minipage}\end{Sbox}\fbox{\TheSbox}}


\title{A Proof of the MV Matching Algorithm}

\author[1]{Vijay V.~Vazirani}

\affil[1]{University of California, Irvine}

\date{}

\begin{document}
	\maketitle

\begin{abstract}
The Micali-Vazirani (MV) algorithm for maximum cardinality matching in general graphs, which was published in 1980 \cite{MV}, remains to this day the most efficient known algorithm for the problem.

This paper gives the first complete and correct proof of this algorithm. Central to our proof are some purely graph-theoretic facts, capturing properties of minimum length alternating paths; these may be of independent interest. An attempt is made to render the algorithm easier to comprehend. 
\end{abstract}
    
\bigskip
\bigskip
\bigskip
\bigskip
\bigskip
\bigskip
\bigskip
\bigskip

\bigskip
\bigskip
\bigskip
\bigskip
\bigskip
\bigskip
\bigskip
\bigskip

\bigskip
\bigskip
\bigskip
\bigskip
\bigskip
\bigskip

\pagebreak
    
\section{Introduction}
\label{sec:intro}

The Micali-Vazirani (MV) \cite{MV} general graph maximum cardinality matching algorithm was published in 1980. It remains to this day the most efficient known algorithm for the problem; see Section \ref{sec.running} for precise details, including a slight speed-up for a special case, though of academic interest only. The description of the algorithm, given via a pseudo-code in \cite{MV}, is complete and error-free; however, the paper did not attempt a proof of correctness. 

The first polynomial time algorithm for this problem was given by Edmonds \cite{Edmonds}, central to which was the notion of {\em blossom}. Edmonds established purely graph-theoretic facts formalizing the manner in which alternating paths traverse blossoms and their complex nested structure. These facts formed the core around which his proof of correctness was built.

In the same vein, a complete proof of correctness of the MV algorithm requires graph-theoretic structural facts; see Section \ref{sec.contributions} for a precise reason for this substantial development. The process of formalizing these facts was initiated in \cite{va.matching}. As detailed in Section \ref{sec.contributions}, this paper made some important contributions; however, the structure is too rich and complex, and the attempt in \cite{va.matching} had serious shortcomings.

The current paper gives the first complete and correct proof of the MV algorithm, including purely graph-theoretic facts, capturing properties of minimum length alternating paths, which may be of independent interest. Additionally, innovative expository techniques are utilized to render the algorithm easier to comprehend. Considering the special status of this problem in the theory of algorithms, see below, it was not appropriate to leave this algorithm in an essentially unproven state --- hence the investment of (substantial) effort in the current paper.


Similar to the bipartite graph matching algorithms of \cite{Karp, Karzanov}, the MV algorithm also works by finding minimum length augmenting paths. However, a stark difference arises here: in non-bipartite graphs minimum length alternating paths do not possess an elementary property, called {\em breadth first search honesty}. Consequently, finding a minimum length augmenting path entails finding arbitrarily long paths to intermediate vertices, even though they admit short paths, see  Section \ref{sec.basic-definitions}. Since finding long paths, such as a Hamiltonian path, is NP-hard, the proposed approach appears to be hopeless. A recourse is provided by the remarkable structural properties of matching.

To give some feel for this structure, we will very informally describe next. We define the new notions of {\em base of a vertex} and the {\em tenacity of vertices and edges}. Building on these, we define the notion of a {\em blossom from the viewpoint of minimum length alternating paths}. The blossom of tenacity $t$ and base $b$ is denoted by $\bt$. The role of a blossom is to ``hide complexity within'': for each vertex $v \in \bt$, all minimum length alternating paths of both parity, even and odd, from $b$ to $v$ are guaranteed to lie entirely inside the blossom. Furthermore, the interface of a blossom with the ``outside world'' is very simple, see Theorem  \ref{thm.base}.

In a sense, this gives us a divide-and-conquer strategy: the task of finding a minimum length alternating path from an unmatched vertex $f$ to $v$ is divided into two tasks: find a minimum path from $f$ to $b$ and from $b$ to $v$, where $b$ is the base of $v$ and of the blossom containing $v$. The latter path is contained in the blossom, as stated above. In turn, each of these two paths may themselves pass through arbitrarily nested smaller blossoms and the process is repeated recursively.

Matching has had a long and distinguished history within graph theory and combinatorics, spanning more than a century \cite{LP.book}. Its exalted status in the theory of algorithms arises from the fact that its study has yielded quintessential paradigms and powerful algorithmic techniques  which have contributed to forming the modern theory of algorithms as we know it today. These include definitions of the classes $\cP$ \cite{Edmonds} and $\#\cP$ \cite{Val.permanent}, the primal-dual paradigm \cite{Kuh55}, the equivalence of random generation and approximate counting for self-reducible problems \cite{count.JVV}, defining facets of the convex hull of solutions to a combinatorial problem \cite{Ed.poly}, the canonical paths argument in the Markov chain Monte Carlo method \cite{JS89}, and the Isolation Lemma \cite{MVV}.

\subsection{Contributions of this Paper}
\label{sec.contributions}

Besides stating the contributions of this paper, at the end of this section, we will also state the contributions of \cite{va.matching} and point out the nature of its shortcomings due to which the current paper is called for. This section refers to several terms which are defined later in the paper. Even so, one can easily get its gist, though to fully understand it, one may have to come back to this section after reading ahead.

Let $v$ be a vertex of tenacity $t$, where $t$ is an odd number with $t_m \leq t < l_m$, where $t_m$ is the tenacity of a minimum tenacity vertex and $l_m$ is the length of a minimum length augmenting path. In order to define the base of $v$, we first need to prove Claim \ref{claim.base}. However, its proof requires the notion of a blossom and its associated properties. On the other hand, blossoms can be defined only after defining the base of a vertex. Therefore, we are faced a chicken-and-egg problem. 

Our resolution of this problem involves carrying out an induction on tenacity $t$, starting with $t = t_m$.  Once Claim \ref{claim.base} is proven for vertices of tenacity $\leq t$, the base of vertices of tenacity $\leq t$ can be defined. Following this, blossoms of tenacity $t$ can be defined and properties of these blossoms and properties of paths traversing through these blossoms can be established. In the next step of the induction, these facts are used critically  to prove Claim \ref{claim.base} for the next higher value of tenacity. 

 For proving the induction basis and step, a further induction is needed, on minlevels of vertices, i.e., the core pf the proof is structured as a double induction. We also give a recursive definition of blossoms  from the viewpoint of minimum length alternating paths. This definition is simpler than the one given in \cite{va.matching} and more appropriate for use in our inductive proof. 

The algorithm itself involves two main ideas: the new search procedure called {\em double depth first search (DDFS)} and the precise synchronization of events. The former is described in Section \ref{sec.DDFS} in a completely self-contained manner, so it can be read without reading the rest of  the paper. The latter is described in Section \ref{sec.alg}, via Figures \ref{fig.early1} and \ref{fig.early2}. For the purpose of this synchronization, the algorithm for a phase is organized in {\em search levels}. Assume that $\mn(v) = i+1$ and $\t(v) = t$. Then $\mn(v)$ is found during search level $i$ and $\mx(v)$ is found during search level $(t-1)/2$.

The following steps were taken to render the algorithm easier to comprehend.

\begin{enumerate}
	\item DDFS is described in plain English in Section \ref{sec.DDFS}; readers who prefer to understand it via a pseudocode can find it in \cite{va.matching}. 
	\item For ease of comprehension, DDFS has been described in the simpler setting of a directed, layered graph $H$. In the algorithm, DDFS is run on the original graph $G$. However, describing this process, as was done in \cite{va.matching}, is too cumbersome. Instead, we provide a mapping from $G$ to $H$ in Section \ref{sec.H} via which the reader can easily trace the steps DDFS executes in $G$. To further help the reader, in the many illustrative examples given, the distance of vertices from the unmatched vertex is proportional to their minlevels, as would be in the corresponding graph $H$.
	\item The exact working of the algorithm depends on the manner in which it resolves ties, and that determines the structures found by it. We make a very clear distinction between the structures found by the algorithm and the graph-theoretically defined structural notions,  e.g., the former include petal and bud whereas the latter include blossom and base, see Section \ref{sec.Petal-Bud}. To clarify matters further, a formal relationship between these notions is established in Lemma \ref{lem.all}. 
\end{enumerate}

We finally address the following question, ``Why is it essential to formalize such an elaborate purely graph-theoretic structure for proving correctness of the MV algorithm?'' For the ensuing discussion, we will assume that the reader is familiar with the definitions of $\mn(v)$, $\mx(v)$, $\e(v)$, $\o(v)$, tenacity and bridge. 

Assume that $\mn(v) = i+1$ and $\mx(v) = j+1$, so that $\t(v) = i + j + 2 = t$, say. It turns out that there is a neighbor, say $u$, of $v$, such that $\e(u) = i$ or $\o(u) = i$, depending on the parity of $i$. Therefore one step of breadth first search, while searching from $u$, will lead to assigning $v$ its correct minlevel. We will say that $u$ is the {\em agent that assigns $v$ its minlevel}. 

In contrast, none of the neighbors of $v$ may have $j$ as one of its levels. For instance, vertex $a$ in the graph of Figure \ref{fig.BFSH} has $\mx(a) = \o(a) = 11$; however, none of its neighbors has an evenlevel of 10. What then is the agent that assigns $v$ its maxlevel? If we do define such an agent graph-theoretically, we would then be left with the task of proving the existence of this agent for each vertex $v$. Additionally, we will need to prove that this ``agent'' is found well in time, so that $v$ is assigned its maxlevel during search level $(t-1)/2$ to ensure correct synchronization.  

This paper does provides very precise answers to these questions. The agent is a {\em bridge} whose  tenacity equals $\t(v)$. Proving the existence of such a bridge can be seen as the main outcome of the elaborate double induction mentioned above; see Statement 2 of Theorem \ref{thm.base}.  

\cite{va.matching} accurately identified the property of bridges stated above as the crux of the matter for proving correctness of the MV algorithm. At a high level, the interplay between the notions of tenacity, base, blossom and bridge was also accurately identified. However, the actual definitions and proofs were flawed; we give illustrative examples below. 

In \cite{va.matching}, the central notion of base of a vertex was defined for any vertex of finite tenacity. In the current paper, base has been defined for any vertex having tenacity in the range specified above. In Figure \ref{fig.nobase} we give an example of a vertex of tenacity $t_m$ which does not have a base. Perhaps worse, \cite{va.matching} proceeds to ``prove'' in Theorem 3 that every vertex of finite tenacity has a well-defined base. It turns out that this entire proof is incorrect, even for vertices having tenacity in the specified range, because of the chicken-and-egg problem stated above.

 In Section \ref{sec.laminar}, we prove that the set of blossoms form a laminar family and this fact plays an important role in our proof. Before embarking on proving this fact, we present an attempt at constructing a counter-example. The subtle reason due to which this counter-example fails, indicates how non-trivial the proof would be. This development went unnoticed in \cite{va.matching}, leading to an incorrect ``proof'' given in Lemma 8 in \cite{va.matching}.

\subsection{Running Time and Related Papers}
\label{sec.running}

The MV algorithm finds minimum length augmenting paths in phases; each phase finds a maximal set of disjoint such paths and 
augments the matching along all paths. $O(\sqrt{n})$ such phases suffice for finding a maximum matching \cite{Karp,Karzanov}.
The MV algorithm executes a phase in almost linear time. Its precise running time is $O(m \sqrt{n} \cdot \alpha(m, n))$ in the pointer model, 
and $O(m \sqrt{n})$ in the RAM model (see Theorem \ref{thm.time} for details). 
As is standard, $n$ denotes the number of vertices and $m$ the number of edges in the given graph.

We note that small theoretical improvements to the running time, for the case of
very dense graphs, have been given in recent years: $O(m \sqrt{n} {{\log (n^2/m)} / {\log n}})$ \cite{GKarzanov}  
and $O(n^w)$ \cite{Mucha}, where $w$ is the best exponent of $n$ for multiplication of
two $n \times n$ matrices. The former improves on MV for $m = n^{2 -o(1)}$ and the latter for $m = \omega(n^{1.85})$; additionally,
the latter algorithm involves a large multiplicative constant in its running time which comes from the use of fast matrix multiplication
as a subroutine in this algorithm.

Prior to \cite{MV}, Even and Kariv \cite{EK75} had used the idea of finding augmenting paths in phases to obtain an $O(n^{2.5})$ maximum
matching algorithm. However, the algorithm is quite complicated, there is no journal version of the result, and its correctness is hard to ascertain.

Subsequent to \cite{MV}, \cite{GTarjan2} gives an efficient scaling algorithm for finding a minimum weight matching in a general graph with integral edge weights and at the end of the paper, it claims that the unit weight version of their algorithm achieves the same running time as MV. A much better version of the weighted graph approach to cardinality matching was given recently in \cite{Gabow2017weighted}, again achieving the same running time. The rest of the history of matching algorithms is very well documented and will not be repeated here, e.g., see \cite{LP.book,va.matching}.

\section{Double Depth First Search (DDFS)}
\label{sec.DDFS}

This section is fully self-contained and describes the procedure of double depth first search (DDFS), which works by growing two DFS trees in a highly coordinated manner. For ease of comprehension, in this section we will present DDFS in the simplified setting of a directed, layered graph $H$. The setting in which it is used in the MV algorithm is more complex; we will explicitly provide a mapping between the settings to show how the ideas carry over.  

{\bf Directed, layered graph $H = (V, E)$:}
$V$ is partitioned into $h + 1$ layers, for some $h > 0$, namely $l_h, \ldots l_0$, with $l_h$ being the {\em highest layer} and $l_0$ the {\em lowest layer}. Each edge in $E$ runs from a higher to a lower layer, not necessarily consecutive. The layer number of a vertex $v$ is denoted by $l(v)$. If $l(u) < l(v)$, then we will say that $u$ {\em is deeper than} $v$. Graph $H$ contains two special vertices, $r$ and $g$, for {\em red} and {\em green}, not necessarily in the same layer. Finally, $H$ satisfies:

\noindent
{\bf DDFS Requirement:} Starting from every vertex $v \in V$, there is a path to a vertex in layer $l_0$.

Vertex $v$ will be called a {\em bottleneck} if every path from $r$ to $l_0$ and every path from $g$ to $l_0$ contains $v$; $v$ is allowed to be $r$ or $g$ or a vertex in layer $l_0$. Let $p$ be a  path from $r$ or $g$ to layer $l_0$. Since layer numbers on $p$ are monotonically decreasing,  if there is a bottleneck, the one having highest level must be unique. It will be called the {\em highest bottleneck} and we will denote it by $b$. {\em Case 1} and {\em Case 2}, respectively, will denote whether there is a bottleneck or not. In Case 2, there must be distinct vertices $r_0$ and $g_0$ in layer $l_0$ such that there are disjoint paths from $r$ to $r_0$ and $g$ to $g_0$.  

In Case 1, let $V_b$ ($E_b$) be the set of all vertices (edges) that lie on all paths from $r$ and $g$ to $b$, and in Case 2, let $E_p$ be the set of all edges that lie on all paths starting from $r$ or $g$ and ending at $r_0$ or $g_0$.

{\bf The objective of DDFS:}
The first objective of DDFS is to determine which case holds. Furthermore, in Case 1, it needs to find the highest bottleneck, $b$, and partition the vertices in $V_b - \{b\}$ into two sets $S_R$ and $S_G$, called the {\em red set} and {\em green set} respectively, with $r \in R$ and $g \in G$. These sets should satisfy: 
\begin{enumerate}
	\item There is a path from $r$ to $b$ in $S_R \cup \{b\}$ and a path from $g$ to $b$ in $S_G \cup \{b\}$.
	\item There are two spanning trees, $T_r$ and $T_g$, in $S_R \cup \{b\}$ and $S_G \cup \{b\}$, and rooted at $r$ and $g$, respectively. Furthermore, DDFS needs to find such a pair of trees.
\end{enumerate}
In Case 2, DDFS needs to find distinct vertices $r_0$ and $g_0$ in layer $l_0$, and vertex disjoint paths from $r$ to $r_0$ and $g$ to $g_0$. 

It is easy to see that once DDFS meets this objective, it provides the following certificate.

\noindent
{\bf DDFS Certificate:} In Case 1, using the trees $T_r$ and $T_g$ we get: For every vertex $v \in V_b - \{b\}$, if $v$ is red, there a path from $r$ to $v$ in $T_r$ and a disjoint path from $g$ to $b$ in $T_g$. And if $v$ is green, there a path from $g$ to $v$ in $T_g$ and a disjoint path from $r$ to $b$ in $T_r$. In Case 2, there are vertex disjoint paths from $r$ to $r_0$ and $g$ to $g_0$.

\noindent
{\bf Running time:} The running time of DDFS needs to be $O(|E_b|)$ in Case 1 and $O(|E_p|)$ in Case 2.  

{\bf The two DFSs and their coordination:} DDFS works by growing two DFS trees, $T_r$ and $T_g$, rooted at $r$ and $g$, respectively, in a highly coordinated manner. In Case 1, $b$ will be in both trees, and other than $b$, the two trees will be vertex-disjoint. The vertices in these trees, other than $b$, will be marked {\em red} and {\em green}, respectively. Initially, all vertices are marked ``unvisited'' and all edges are marked ``unexplored''.

The first time $T_r$ or $T_g$ visits an unvisited vertex, $v$, it is marked ``visited'' and its color is set accordingly. However, later on, $v$ may have to be given to the other tree and its color changed. Thus, the two trees are not grown in a greedy manner. For example, suppose $T_r$ acquires vertex $v$ which lies on all paths from $g$ to $b$ or $l_0$. Clearly, $v$ is not critical to $T_r$, since there are disjoint paths to $b$ or to $l_0$. In this case, $T_r$ should rescind $v$, allow $T_g$ to take $v$, and backtrack in order to look for an alternative path to a vertex as deep as $v$. A systematic way of carrying this out is explained below and this is the main point behind DDFS. 

Modulo one crucial aspect, which is pointed out below, both trees function as normal DFS trees in a directed graph. Each tree grows by adding unvisited vertices and each of its vertices, other than the root, has a unique {\em parent}. Each tree also has a {\em center of activity}, i.e., the vertex it is currently exploring. When it has explored all outgoing edges from its center of activity, say $v$, it backtracks to the parent of $v$. 

Let $C_r$ and $C_g$ denote the current centers of activity of $T_r$ and $T_g$, respectively; the levels of these two vertices will be denoted by $l(C_r)$ and $l(C_g)$. $C_r$ and $C_g$ are initialized to $r$ and $g$, respectively. By ``$T_r$ moves'' we mean the following. Assume $C_r = v$. Then $T_r$ will pick a previously unexplored edge out of $v$, say $(v, u)$, and if $u$ is still unvisited, it will move $C_r$ to $u$. If $u$ is already marked ``visited'', $T_r$ will seek another edge out of $v$. If all outgoing edges from $v$ are already explored, it will backtrack. 

Clearly, if we were growing only one tree, then due to the DDFS Requirement, it would find a single path all the way from its root to layer $l_0$; however, we are growing two trees in a coordinated manner, so that both of them are maximally deep. We next give the rules for this  coordination. We will adopt the (arbitrary) convention that $C_r$ will ``keep ahead of'' $C_g$ and $C_g$ will try to ``catch up''. Following this  convention, if $l(C_r) \geq l(C_g)$, then $C_r$ moves, and if $l(C_r) < l(C_g)$, then $C_g$ moves to catch up.  

{\bf When the two centers of activity meet:} 
Finally, we state the most novel and crucial aspect of the coordination, namely the steps to be taken if both centers of activity meet at a vertex, i.e., $C_r = C_g = v$, say. Observe that because edges can be long, i.e., not necessarily going to the next lower layer, either of the trees may have gotten to $v$ first. DDFS needs to determine if $v$ is the highest bottleneck, and if not, then which of the trees can find an alternative path at least as deep as $v$, so search may resume.  

We will adopt the convention that $v$ is first given to $T_g$, and $T_r$ will attempt to find an alternative path. For this purpose, it backtracks from $v$ and continues searching. If it succeeds in finding a path to a vertex which is at least as deep as $v$, DDFS resumes. However, if it backtracks all the way to $r$, then we will allocate $v$ to $T_r$ and let $T_g$ find an alternative path after backtracking from $v$. 

At this point, $T_r$ also needs to update a pointer called Barrier. The purpose of Barrier is to prevent $T_r$ from backtracking from a vertex more than once. At the start of DDFS, Barrier is initialized to $r$. Observe that at the current stage in the procedure, $T_r$ has backtracked from all its vertices in layers $l(r)$ to $l(v)$ and from now on, $T_r$ needs to confine itself to layers lower than $l(v)$. Therefore,  Barrier is updated to $v$. If in the future the two centers of activity meet again, say at $u$, and $T_r$ backtracks all the way to Barrier, then it will not backtrack any further. Moreover, it will update Barrier to $u$.

Next consider the case that after backtracking from $v$, $T_g$ finds an alternative path to a vertex at least as deep as $v$. Then DDFS resumes and it is $T_r$'s turn to move. On the other hand, if $T_g$ backtracks all the way to $g$ without finding another path as deep as $v$, then $v$ is declared the highest bottleneck. If so, DDFS terminates and does not search any vertices below $v$.  

If both centers of activity reach distinct vertices at layer $l_0$, say $r_0$ and $g_0$, then DDFS terminates in Case 2.

\begin{theorem}
\label{thm.DDFS}
DDFS accomplishes the objectives stated above in the required time.
\end{theorem}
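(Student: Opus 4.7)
The plan is to prove correctness and the running time of DDFS by identifying a small set of invariants that the algorithm maintains, and then deducing each of Cases 1 and 2, as well as the time bound, from them.

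I would first fix three invariants that hold at every point in the execution.
(I1) The red (resp.\ green) marked vertices together with their parent pointers form a valid DFS tree $T_r$ rooted at $r$ with exploration vertex $C_r$ (resp.\ $T_g$ rooted at $g$ with $C_g$), and the red and green vertex sets are disjoint. In particular, $T_r$ and $T_g$ are always vertex-disjoint.
(I2) Each proper ancestor of $C_r$ in $T_r$ either has unexplored outgoing edges or is the root $r$; the analogous statement holds for $T_g$. This is standard DFS bookkeeping.
(I3) Let $B$ denote the current Barrier. Every path in $H$ from $r$ to a vertex of layer $\leq l(B)$ passes through $B$.

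Next I would verify that (I1)--(I3) are preserved by each step. The DFS moves preserve (I1) and (I2) immediately, and the meeting rule preserves them because the color of the meeting vertex is reassigned atomically along with the corresponding parent pointer. For (I3), at initialization $B=r$ and the invariant is trivial. When $B$ is advanced from $B_{\mathrm{old}}$ to a meeting vertex $v$, this happens only after $T_r$ has backtracked all the way from $v$ to $B_{\mathrm{old}}$, during which $T_r$ has explored every outgoing edge from every red vertex it held in layers between $l(B_{\mathrm{old}})$ and $l(v)$, without ever reaching a new vertex of layer $\leq l(v)$ other than $v$. Combined with the inductive hypothesis that (I3) held for $B_{\mathrm{old}}$, and with the fact that any descent through a green vertex lies inside the region already explored by $T_g$, this yields (I3) for the new Barrier $v$.

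With these invariants, correctness in Case 1 is immediate: when DDFS declares $v$ the highest bottleneck, $B=v$, so by (I3) every $r$-to-$l_0$ path passes through $v$; the symmetric argument for $T_g$, which has just exhausted its own search without escaping below depth $l(v)$, shows every $g$-to-$l_0$ path passes through $v$; hence $v$ is a bottleneck. It is the highest one because the algorithm stops at the first such $v$ located, and layers strictly decrease between successive Barrier updates. The trees $T_r$ and $T_g$ at termination, together with the current red/green coloring, supply the spanning trees and partition of $V_b\setminus\{v\}$ required by the certificate. For Case 2, if both centers reach distinct vertices $r_0,g_0$ in layer $l_0$, the paths from $r$ to $r_0$ in $T_r$ and from $g$ to $g_0$ in $T_g$ are vertex-disjoint by (I1).

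For the running time, I would use a standard DFS charging argument: each edge is considered only a constant number of times over the course of the algorithm, accounting for a possible color change of its head, and the Barrier mechanism guarantees that $T_r$ never backtracks from the same vertex twice. Summing over edges internal to $V_b$ in Case 1 and to $E_p$ in Case 2 gives the claimed bounds. The main obstacle is the verification of (I3) across meeting events: in particular, when a color swap transfers a vertex from $T_r$ to $T_g$, one must argue that this reassignment does not open a fresh descent from $r$ that $T_r$ should have tried but did not. This is precisely what the Barrier-based inductive argument above is designed to resolve, and it constitutes the delicate core of the proof.
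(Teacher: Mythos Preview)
Your plan is correct and coincides with the paper's argument, though the paper's own proof is far terser (four sentences): it simply asserts that each tree contains monochromatic paths from its root to $b$ and to each of its vertices (hence the certificate), that each edge of $H$ is explored by at most one tree and only once, and that the Barrier bounds repeated backtracking. Your invariants (I1)--(I3) make explicit precisely what the paper leaves to the reader; in particular, your (I3) is what is actually needed to certify that the declared vertex is a bottleneck, a point the paper does not spell out.

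Two small tightenings for a final write-up. First, only $T_r$ carries a Barrier, so you should say separately why $T_g$ never backtracks from a vertex twice: ordinary DFS discipline suffices, since the only vertex whose color is ever reassigned is the current meeting vertex, and $T_g$ is asked to backtrack from it at most once. Second, your ``highest bottleneck'' sentence can be made direct: if some bottleneck $v'$ had $l(v')>l(v)$, then both the red $r$--$v$ tree path and the green $g$--$v$ tree path, each extendable to layer $l_0$ by the DDFS Requirement, would have to pass through $v'$; since layers strictly decrease along paths this puts $v'$ strictly between the root and $v$ in both trees, contradicting (I1).
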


\begin{proof}
In Case 1, tree $T_r$ contains paths consisting of red colored vertices from $r$ to $b$ and from $r$  to each red vertex. A similar claim holds about tree $T_g$. In Case 2, there is a path consisting of red colored vertices from $r$ to $r_0$ in tree $T_r$. An analogous statement holds about $T_g$. Therefore, the DDFS Certificate holds. 

	Finally, it is easy to see that each edge of $H$ is explored by at most one tree and if so, only once. Because of the Barrier, each tree backtracks from each vertex at most once. The theorem follows.
\end{proof}

\section{Basic Definitions}
\label{sec.basic-definitions}

A matching $M$ in an undirected graph $G = (V, E)$ is a set of edges no two of which meet at a vertex. Our problem is to find a matching of 
maximum cardinality in the given graph. All definitions henceforth are w.r.t. a fixed matching $M$ in $G$. 
Edges in $M$ will be said to be {\em matched} and those in $E-M$ will be said to be {\em unmatched}. Vertex $v$ will be said to be 
matched if it has a matched edge incident at it and unmatched otherwise.

An {\em alternating path} is a simple path whose edges alternate between $M$ and $E-M$, i.e., 
matched and unmatched. An alternating path that starts and ends at unmatched vertices is called an {\em augmenting path}. Clearly the number
of unmatched edges on such a path exceeds the number of matched edges on it by one\footnote{Observe that if $M = \emptyset$, then any edge is 
an augmenting path, of length one.}.
Its significance lies in that flipping matched and
unmatched edges on such a path leads to a valid matching of one higher cardinality. Edmonds' matching algorithm operates by iteratively finding
an augmenting path w.r.t. the current matching, which initially is assumed to be empty, and augmenting the matching. When there are no 
more augmenting paths w.r.t. the current matching, it can be shown to be maximum.

The MV algorithm finds augmenting paths in phases as proposed in \cite{Karp,Karzanov}. In each {\em phase}, it finds a maximal set of disjoint 
minimum length augmenting paths w.r.t. the current matching and it augments along all paths. \cite{Karp,Karzanov} show that only $O(\sqrt{n})$ 
such phases suffice for finding a maximum matching in general graphs. The remaining task is designing an efficient algorithm for a phase.


\notation{(Length of Minimum Length Augmenting Path)}
Throughout, $l_m$ will denote the length of a minimum length augmenting path in $G$; if $G$ has no augmenting paths, we will assume that $l_m = \infty$.


\definition{(Evenlevel and oddlevel of vertices)}
The evenlevel (oddlevel) of a vertex $v$, denoted $\e(v)$ ($\o(v)$), is defined to be the length of a
minimum even (odd) length alternating path from an unmatched vertex to $v$; moreover, each such path will be called
an $\e(v)$ ($\o(v)$) path. If there is no such path, $\e(v)$ ($\o(v)$) is defined to be $\infty$. 

We will typically denote an unmatched vertex by $f$. Its evenlevel is zero and its oddlevel is the length of the shortest augmenting path starting at $f$; if no augmenting path starts at $f$, $\o(f) = \infty$. The length of a minimum length augmenting paths w.r.t. $M$ is the smallest oddlevel of an unmatched vertex. In all the figures, matched edges are drawn dotted, unmatched edges solid, and unmatched vertices are drawn with a small circle.

\begin{figure}[ht]
\begin{minipage}[b]{0.5\linewidth}
\centering
\includegraphics[width=\textwidth]{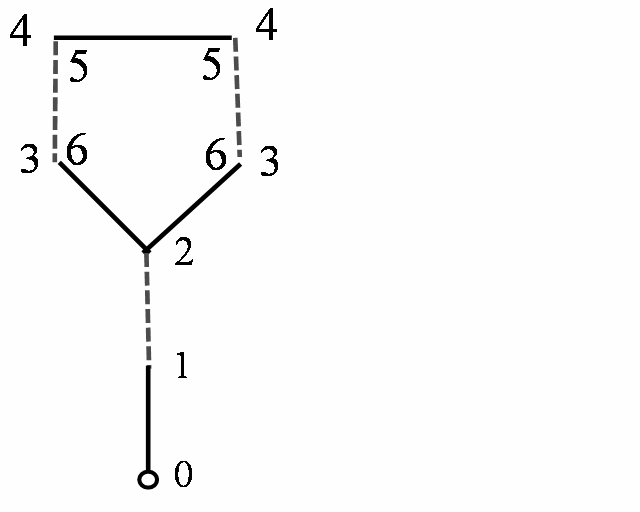}
\caption{
Evenlevels and oddlevels of vertices are indicated; missing levels are $\infty$.}
\label{fig.vten}
\end{minipage}
\hspace{0.5cm}
\begin{minipage}[b]{0.5\linewidth}
\centering
\includegraphics[width=\textwidth]{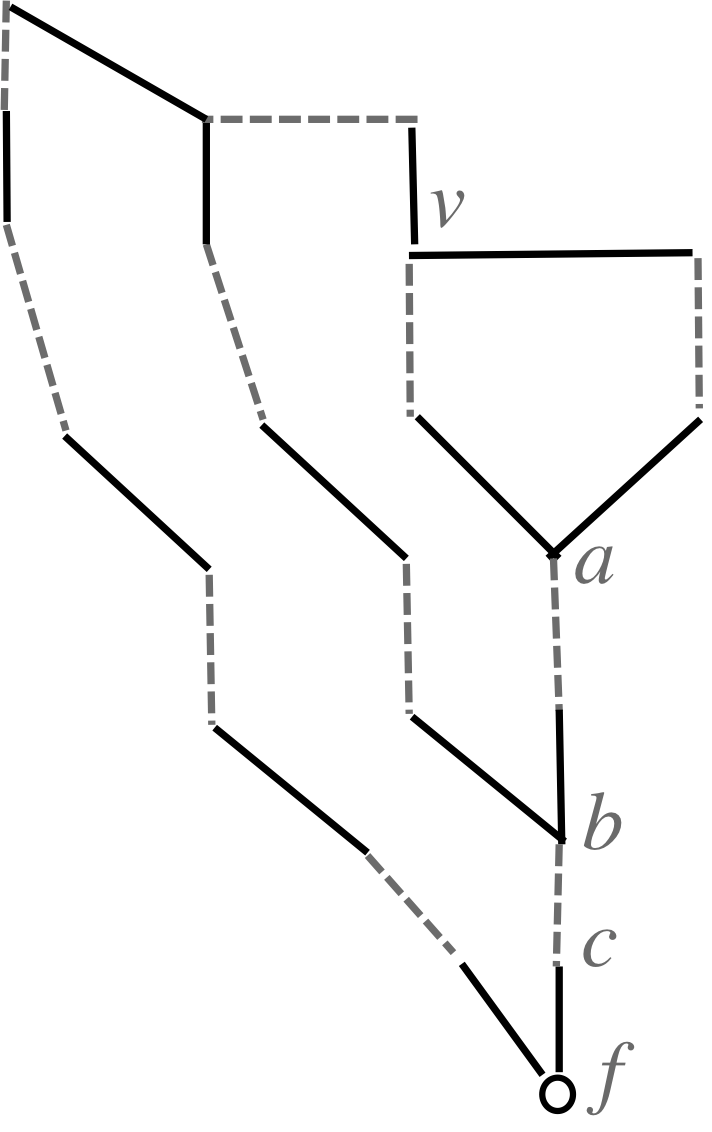}
\caption{Vertex $v$ is not BFS-honest on $\o(a)$ and $\e(c)$ paths.}
\label{fig.BFSH}
\end{minipage}
\end{figure}

\definition{(Maxlevel and minlevel of vertices)}
For a vertex $v$ such that at least one of $\e(v)$ and $\o(v)$ is finite, $\mx(v)$ ($\mn(v)$) is defined to be the bigger (smaller) of the two.

\definition{(Outer and inner vertices)}
A vertex $v$ with finite minlevel is said to be {\em outer} if $\e(v) < \o(v)$ and {\em inner} otherwise.

Let $p$ is an alternating path from unmatched vertex $f$ to $v$ and let $u$ lie on $p$. Then 
$p[f \ \mbox{to} \ u$ will denote the part of $p$ from $f$ to $u$. Similarly 
$p[f \ \mbox{to} \ u)$ denotes the part of $p$ from $f$ to the vertex just before $u$, etc.

Breadth first search (BFS) is the natural way of finding minimum length paths in an undirected,  unweighted graph. For finding minimum length {\em augmenting paths} in bipartite graphs, a slight extension to an {\em alternating breadth first search} suffices, e.g., see Section \ref{sec.alg} (for a complete description, see Section 2.1 in \cite{va.matching}). The property of bipartite graphs due to which this method succeeds is captured in the next definition. It implies that if $p$ is a minimum length alternating path from $f$ to $u$, then appending to it the shortest path from $u$ to $v$, respecting the alternating nature of the resulting path, suffice to find a minimum length alternating path from $f$ to $v$.

\definition{(Breadth first search honesty)} 
\label{def.Honesty}
If $p$ is a minimum length alternating path from $f$ to $v$ and $u$ lies on $p$ then $p[f \ \mbox{to} \ u]$ is a minimum length alternating path from $f$ to $u$. 

This elementary property does not hold in non-bipartite graphs, e.g., in Figure \ref{fig.BFSH}, $\o(v) = 7$. However, on the $\o(a)$, $\o(b)$ paths, $v$ occurs at a length of 9 and 11, respectively, i.e., $v$ is not BFS-honest w.r.t. these paths. This implies that we need to find longer and longer oddlevel paths to $v$ in order to find minimum length alternating paths to other vertices, namely $a$ and $b$ in this case. Informally, finding short paths is easy and long paths is hard e.g. Hamiltonian path. Indeed, the problem being attacked may seem intractable at first sight.


\begin{figure}[ht]
\begin{minipage}[b]{0.5\linewidth}
\centering
\includegraphics[width=\textwidth]{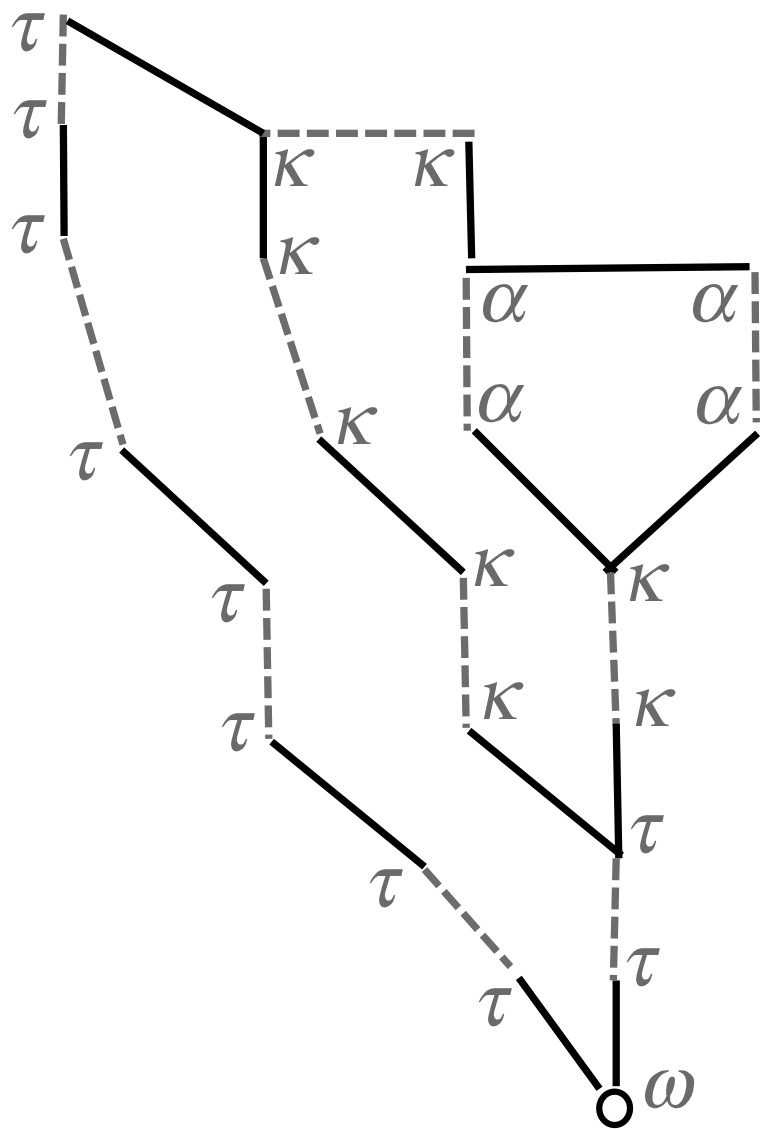}
\caption{The tenacity of vertices is indicated; here $\alpha = 13, \ \kappa = 15$, $\tau = 17$ and $\omega = \infty$.}
\label{fig.verten}
\end{minipage}
\hspace{0.5cm}
\begin{minipage}[b]{0.5\linewidth}
\centering
\includegraphics[width=\textwidth]{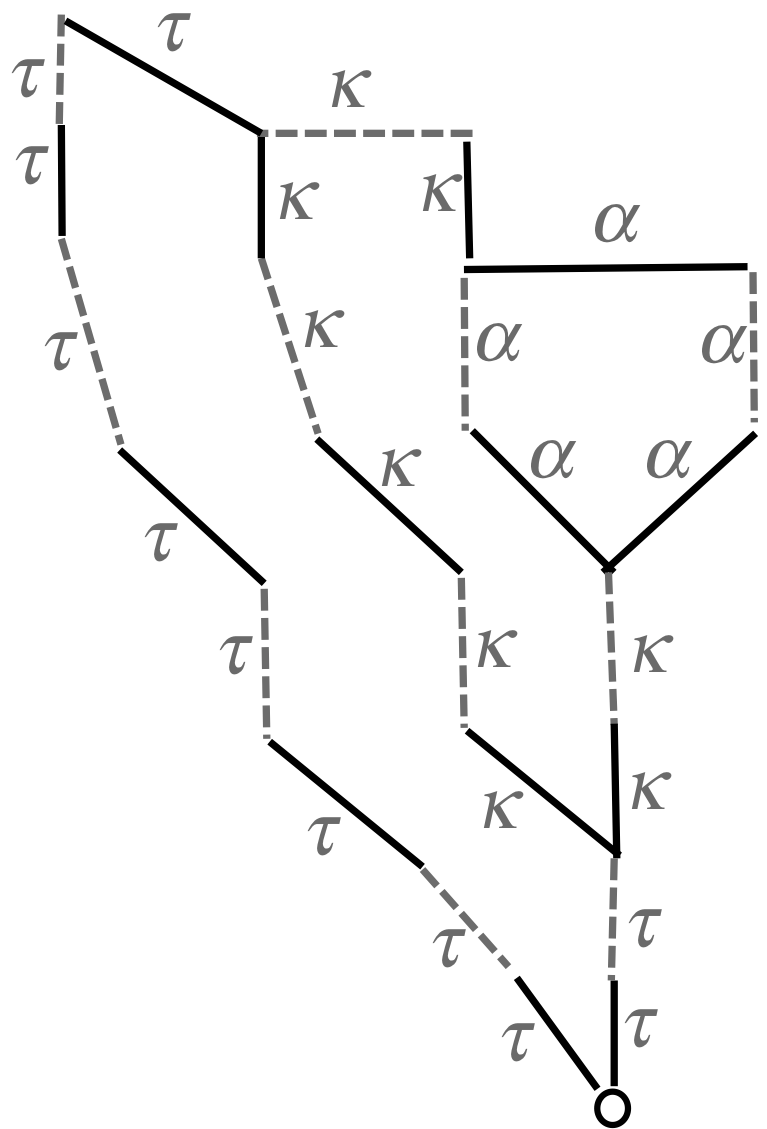}
\caption{The tenacity of each edge is indicated; here $\alpha = 13, \ \kappa = 15$ and $\tau = 17$.}
\label{fig.edgeten}
\end{minipage}
\end{figure}


\definition{(Tenacity of vertices and edges)}
\label{ref.tenacity}
Define the tenacity of vertex $v$, $\t(v) = \e(v) + \o(v)$.
If $(u, v)$ is an unmatched edge, then $\t(u, v) = \e(u) + \e(v) + 1$, and if it is matched, $\t(u, v) = \o(u) + \o(v) + 1$.

\notation{(Minimum tenacity of a vertex in $G$)}
Throughout, $t_m$ will denote the tenacity of a minimum tenacity vertex in $G$. 

The examples in Figures \ref{fig.vten}, \ref{fig.verten} and \ref{fig.edgeten} illustrate these notions. The notion of tenacity is central to the structural facts that follow. 

\definition{(Predecessor, prop and bridge)}
Consider a $\mn(v)$ path and let $(u, v)$ be the last edge on it; clearly, $(u, v)$ is matched if $v$ is outer and unmatched otherwise.
In either case, we will say that $u$ is a predecessor of $v$ and that edge $(u, v)$ is a prop.
An edge that is not a prop will be defined to be a bridge.

In Figure \ref{fig.BFSH}, the two horizontal edges and the oblique unmatched edge at the top 
are bridges and the rest of the edges of this graph are props. In Figure \ref{fig.vb}, $(w, w')$ and $(v, v')$ are bridges, and
in Figure \ref{fig.anamoly}, $(w, w')$ and $(u, v)$ are bridges; the rest of the edges in these two graphs are props.


\begin{figure}[ht]
\begin{minipage}[b]{0.4\linewidth}
\centering
\includegraphics[width=\textwidth]{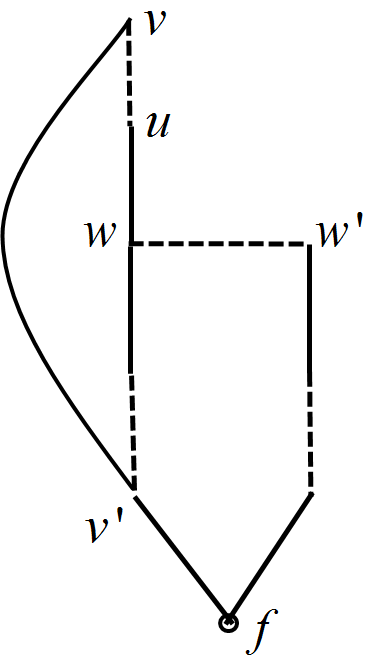}
\caption{Edges $(w, w')$ and $(v, v')$ are bridges.}
\label{fig.vb}
\end{minipage}
\hspace{3.2cm}
\begin{minipage}[b]{0.4\linewidth}
\centering
\includegraphics[width=\textwidth]{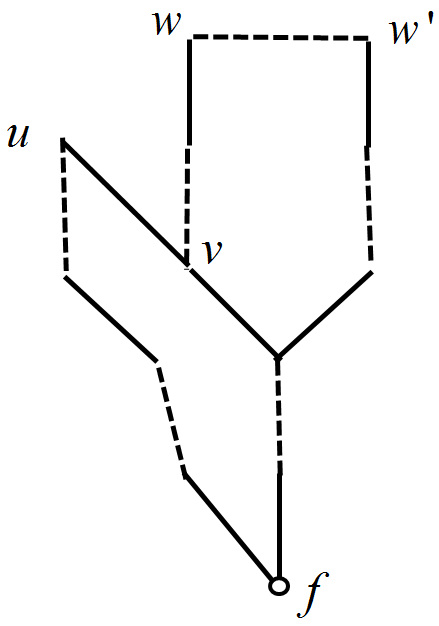}
\caption{Edges $(w, w')$ and $(u, v)$ are bridges.}
\label{fig.anamoly}
\end{minipage}
\end{figure}


\definition{(The support of a bridge)}
Let $(u, v)$ be a bridge of tenacity $t \leq l_m$. Then, its support is defined to be $\{w ~|~ \t(w) = t \ \mbox{and} \ \exists \ \mbox{a} \
\mx(w) \ \mbox{path containing} \ (u, v) \}$.

In the graph of Figures \ref{fig.BFSH}, \ref{fig.verten} and \ref{fig.edgeten}, the supports of the bridges of tenacity $\alpha, \kappa$
and $\tau$ are the set of vertices of tenacity $\alpha, \kappa$ and $\tau$, respectively. In the graph of Figure \ref{fig.vb},
edges $(w, w')$ and $(v, v')$ are bridges of tenacity 7 and 13, respectively. Unmatched vertex $f$ is not in the support of either bridge.
The support of bridge $(v, v')$ is $\{u, v\}$ and the support of bridge $(w, w')$ is all the remaining vertices other than $f$.

\section{The Algorithm}
\label{sec.alg}

The MV algorithm starts with the empty matching and executes iterations which we will call {\em phases} until there are no augmenting paths w.r.t. the current matching, i.e., the current matching is maximum. Each phase finds a maximal set of disjoint minimum length augmenting paths paths and augments the matching along these paths. Each phase is itself iterative and in each iteration, the algorithm calls the procedures MIN and MAX, which find minlevels and maxlevels of vertices, respectively.

\subsection{Procedures MIN and MAX}
\label{sec.sync}

At the beginning of a phase, all unmatched vertices are assigned a minlevel of 0, the rest are assigned a temporary minlevel of $\infty$. No vertices are assigned maxlevels at this stage. The algorithm for a phase is organized by search levels; in each search level, MIN executes one step of alternating BFS and is followed by MAX.  

If $i$ is even (odd), MIN searches from all vertices, $u$, having an evenlevel (oddlevel) of $i$ along incident unmatched (matched) edges, say $(u, v)$. If edge $(u, v)$ has not been scanned before, MIN will determine if it is a prop or a bridge as follows.  If $v$ has already been assigned a minlevel of at most $i$, then $(u, v)$ is a bridge. Otherwise, $v$ is assigned a minlevel of $i + 1$, $u$ is declared a predecessor of $v$ and edge $(u, v)$ is declared a prop. Note that if $i$ is odd, $v$ will have only one predecessor -- its matched neighbor, and if $i$ is even, $v$ will have one or more predecessors. Once an edge is identified as a bridge, if MIN is able to ascertain its tenacity, say $t$, then the edge is inserted in the {\em list of bridges of tenacity $t$}, $Br(t)$. MIN is able to ascertain the tenacity of a bridge in all but one case; in the last case, MAX finds the tenacity of this bridge as  described below.

After MIN is done, procedure MAX uses the procedure DDFS to find all vertices, $v$, having $\t(v) = 2i+1$ and assigns these vertices their maxlevels. Their minlevels are at most $i$ and are already known and hence $\mx(v) =  2i+1 - \mn(v)$ can be computed. Let $l_m$ be the length of a minimum length augmenting path in a phase. Then during search level $j_m$, where $l_m = 2 j_m + 1$, a maximal set of such paths is found. See Algorithm \ref{alg} for a summary of the main steps.

The MV algorithm runs DDFS not on a directed, layered graph $H$, as described in Section \ref{sec.DDFS}, but on the original graph $G$. The precise mapping from $G$ to $H$ is given in Section \ref{sec.H}. Using this mapping, one can trace back the steps taken in $H$ onto $G$, thereby obtaining an algorithm that works entirely on $G$, without ever constructing $H$. Indeed, that is the right way to program the algorithm. However, the mapping gives a simpler and clearer conceptual picture.



\bigskip

\noindent

\fbox{
\begin{algorithm}{\label{alg} \ \ \ \ \ \ \ \ \ \ \ \   At search level $i$:}

\bigskip

\step
\label{step1}
{\bf MIN:}  \\

{\bf For} each level $i$ vertex, $u$, search along appropriate parity edges incident at $u$. 

\begin{description}
\item

{\bf For} each such edge $(u, v)$, if $(u, v)$ has not been scanned before {\bf then}

\begin{description}
\item
{\bf If} $\mn(v) \geq i+1$ {\bf then} \\

\begin{description}
\item
$\mn(v) \la i+1$

\item
Insert $u$ in the list of predecessors of $v$.

\item
Declare edge $(u, v)$ a prop.

\end{description}

\item
{\bf Else} declare $(u, v)$ a bridge, and if $\t(u, v)$ is known, \\
insert $(u, v)$ in $Br(\t(u, v))$. 

\end{description}

\item
{\bf End}

\end{description}

{\bf End}

\bigskip

\step
\label{step2}
{\bf MAX:}  \\

{\bf For} each edge in $Br(2i+1)$:

\begin{description}
\item
Find its support using DDFS. 

\item
{\bf For} each vertex $v$ in the support: \\  


\begin{description}

\item
$\mx(v) \la 2i+1 - \mn(v)$

\item
{\bf If} $v$ is an inner vertex, {\bf then}

\begin{description}
\item
{\bf For} each edge $e$ incident at $v$ which is not prop, if its tenacity is known, \\ 
insert $e$ in $Br(\t(e))$.

\end{description}

{\bf End}

\end{description}

{\bf End}

\end{description}

{\bf End}



\end{algorithm}
}

\bigskip


We next point out some salient features of MIN and MAX via the graph in Figure \ref{fig.anamoly}. At search level 4, MIN searches from vertex $u$ along edge $(u, v)$ and realizes that $v$ already has a minlevel of 3 assigned to it. Moreover,
$u$ got its minlevel from its matched neighbor. Therefore, MIN correctly identifies edge $(u, v)$ to be a bridge. However, it is not able to 
ascertain $\t(u, v)$ since $\e(v)$ is not known at this time. At search level 5, after conducting DDFS on bridge $(w, w')$ (of tenacity 11),
MAX will assign $\mx(v) = 8$, which is also $\e(v)$. Therefore, at that time, $\t(u, v)$ can be ascertained to be 13, and edge $(u, v)$ is inserted in
$Br(13)$.

To summarize, this case happens if $(u, v)$ is an unmatched bridge such that the evenlevel of one of the endpoints, say $v$, has not been determined at the point when MIN realizes that $(u, v)$ is a bridge; if so $v$ is an inner vertex. The evenlevel of $v$ will be determined by MAX at search level $(\t(v) -1)/2$ and at this point, $\t(u, v)$ is ascertained and the edge is inserted in $Br(\t(u, v))$. For each bridge in the first five figures, its tenacity gets ascertained by MIN (including the bridge $(v, v')$ in Figure \ref{fig.vb}).

An important point to note in Figure \ref{fig.anamoly}, is that $\t(v) < \t(u, v)$. This ensures that $\mx(v)$ is known at search level $(\t(v) -1)/2$, i.e., before the search level at which bridge $(u, v)$ needs to be processed by MAX, namely search level $(\t(u, v) -1)/2$. 

Task 2 in Theorem \ref{thm.levels} proves that by the end of execution of procedure MIN at search level $i$, the algorithm would have identified every bridge of tenacity $2i + 1$. At this point, procedure MAX gets executed and it uses DDFS to find the support of each of these bridges. This yields all vertices of tenacity $2i+1$, and their maxlevels are ascertained. If such a vertex, $v$, is inner and has an incident unmatched say $(u, v)$ which is not a prop, then its tenacity is ascertained and it is inserted in $Br(\t(u, v))$.

\begin{figure}[ht]
\begin{minipage}[b]{0.5\linewidth}
\centering
\includegraphics[width=\textwidth]{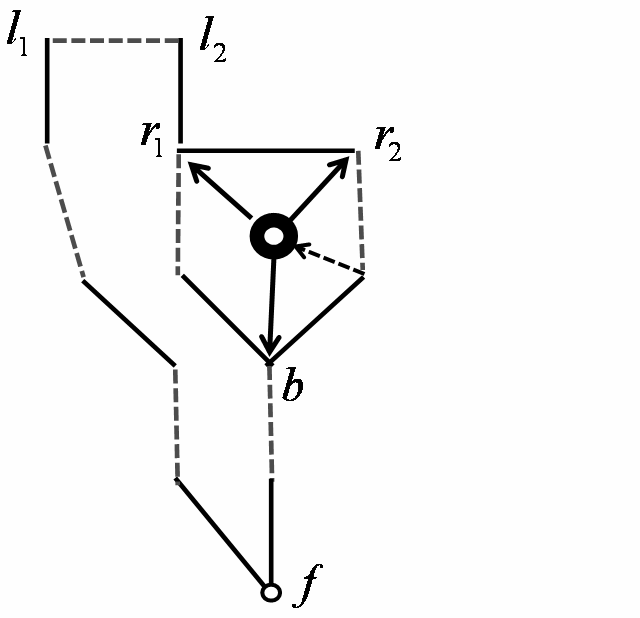}
\caption{A new petal-node is created after DDFS on bridge $(r_1, r_2)$.}
\label{fig.DDFSG}
\end{minipage}
\hspace{0.5cm}
\begin{minipage}[b]{0.5\linewidth}
\centering
\includegraphics[width=\textwidth]{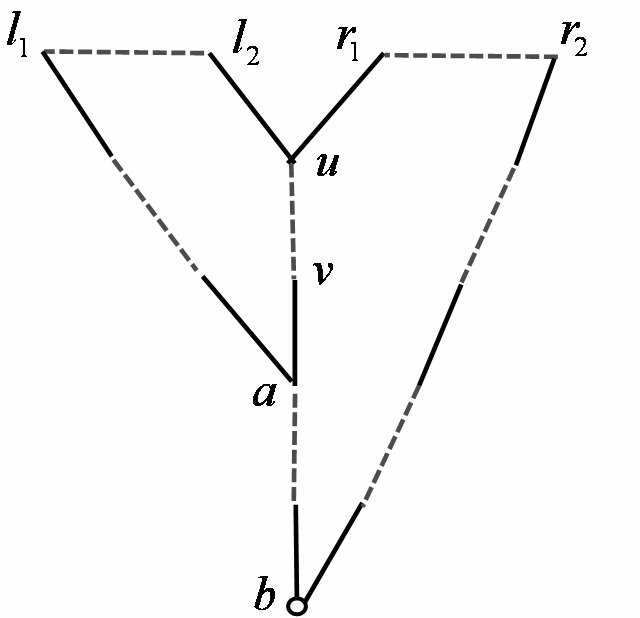}
\caption{Vertices $u$ and $v$ are in the support of both bridges of tenacity of 11.}
\label{fig.disjoint}
\end{minipage}
\end{figure}

\subsection{The Notions of Petal and Bud}
\label{sec.Petal-Bud}

Assume that DDFS is called with a bridge $(u, v)$ of tenacity $t$ and it terminates in Case 1. Then  the highest bottleneck found is called {\em bud}. The vertices of tenacity $t$ encountered by DDFS, which must lie in the support of $(u, v)$, form a new {\em petal}, which consists of all vertices in the support of $(u, v)$ minus the supports of all bridges processed thus far in this search level (which will all be of tenacity $t$). Clearly a vertex is included in at most one petal. 

In the graph of Figure \ref{fig.DDFSG}, MAX will call DDFS with the bridge $(r_1, r_2)$, which is of tenacity 9, at search level 4. The roots of the two DFSs are $r_1$ and $r_2$, and in this graph, a step of either DFS tree is to move to the predecessor of the current center of activity. Clearly, DDFS will terminate in Case 1 with $b$ as the highest bottleneck. The four vertices which constitute the support of bridge $(r_1, r_2)$ form the new petal and $b$ is the new bud found. Observe that $b$ does not belong to this petal.

To form a new petal, the algorithm executes the following steps: It creates a new node, called {\em petal-node}; this has the shape of a doughnut in Figure \ref{fig.DDFSG}. All vertices of the new petal point\footnote{To avoid cluttering Figure \ref{fig.DDFSG}, only one vertex is pointing to the petal-node.} to the petal-node; $b$ is not in the petal and does not point to the petal-node. The new petal-node points to the two endpoints of its bridge, $r_1$ and $r_2$, and to its bud, $b$. These pointers will enable the algorithm to:
\begin{enumerate}
	\item skip over petals in future DDFSs, and
	\item efficiently find an alternating path through a petal in case it goes through the petal. 
\end{enumerate}

\begin{figure}[ht]
\begin{minipage}[b]{0.5\linewidth}
\centering
\includegraphics[width=\textwidth]{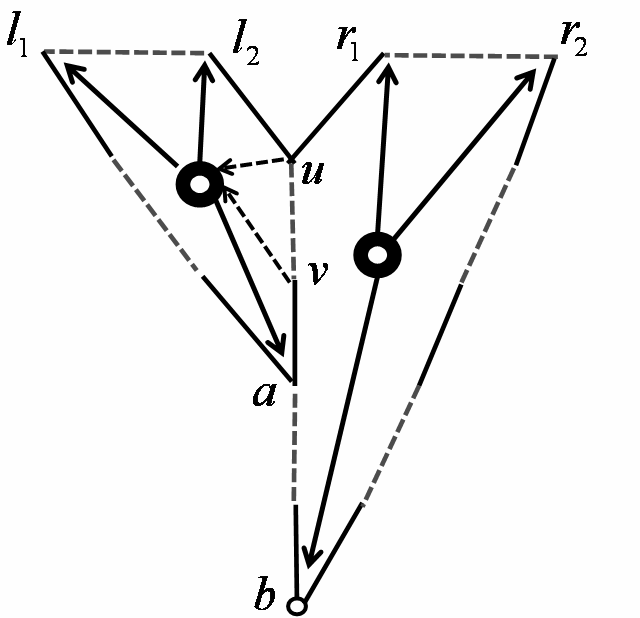}
\caption{DDFS is performed on the left bridge first, then right.}
\label{fig.left}
\end{minipage}
\hspace{0.5cm}
\begin{minipage}[b]{0.5\linewidth}
\centering
\includegraphics[width=\textwidth]{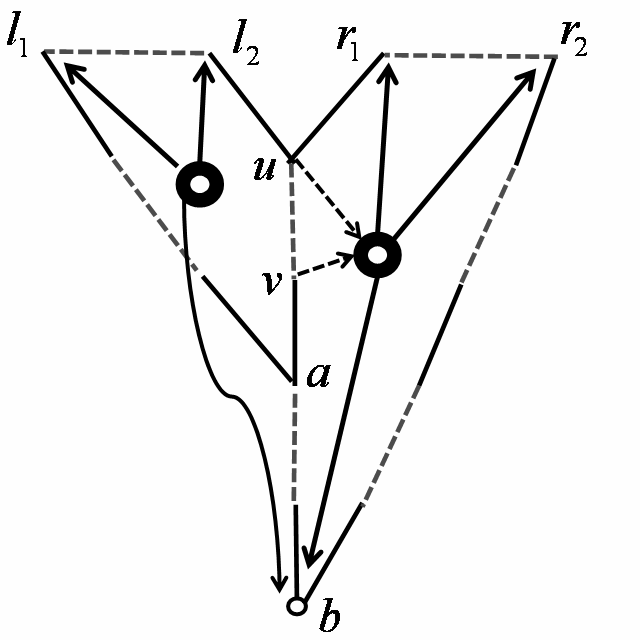}
\caption{DDFS is performed on the right bridge first, then left.}
\label{fig.right}
\end{minipage}
\end{figure}

\definition{(The bud of a vertex)}
If vertex $v$ is in a petal and the bud of this petal is $b$ then $\bd(v) = b$, and if $v$ is not in a petal, then $\bd(v) = v$. The function $\bds(v)$ is defined recursively as follows: If $\bd(v) = v$ then $\bds(v) = v$, else $\bds(v) = \bds(\bd(v))$. The bud of a petal is always an outer vertex. 

The notions of petal and bud are intimately related to the notions of blossom and base. Whereas the first pair is algorithmic --- the exact petals and buds found depend on the manner in which the algorithm resolves choices --- the second pair is purely graph-theoretic. The relationship between these notions is established in Lemma \ref{lem.all}. Here we simply note that a blossom is a union of petals and the base of a vertex $v$ will be $\bds(v)$ at the end of MAX in search level $(t-1)/2$ where $\t(v) = t$.

\subsection{The Mapping from Graph $G$ to $H$}
\label{sec.H}
 
The graph $H$ is defined each time DDFS is called. It is a function of the bridge which triggers the current DDFS and the petals which have been found at all search levels so far. The vertices of $H$ correspond to a subset of the vertices of $G$ as defined below. Assume that $v_H$ belongs to $H$ and corresponds to $v$ in $G$. Then the level of $v_H$ is defined to be $\mn(v)$. 

Assume DDFS is called with bridge $(r, g)$. Then $H$ has the two vertices $\bds(r)$ and $\bds(g)$\footnote{It is possible that $\bds(r) = \bds(g)$. This happens if the bridge $(r, g)$ has non-empty support and therefore a new petal is not formed. In this case DDFS simply aborts. An example of this phenomenon is mentioned in Section \ref{sec.examples-DDFS} in the graph of Figure \ref{fig.ten19}.}. The rest of $H$ is recursively defined as follows. If $\mn(v) = 0$ then $v$ has no predecessors in $G$, $l(v_H) = 0$ and $v_H$ has no outgoing edges. If $\mn(v) > 0$ then corresponding to each predecessor $u$ of $v$ in $G$, $H$ has the vertex $\bds(u)$ and the directed edge $(v, \bds(u))$. It is easy to confirm that $H$ satisfies the DDFS Requirement.

\subsection{Examples to Illustrate DDFS}
\label{sec.examples-DDFS}

Suppose DDFS is called with bridge $(l_1, l_2)$ in the graph of Figure \ref{fig.DDFSG}. The tenacity of this bridge is 11 and DDFS will be performed on it at search level 5. Notice that in this case, $H$ will have the edge $(l_2, b)$ and not $(l_2, r_1)$. DDFS will end in Case 1 with bottleneck $f$. The new petal is precisely the support of bridge $(l_1, l_2)$ and consists of the eight vertices of tenacity 11 in Figure \ref{fig.DDFSG}, which includes $b$. Once again, a new petal-node is created and these eight vertices point to it. In addition, the petal-node points to $l_1, l_2$ and to $f$.

Next consider the graph of Figure \ref{fig.disjoint} which has two bridges of tenacity 11, $(l_1, l_2)$ and $(r_1, r_2)$. Observe that vertices $u$ and $v$ are in the support of both these bridges. Hence, the support of bridges need not be disjoint. MAX will perform DDFS on these two bridges in arbitrary order at search level 5. 

Figure \ref{fig.left} shows the result of performing DDFS on $(l_1, l_2)$ before $(r_1, r_2)$. The first DDFS will end with bottleneck $a$. The new petal is precisely the support of $(l_1, l_2)$, consisting of six vertices of tenacity 11, including $u$ and $v$. Observe that when the second DDFS is performed, on bridge $(r_1, r_2)$, the edge out of $r_1$ in $H$ is $(r_1, a)$ and not $(r_1, u)$. This DDFS will end with bottleneck $b$ and the new petal is precisely the difference of supports of $(r_1, r_2)$ and $(l_1, l_2)$, i.e., the remaining eight vertices of tenacity 11, including $a$. 

Figure \ref{fig.right} shows the result of performing DDFS on $(r_1, r_2)$ before $(l_1, l_2)$. The first petal is the support of $(r_1, r_2)$, i.e., 10 vertices of tenacity 11, including $a$, $u$ and $v$. The second petal is the difference of supports of $(l_1, l_2)$ and $(r_1, r_2)$, i.e., 4 vertices of tenacity 11. 

In the graph of Figure \ref{fig.2paths}, DDFS called with bridge $(u, v)$ ends in Case 2, i.e., it finds two disjoint paths, indicating the presence of an augmenting path.

\begin{figure}[ht]
\begin{minipage}[b]{0.5\linewidth}
\centering
\includegraphics[width=\textwidth]{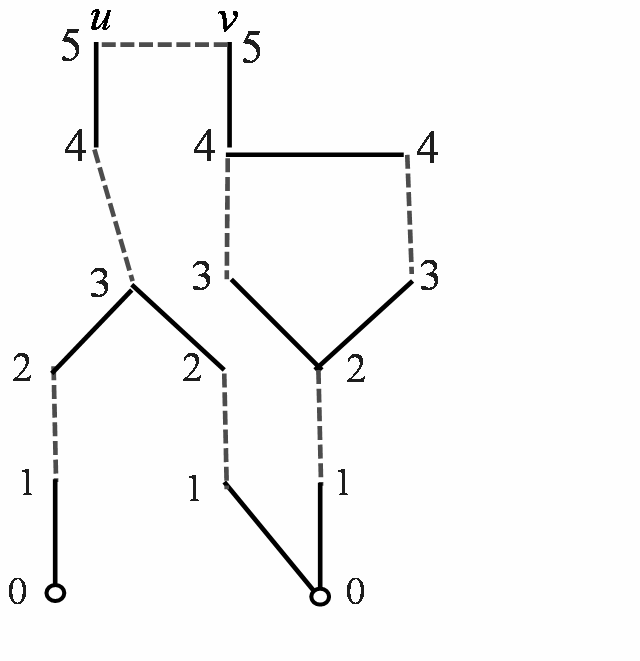}
\caption{DDFS on the bridge of tenacity 11 ends with the two unmatched vertices.}
\label{fig.2paths}
\end{minipage}
\hspace{0.5cm}
\begin{minipage}[b]{0.5\linewidth}
\centering
\includegraphics[width=\textwidth]{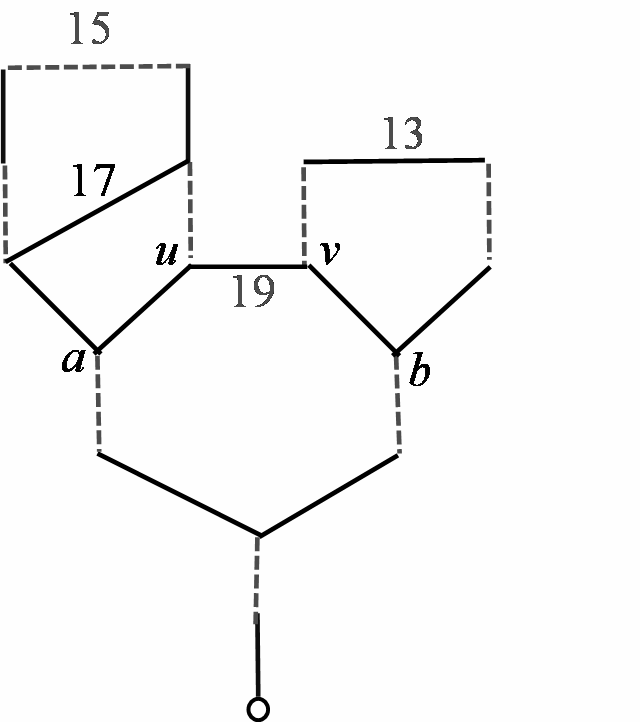}
\caption{DDFS performed on bridge $(u, v)$ starts the two DFSs at $a$ and $b$, respectively.}
\label{fig.ten19}
\end{minipage}
\end{figure}

In Figure \ref{fig.ten19} consider the situation when DDFS is called at search level 9, with the bridge $(u, v)$, which is of tenacity 19. At that point in the algorithm, the bridges of tenacity 15 and 13 would already be processed and $u$ and $v$ will already be in petals. Therefore, the roots of the two DFSs in $H$ will be $\bds(u) = a$ and $\bds(v) = b$, and not  $u$ and $v$.

All bridges considered so far had non-empty supports; however, this will not be the case in a typical graph, e.g., consider the edge of tenacity 17 in Figure \ref{fig.ten19}. Clearly the support of this bridge is $\emptyset$. DDFS will discover this right away since the $\bds$ of both endpoints of this bridge is $a$. Note that DDFS needs to be called with all bridges, even those with empty support, since this information is not available a priori.


\begin{figure}[h]
\begin{center}
\includegraphics[scale = 0.4]{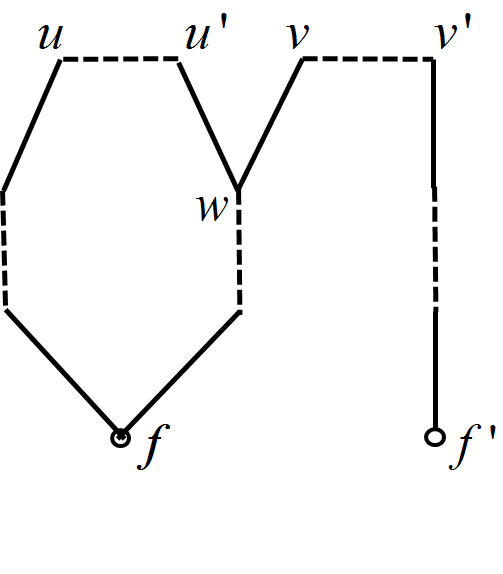}
\caption{$\t(u, u') = \t(v, v') = 7 = l_m$.}
\label{fig.two-bridges}
\end{center}
\end{figure}


\subsection{Finding the Augmenting Paths}
\label{sec.finding}

 MAX will find augmenting paths during search level $j_m$, where $l_m = 2 j_m + 1$ and $l_m$ is the length of a minimum length augmenting path in the current phase. However, not every bridge of tenacity $l_m$ leads to an augmenting path, e.g., in Figure \ref{fig.two-bridges}, suppose DDFS is called with bridge $(u, u')$ before bridge $(v, v')$. The first DDFS ends in Case 1, with bottleneck $f$. The second DDFS proceeds as follows. The DFS tree rooted at $v$ goes to $w$, since it is a predecessor of $v$. Since $w$ is in a petal, DFS must jump to $\bds(w) = f$. The second DFS tree, rooted at $v'$, follows predecessors and eventually reaches $f'$. Hence this DDFS terminates in Case 2, since each tree has found an unmatched vertex at level 0. This indicates the presence of an minimum length augmenting from $f$ to $f'$. Such a path  will be found using the procedure given in Section \ref{sec.one}.


\begin{figure}[h]
\begin{center}
\includegraphics[scale = 0.4]{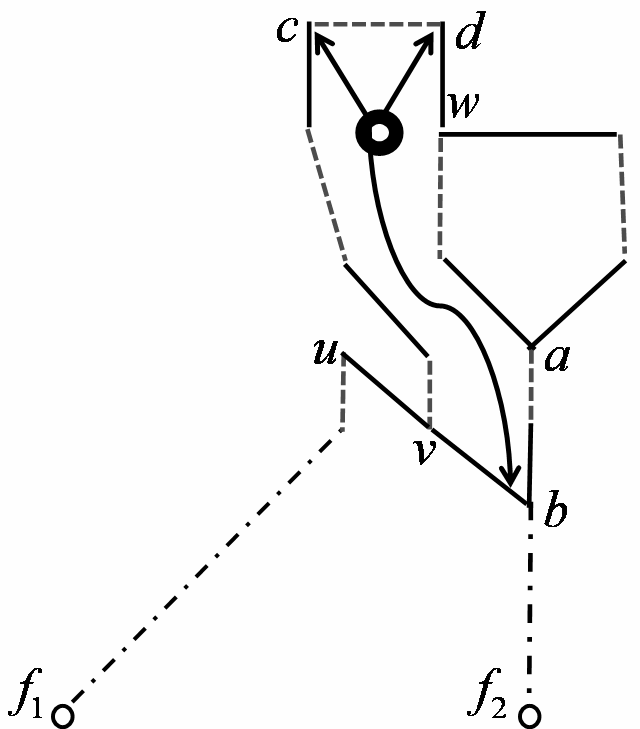}
\caption{Constructing a minimum length augmenting path between unmatched vertices $f_1$ and $f_2$.}
\label{fig.findpath}
\end{center}
\end{figure}


\subsubsection{Finding one augmenting path}
\label{sec.one}

In Figure \ref{fig.findpath}, $\mn(u) > \mn(v)$ and therefore edge $(u, v)$ is a bridge. When DDFS is performed on this bridge, assume that the red DFS trees has root $u$. Since $v$ is already in a petal with $\bds(v) = b$, the green DFS tree will have root $b$. The two trees will simply follow predecessors and will terminate at $f_1$ and $f_2$, respectively.  

A DFS from $u$ in the red tree will yield a path from $u$ to $f_1$, say $p_1$. Since $v$ is in a petal with $\bds(v) = b$, the algorithm needs to find an $\e(b; v)$ path, say $p_2$, in the petal of $v$, and a path, say $p_3$, from $b$ to $f_2$ in the green tree of the DDFS performed on bridge $(c, d)$. Then, the complete augmenting path from $f_1$ to $f_2$ will be $p_1^{-1} \bullet (u, v) \bullet p_2^{-1} \bullet p_3$. Clearly, $p_1$ and $p_3$ are easy to find. 

We next describe how to find $p_2$. The algorithm observes that $\e(v) = \mx(v)$ and therefore the $\e(b; v)$ path must use the bridge of the petal containing $v$. Using the petal node, the algorithm finds the endpoints of this bridge, namely $c$ and $d$. It notices that $c$ and $v$ have the same color, say red. Therefore, it looks for a path from $c$ to $v$ in the red tree and a path from $d$ to $b$ in the green tree. For finding the latter path, it jumps from $w$ to $\bds(w) = a$ and then follows predecessors in the green tree till it reached $b$. 

To find the complete path from $d$ to $b$ it must find an $\e(a; w)$ path in the smaller petal. This time, it observes that $\e(w) = \mn(w)$ and therefore the $\e(a; w)$ path does not use the bridge of the smaller petal. Instead it is found by doing a DFS in the green tree, assuming that the color of $w$ was green. Then $p_2^{-1}$ is obtained by concatenating the path from $v$ to $c$ with $(c, d)$ with the path from $d$ to $b$. The latter consists of $(d, w)$ concatenated with the path from $w$ to $a$ concatenated with the path from $a$ to $b$.

\subsubsection{Finding a maximal set of disjoint paths}
\label{sec.maximal}

After the first path, say $p$, is found, its vertices are removed. As a result, some of the left-over vertices may have no more predecessors. To deal with such vertices, we describe the procedure RECURSIVE REMOVE. It recursively removes all vertices having no more predecessors, until every remaining matched vertex that was assigned a minlevel has a predecessor; of course, unmatched vertices don't have predecessors and will be removed only if they become isolated nodes.  

At this point MAX will process the next bridge of tenacity $l_m$. When it encounters another bridge which makes DDFS terminate in Case 2, i.e., with two unmatched vertices, it finds another augmenting path. This continues until all bridges of tenacity $l_m$ are processed. Lemma \ref{lem.maximal} shows that this will result in a maximal set of paths of length $l_m$.

\section{Limited BFS-Honesty}
\label{sec.BFSH}

Using the notion of tenacity, we first show that minimum length alternating paths are BFS-honest to some extent, see Theorem \ref{thm.honest}. This extent of BFS-honesty will be critically exploited later.

\definition{(Limited BFS-honesty)}
Let $p$ be an $\e(v)$ or $\o(v)$ path starting at unmatched vertex $f$ and let $u$ lie on $p$.
Then $|p[f \ \mbox{to} \ u]|$ will denote the length of this path from $f$ to $u$, and if it is even (odd) we will say
that $u$ is {\em even (odd) w.r.t. p}. 
We will say that $u$ is BFS-honest w.r.t. $p$ if 
$|p[f \ \mbox{to} \ u]| = \e(u) \ (\o(u))$ if $u$ is even (odd) w.r.t. $p$.

Observe that in the graph of Figures \ref{fig.BFSH} and \ref{fig.verten}, the vertices $a, \ b$ and $c$ are BFS-honest on
all evenlevel and oddlevel paths to the vertices of tenacity $\alpha$. However, the vertices of tenacity $\alpha$ are not
BFS-honest on $\o(a)$ and $\o(b)$ paths.

\begin{lemma}
\label{lem.t-matched}
If $(u, v)$ is a matched edge, then $\t(u) = \t(v) = \t(u, v)$.
\end{lemma}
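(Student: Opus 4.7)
The plan is to establish the two identities
\[
\e(u) \;=\; \o(v) + 1 \qquad \text{and} \qquad \e(v) \;=\; \o(u) + 1,
\]
(with the convention $\infty + 1 = \infty$), from which the lemma follows immediately by adding and using that for a matched edge $\t(u,v) = \o(u) + \o(v) + 1$:
\[
\t(u) = \e(u) + \o(u) = (\o(v)+1) + \o(u) = \t(u,v),
\]
and symmetrically $\t(v) = \t(u,v)$.

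I would prove $\e(u) = \o(v) + 1$ by two matching inequalities. For $\o(v) \le \e(u) - 1$, assume $\e(u)$ is finite and let $q$ be an $\e(u)$ path starting at an unmatched vertex $f$. Because $q$ has even length and starts with an unmatched edge (as $f$ has no matched incident edge), the edges of $q$ alternate unmatched, matched, $\ldots$, matched; hence the last edge of $q$ is matched. Since $u$ is matched and its unique matched edge is $(u,v)$, that last edge must be $(v,u)$. Therefore the prefix $q[f \ \mbox{to} \ v]$ is a simple alternating path of odd length $\e(u)-1$ ending in an unmatched edge, giving $\o(v) \le \e(u) - 1$.

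For the reverse inequality $\e(u) \le \o(v) + 1$, assume $\o(v)$ is finite and let $p$ be an $\o(v)$ path. Its last edge is unmatched (odd length, same parity reasoning). The key point is that $u$ cannot lie on $p$: any intermediate vertex $x$ of an alternating path has its matched edge on the path (the two edges at $x$ on $p$ have opposite parities, and one of them must be matched because any matched vertex's two incident alternating-path edges cannot both be unmatched). If $u$ appeared as an intermediate vertex, then the matched edge $(u,v)$ would lie on $p$ and $v$ would be adjacent to $u$ on $p$; but $v$ is the terminal vertex and its incident edge on $p$ is unmatched, a contradiction. Also $u \neq f$ since $f$ is unmatched while $u$ is matched. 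Thus we may append $(v,u)$ to $p$, obtaining an even alternating path to $u$ of length $\o(v)+1$, so $\e(u) \le \o(v) + 1$. By symmetry (swap the roles of $u$ and $v$), $\e(v) = \o(u) + 1$.

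It remains to dispose of the infinite cases. If $\e(u) = \infty$, the previous paragraph already shows $\o(v)$ cannot be finite (that would force $\e(u) \le \o(v)+1 < \infty$), so $\o(v) = \infty$ as well; hence $\t(u) = \t(v) = \t(u,v) = \infty$. The analogous argument handles $\o(u) = \infty$. The only step that requires genuine care is the non-trivial claim that $u$ does not lie on an $\o(v)$ path; everything else is parity bookkeeping plus a one-edge extension or truncation.
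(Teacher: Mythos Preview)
Your proof is correct and follows exactly the same approach as the paper: establish $\e(u)=\o(v)+1$ and $\e(v)=\o(u)+1$, then add. The paper simply asserts these two identities in one line and says ``the lemma follows,'' whereas you spell out both inequalities and the simplicity argument for appending $(v,u)$; so your write-up is more detailed but not substantively different.
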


\begin{proof}
If $(u, v)$ is a matched edge, $\e(v) = \o(u) + 1$ and $\e(u) = \o(v) + 1$. The lemma follows.
\end{proof}

As a result of Lemma \ref{lem.t-matched}, in several proofs given in this paper, it will suffice to restrict attention to only 
one of the end points of a matched edge. 

\begin{theorem}
\label{thm.honest}
Let $p$ be an $\e(v)$ or $\o(v)$ path starting at unmatched vertex $f$ and let vertex $u \in p$ with $\t(u) \geq \t(v)$. 
Then $u$ is BFS-honest w.r.t. $p$.
Furthermore, if $\t(u) > \t(v)$ then $|p[f \ \mbox{to} \ u]| = \mn(u)$.
\end{theorem}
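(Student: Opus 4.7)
The plan is a proof by contradiction via a path-splicing argument. By Lemma~\ref{lem.t-matched}, if $(u',u)$ is matched then $\t(u')=\t(u)$ and the position of $u'$ on $p$ is one less than that of $u$, so it suffices to handle one parity of $u$ on $p$; I focus on the case where $p$ is an $\o(v)$-path and $u$ is even w.r.t. $p$, so $|p[f\to u]|$ is even and at least $\e(u)$ (the $\e(v)$-path case is symmetric). Suppose for contradiction $|p[f\to u]| > \e(u)$, and pick a minimum counterexample (minimizing, say, $|p|$). Let $q$ be a true $\e(u)$-path from some unmatched $f'$ to $u$.

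The natural attempt is $W := q\bullet p[u\to v]$, an alternating walk of length $\e(u)+\o(v)-|p[f\to u]|<\o(v)$. It alternates at $u$ because $q$ enters $u$ via a matched edge (it is an even-length path ending at a matched vertex), while $p$ leaves $u$ with an unmatched edge (since $u$ is even on $p$); the parity of $|W|$ is odd. If $W$ were a simple path, it would be an odd alternating path from $f'$ to $v$ of length $<\o(v)$, contradicting the minimality of $\o(v)$. The main obstacle is that $q$ may revisit vertices of $p[u\to v]$, so $W$ need not be simple.

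To handle this, let $w$ be the first vertex on $q$ (traversed from $f'$) that also lies on $p[u\to v]$, and assume $w\neq u$. Set $a=|q[f'\to w]|$ and $c=|p[f\to w]|$, noting $a<\e(u)$ and $c>|p[f\to u]|$. Split by the parity of $a-c$. In the parity-matched case $a\equiv c\pmod 2$, the spliced path $q[f'\to w]\bullet p[w\to v]$ is simple (by choice of $w$), properly alternating at $w$, and has odd length $a+\o(v)-c<\o(v)$, delivering the contradiction directly. In the parity-crossed case $a\not\equiv c\pmod 2$, the spliced path $q[f'\to w]\bullet (p[u\to w])^{-1}$ alternates at $w$ and is simple from $f'$ to $u$ of odd length $a+c-|p[f\to u]|$, so $\o(u)\le a+c-|p[f\to u]|$. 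Combining with $\t(u)\ge \t(v)$ gives $\e(v)+\o(v)\le \e(u)+a+c-|p[f\to u]|$; one then extends this new odd path to $u$ through the matched partner $u'$ of $u$ (which lies on $p$ immediately before $u$, with $\t(u')=\t(u)$ by Lemma~\ref{lem.t-matched}) and through $p[u'\to v]$, producing either a shorter odd alternating path to $v$ (contradicting $\o(v)$) or a strictly smaller counterexample to the theorem (contradicting minimality).

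For the furthermore statement, BFS-honesty already yields $|p[f\to u]| = \e(u)$ in the case treated above. Suppose $\t(u)>\t(v)$ but $\e(u)>\o(u)$, i.e., $|p[f\to u]|=\mx(u)$ rather than $\mn(u)$. Then an $\o(u)$-path $q''$ ends at $u$ with an unmatched edge and can be extended across the matched edge $(u,u')$ (where $u'$ precedes $u$ on $p$) to an $\e(u')$-path of length $\o(u)+1$; appending $p[u'\to v]$ gives an alternating walk from an unmatched vertex to $v$ of length $\o(u)+1+(\o(v)-|p[f\to u]|+1) < \o(v)$ whenever $\t(u)>\t(v)$, and the same parity-splicing technique converts it to a strictly shorter simple odd alternating path to $v$. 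The main obstacle throughout is the parity-crossed Case B: it is precisely here that the hypothesis $\t(u)\ge\t(v)$ is indispensable, since the bookkeeping of lengths against $\e(v)+\o(v)$ is what closes the argument, and a careful minimality choice for the counterexample is needed to avoid an infinite descent in the case analysis.
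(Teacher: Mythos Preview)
Your overall strategy (path splicing and a parity case split at the first shared vertex $w$) is the same as the paper's, and your Case~A is essentially correct. The gap is in Case~B.

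In Case~B you correctly obtain an odd alternating path $s = q[f'\to w]\bullet (p[u\to w])^{-1}$ to $u$ of length $a + c - |p[f\to u]|$. But the step ``extend $s$ through the matched partner $u'$ and through $p[u'\to v]$'' does not produce a simple path: $s$ arrives at $u$ along the \emph{unmatched} edge immediately after $u$ on $p$, so after crossing $(u,u')$ you would re-enter $u$ on the very next step of $p[u'\to v]$. More importantly, the inequalities you have at this point only give $\t(u) < \e(u) + \o(v)$ (from $\e(u) < |p[f\to u]|$, $a < \e(u)$, $c \le \o(v)$), which is \emph{not} $\t(u) < \t(v)$ unless you also know $\e(u) \le \e(v)$. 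Your minimality setup does not supply this, and the descent you sketch is not well-founded as written.

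What closes Case~B in the paper is an extra case split you omitted: whether $|p|$ is the minlevel or the maxlevel of $v$. If $|p| = \mn(v)$ (here $\o(v) \le \e(v)$), then $\e(u) < |p[f\to u]| < |p| = \o(v) \le \e(v)$ and $\o(u) \le a + c - |p[f\to u]| < c \le \o(v)$, so $\t(u) < \t(v)$ directly. If $|p| = \mx(v)$, one needs a second auxiliary path, namely a $\mn(v)$ path $r$, and a separate first-intersection argument of $r$ with $p[u'\to v]$: one parity yields $\t(u) < \t(v)$ immediately, the other yields $\e(v) > \e(u)$ (in your $\o(v)$-path setup), which is exactly the missing inequality that lets Case~B close. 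Your ``furthermore'' paragraph has the same defect: the walk you build revisits $p$ near $u$, and ``the same parity-splicing technique'' again needs the minlevel/maxlevel split on $p$ to convert the bound into $\t(u) < \t(v)$ (which would contradict $\t(u) > \t(v)$).
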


\begin{proof}
Assume w.l.o.g. that $p$ is an $\e(v)$ path and that $u$ is even w.r.t. $p$ (by Lemma \ref{lem.t-matched}).
Suppose $u$ is not BFS-honest w.r.t. $p$, and let $q$ be an $\e(u)$ path, i.e., $|q| < |p[f \ \mbox{to} \ u]|$.
First consider the case that $\e(v) = \mx(v)$, and let $r$ be a $\mn(v)$ path. Let $u'$ be the matched neighbor of $u$.
Consider the first vertex of $r$ that lies on $p[u' \ \mbox{to} \ v]$. 
If this vertex is even w.r.t. $p$ then $\o(u) \leq |r| + |p[u \ \mbox{to} \ v]|$. Additionally, 
$\e(u) < |p[f \ \mbox{to} \ u]|$, hence $\t(u) < \t(v)$, leading to a contradiction. On the other hand, if this vertex is odd 
w.r.t. $p$ then $\mn(v) = |r| > \e(u)$, because otherwise there is a shorter even path from $f$ to $v$ than $\e(v)$. 
We combine the remaining argument along with the case that $\e(v) = \mn(v)$ below.

Consider the first vertex, say $w$, of $q$ that lies on $p(u \ \mbox{to} \ v]$ -- there must be such a vertex because otherwise
there is a shorter even path from $f$ to $v$ than $\e(v)$. If $w$ is odd w.r.t. $p$ then we get an even path to $v$
that is shorter than $\e(v)$. Hence $w$ must be even w.r.t. $p$. Then, $q[f \ \mbox{to} \ w] \bullet [w \ \mbox{to} \ u]$ is an odd path to $u$ with length less than $\e(v)$, where $\bullet$ denotes the concatenation operator. Again we get $\t(u) < \t(v)$, leading to a contradiction. 

We next prove the second claim. The claim is obvious if $\e(v) = \mn(v)$, so let us assume that $\e(v) = \mx(v)$. As before,
let $r$ be a $\mn(v)$ path, and consider the first vertex of $r$ that lies on $p[u' \ \mbox{to} \ v]$. 
If this vertex is even w.r.t. $p$ then $\o(u) \leq |r| + |p[u \ \mbox{to} \ v]|$. Hence $\t(u) \leq \t(v)$, which leads to a 
contradiction. On the other hand, if this vertex is odd 
w.r.t. $p$ then $\mn(v) = |r| > \e(u)$, because otherwise there is a shorter even path from $f$ to $v$ than $\e(v)$. Now
the claim follows because otherwise $\t(u) < \t(v)$.
\end{proof}

\begin{corollary}
	\label{cor.Honest}
	Let $p$ be an $\e(v)$ or $\o(v)$ path and let $u$ lie on $p$. If $u$ is not BFS-honest w.r.t. $p$ then $\t(u) < \t(v)$.
\end{corollary}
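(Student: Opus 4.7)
The plan is to observe that Corollary \ref{cor.Honest} is nothing more than the contrapositive of the first assertion of Theorem \ref{thm.honest}, so no real work remains to be done.

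To make this precise, I would first check that the hypotheses match. The corollary begins with an $\e(v)$ or $\o(v)$ path $p$ and a vertex $u$ on $p$; by definition every such path starts at some unmatched vertex $f$, so the setting is identical to the theorem's. Theorem \ref{thm.honest} then guarantees that whenever $\t(u) \geq \t(v)$, the vertex $u$ is BFS-honest with respect to $p$.

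Contrapositively, if $u$ fails to be BFS-honest with respect to $p$, then the condition $\t(u) \geq \t(v)$ must be violated, i.e., $\t(u) < \t(v)$, which is exactly what the corollary asserts. Since tenacities are integers (in fact sums of levels), the negation of ``$\geq$'' is ``$<$'' in the strict sense stated.

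There is no real obstacle here: the content of the corollary is already contained in Theorem \ref{thm.honest}, and the only subtlety worth flagging in the write-up is that the ``furthermore'' part of the theorem (the strengthening when $\t(u) > \t(v)$) is not needed — only the first sentence of the theorem is used. Thus the proof is a single line invoking the contrapositive of Theorem \ref{thm.honest}.
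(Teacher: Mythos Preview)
Your proposal is correct and matches the paper's approach: the corollary is stated immediately after Theorem~\ref{thm.honest} with no separate proof, precisely because it is the contrapositive of that theorem's first assertion. Your observation that only the first sentence of Theorem~\ref{thm.honest} is needed (not the ``furthermore'' clause) is accurate.
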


\section{Base, Blossom and Bridge}
\label{sec.base}

\definition{(Eligible tenacity)} An odd number $t$, with $t_m \leq t < l_m$, will be said to be an {\em eligible tenacity}.  

Let $v$ be vertex of eligible tenacity and $p$ be an $\e(v)$ or $\o(v)$ path; assume it starts at unmatched vertex $f$. If $u$ and $w$ are two vertices on $p$ and if $u$ is further away from $f$ on $p$ than $w$, then we will say that $u$ is {\em higher} than $w$.

\definition{(Base of $v$ w.r.t. $p$, $F(p, v)$)} Let $v$ be vertex of eligible tenacity and $p$ be an $\e(v)$ or $\o(v)$ path starting at unmatched vertex $f$. Consider all vertices of tenacity $> t$ on $p$; clearly, this set contains $f$. Among these vertices, define the highest one to be {\em the base of v w.r.t. p}, denoted $F(p, v)$. Clearly, $F(p, v)$ is even w.r.t. $p$, and by Theorem \ref{thm.honest}, it is an outer vertex.

\notation{(The set $B(v)$)}
For a vertex $v$ of eligible tenacity and let 
\[ B(v) = \{F(p, v) ~|~ p \ \mbox{is an} \ \e(v) \ \mbox{or} \ \o(v) \ \mbox{path} \} . \] 

\bigskip

\begin{claim}
\label{claim.base}
The set $B(v)$ is a singleton.
\end{claim}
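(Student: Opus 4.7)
The plan is to prove Claim \ref{claim.base} by a double induction, exactly as flagged in Section \ref{sec.contributions}: the outer induction is on the eligible tenacity $t$, starting from $t = t_m$, and the inner induction is on $\mn(v)$ among vertices of tenacity $t$. At each stage I may invoke Claim \ref{claim.base} (together with the definition of $\base(\cdot)$, of the blossoms $\CB_{b',t'}$ for $t' < t$, and the structural properties of those blossoms that will be collected in Theorem \ref{thm.base}) for every vertex of tenacity $< t$, and I may invoke Claim \ref{claim.base} for vertices of tenacity exactly $t$ whose minlevel is strictly less than $\mn(v)$. Everything below Theorem \ref{thm.honest} is available unconditionally.

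Suppose for contradiction that $v$ has eligible tenacity $t$ and that there exist two $\e(v)$-or-$\o(v)$ paths $p_1, p_2$ starting at unmatched vertices $f_1, f_2$ with $F(p_1, v) = b_1 \neq b_2 = F(p_2, v)$. By Theorem \ref{thm.honest}, every vertex $w$ on $p_i$ satisfying $\t(w) > t$ is BFS-honest on $p_i$ and occurs at distance exactly $\mn(w)$ from $f_i$; in particular each initial segment $p_i[f_i \text{ to } b_i]$ is a $\mn(b_i)$ path. By the definition of $F(p_i, v)$, every vertex of the tail $p_i(b_i \text{ to } v]$ has tenacity $\le t$. Let $u_i$ denote the predecessor of $v$ on $p_i$, i.e., the vertex adjacent to $v$; by Lemma \ref{lem.t-matched} and the structure of prop/bridge, $\t(u_i) \le t$ and $\mn(u_i) < \mn(v)$.

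The core of the argument is to walk backwards from $v$ along $p_1$ and $p_2$ and force $b_1 = b_2$ step by step. If the paths share a nontrivial common suffix ending at some vertex $w$ with $\mn(w) < \mn(v)$, then either $\t(w) = t$, in which case the inner inductive hypothesis gives $F(p_1, w) = F(p_2, w)$ and, because the removed portions $p_i[w \text{ to } v]$ consist entirely of tenacity-$t$ vertices, this forces $b_1 = b_2$; or $\t(w) < t$, in which case the outer inductive hypothesis yields a well-defined $\base(w)$ and a blossom $\CB_{\base(w), \t(w)}$ whose structural property (paths entering it from a vertex of tenacity $> \t(w)$ must do so through the base) lets us trace $p_1[f_1 \text{ to } w]$ and $p_2[f_2 \text{ to } w]$ out of the blossom at a common vertex, and an easy argument on tenacities then collapses $b_1$ and $b_2$ to a single vertex. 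So the hard case is when $p_1$ and $p_2$ have no common suffix past $v$, i.e., $u_1 \neq u_2$.

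The main obstacle is precisely this last case, and it is where the induction earns its keep. Here I would consider the paths $p_1[f_1 \text{ to } u_1] \bullet (u_1,v) \bullet (v, u_2) \bullet p_2[u_2 \text{ to } f_2]^{-1}$ and, via a symmetric-difference / Berge-style argument applied inside the tails $p_i[b_i \text{ to } v]$, extract either a shorter $\e(v)$ or $\o(v)$ path (contradicting the minimality of $p_i$) or a detour that forces $b_1$ and $b_2$ both to lie on a common alternating path of length $\mn(b_1) = \mn(b_2)$. To conclude, I would invoke the inner induction on $u_1$ and $u_2$: each has tenacity $\le t$ and smaller minlevel, so by the appropriate inductive hypothesis each has a well-defined ``base candidate'' $\tilde b_i$ (either $F(p_i, u_i)$ at the same tenacity, or the blossom base at strictly smaller tenacity). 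Using the blossom laminar structure of tenacities $< t$ — the same structural apparatus whose correctness is discussed in Section \ref{sec.laminar} — I would show that $\tilde b_1$ and $\tilde b_2$ are nested or equal, and combined with the BFS-honesty of vertices of tenacity $>t$ on the $p_i$, this forces $b_1 = b_2$, the desired contradiction. I expect this final coordination, keeping straight which induction hypothesis applies to which subpath and avoiding any circular appeal to a not-yet-defined $\base(v)$, to be the genuinely delicate part of the proof.
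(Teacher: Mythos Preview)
Your outer/inner double-induction skeleton matches the paper's, but the content of the inductive step is quite different from what the paper does, and the place you flag as ``genuinely delicate'' is in fact a gap.

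The paper does \emph{not} compare two arbitrary paths $p_1,p_2$ head-on. Instead it first fixes a single unmatched vertex $f$ and proves the auxiliary Statement~1${}'$ that $B_f(v)$ is a singleton (equal to $A_f(v)$), and only afterwards shows (Claim~\ref{claimb.6}) that $B_f(v)=B_{f'}(v)$ for different $f,f'$. The engine behind Statement~1${}'$ is structural: for an inner $v\in V_t(f)$ one builds the layered graph $H_v$, shows that every $\mx(v)$ path contains a bridge of tenacity $t$ (Statement~2, proved in Claim~\ref{claimc.1} before Statement~1${}'$), and then uses the \emph{highest bottleneck} $H(u,u')$ of such a bridge together with the DDFS Certificate to force every vertex above that bottleneck to have tenacity exactly $t$. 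The inner induction on $\mn(v)$ is used to push this bottleneck all the way down to $b=A_f(v)$. None of this bridge/bottleneck machinery appears in your proposal.

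Your ``hard case'' is where the argument actually lives, and the sketch there does not go through. Applying the inner hypothesis to the predecessors $u_1,u_2$ gives you $\base(u_i)$ (or $F(p_i,u_i)$), but that is not the same object as $F(p_i,v)$: the highest tenacity-$>t$ vertex on $p_i[f_i\text{ to }v]$ can sit strictly below $\base(u_i)$, and nothing you have written rules out $b_1\neq b_2$ even when $\tilde b_1=\tilde b_2$. The ``symmetric-difference / Berge-style argument'' you invoke would have to manufacture, from $p_1$ and $p_2$, either a short augmenting path (contradicting $t<l_m$) or a short odd path to $b_i$ (contradicting $\t(b_i)>t$); the paper does exactly this kind of surgery, but only \emph{after} it has the bridge-and-bottleneck structure to control where the two paths can meet. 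Without Statement~2 and the set $S(u,u')$ of Claim~\ref{claimb.S}, you have no handle on the vertices of tenacity $<t$ sitting on $p_i(b_i\text{ to }v]$ that are \emph{not} BFS-honest on $p_i$ (Corollary~\ref{cor.Honest}), and these are precisely the vertices that can derail a naive path-splicing argument. In short: the missing idea is the bridge of tenacity $t$ on the $\mx(v)$ path and its bottleneck; once you have that, the paper's route via $H_v$ is the natural way to finish.
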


The proof of Claim \ref{claim.base} is not straightforward; in fact, the following chicken-and-egg problem arises. The proof of this claim requires the notion of a blossom and its associated properties. On the other hand, blossoms can be defined only after defining the base of a vertex, and the latter definition is a consequence of Claim \ref{claim.base}.

We will break this deadlock by proving Claim \ref{claim.base} via an induction on tenacity: for 
each value of tenacity, say $t$, once Claim \ref{claim.base} is proven for vertices of tenacity $\leq t$, the base of vertices of tenacity $\leq t$ can be defined. Following this, blossoms of 
tenacity $t$ can be defined and properties of these blossoms and properties of paths traversing through these blossoms can be established.
All these facts are then used in the next step of the induction to prove Claim \ref{claim.base} for the next higher value of tenacity.  

\definition{(\Ct)}
Let $t$ be an eligible tenacity. Then \Ct \ is true if Claim \ref{claim.base} holds for all vertices $v$ such that $t_m \leq \t(v) \leq t$.

\begin{figure}[ht]
\begin{minipage}[b]{0.42\linewidth}
\centering
\includegraphics[width=\textwidth]{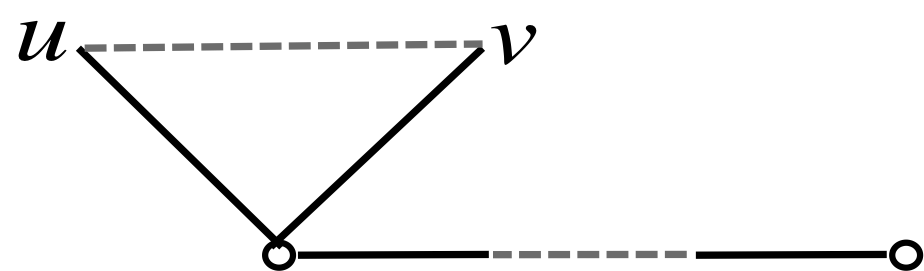}
\caption{
Vertices $u$ and $v$ have no base.}
\label{fig.nobase}
\end{minipage}
\hspace{2.4cm}
\begin{minipage}[b]{0.36\linewidth}
\centering
\includegraphics[width=\textwidth]{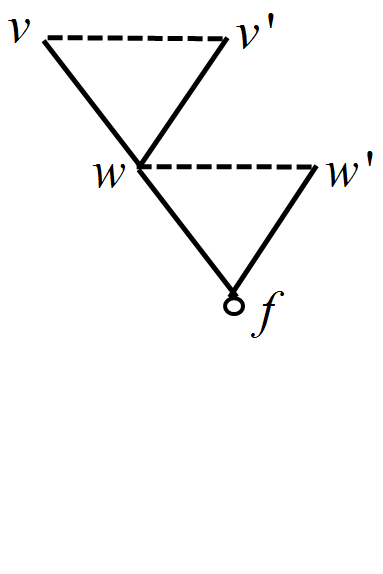}
\caption{Vertex $f$ is the base of $v, v', w, w'$.}
\label{fig.nested2}
\end{minipage}
\end{figure}

\definition{\label{def.base}(Base of a vertex of tenacity $\leq t$)} \\
Let $t$ be an eligible tenacity, and assume \Ct \ holds.  
Then, for each vertex $v$ of tenacity $\leq t$, define its base to be the singleton vertex in the set $B(v)$. We will denote it by $\base(v)$. 

Since $F(p, v)$ is an outer vertex, so is the base of a vertex.
Furthermore, by Theorem \ref{thm.honest}, $\base(v)$ is BFS-honest on every $\e(v)$ or $\o(v)$ path.
In the graph of Figures \ref{fig.BFSH} and \ref{fig.verten}, the base of each vertex of tenacity $\alpha$ is $a$,
tenacity $\kappa$ is $b$, and tenacity $\tau$ is $f$, respectively. In Figure \ref{fig.nested1}, 
$b$ is the base of $v, v', w$ and $w'$. In Figure \ref{fig.nested2}, $f$ is the base of $v, v', w$ and $w'$.

\begin{remark} The condition ``$t < l_m$'' is essential in Claim \ref{claim.base} and in Definition \ref{def.base}, as illustrated in Figure \ref{fig.nobase}. In this graph, $l_m = 3$ and the vertices $u$ and $v$, both of tenacity 3, have no base; the evenlevel and oddlevel paths to these vertices do not contain any vertex of tenacity greater than 3.
\end{remark}

\begin{figure}[ht]
\begin{minipage}[b]{0.28\linewidth}
\centering
\includegraphics[width=\textwidth]{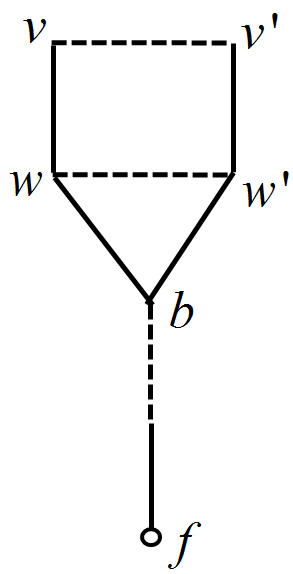}
\caption{Vertex $b$ is the base of $v, v', w, w'$.}
\label{fig.nested1}
\end{minipage}
\hspace{3cm}
\begin{minipage}[b]{0.5\linewidth}
\centering
\includegraphics[width=\textwidth]{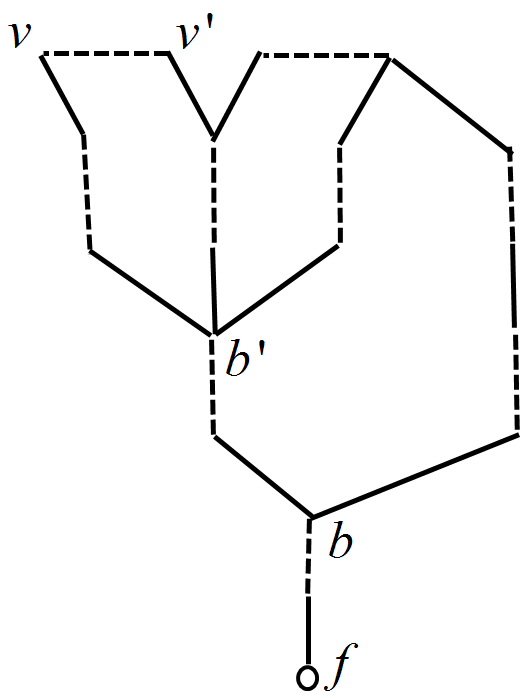}
\caption{Blossom $\CB_{b, 15}$ is of minimum tenacity. Note that $H(v, v') = b'$.}
\label{fig.t-min}
\end{minipage}
\end{figure}

\definition{\label{def.blossom} (Blossom of tenacity $t$)}
Let $t$ be an eligible tenacity, and assume \Ct \ holds. \\
Blossoms  of tenacity $t$ will be defined recursively.
Let $b$ be an outer vertex with $\t(b) > t$. We will denote the {\em blossom of
tenacity $t$ and base $b$} by $\bt$. Define $\CB_{b, 1} = \emptyset$.
Let 
\[S_{b, t} = \{v ~|~ \t(v) = t \ \mbox{and} \ \base(v) = b \} \] 
and define
\[ \bt = S_{b, t} \cup \left(\bigcup_{v \in (S_{b, t} \cup \{b\}), \ v \ \mbox{outer}} \CB_{v, t-2} \right) . \]

It is obvious from Definition \ref{def.blossom} that if $v \in \bt$ then $\t(v) \leq t$. In particular, observe that $b \notin \bt$. 
We will say that blossom $\CB_{b', t'}$ {\em is nested in} blossom $\bt$ if $\CB_{b', t'} \subset \bt$. From the recursive definition
given above, it follows that $t' < t$.

In the graph of Figures \ref{fig.BFSH} and \ref{fig.verten}, the blossom $\CB_{a, \alpha}$ consists of vertices of tenacity $\alpha$,
the blossom $\CB_{b, \kappa}$ consists of vertices of tenacity $\alpha$ and $\kappa$, and
the blossom $\CB_{f, \tau}$ consists of vertices of tenacity $\alpha , \kappa$ and $\tau$. 
In Figure \ref{fig.nested1}, blossom $\CB_{b, 7} = \{w, w'\}$ and blossom $\CB_{b, 11} = \{w, w', v, v'\}$.
In Figure \ref{fig.nested2}, blossom $\CB_{f, 3} = \{w, w'\}$ and blossom $\CB_{f, 7} = \{w, w', v, v'\}$; clearly,
the former is nested in the latter.

\begin{lemma}
\label{lem.free}
Let $v$ be vertex having eligible tenacity. There is an $\e(v)$ path starting at unmatched vertex $f$ if and only if there is an $\o(v)$
path starting at $f$.
\end{lemma}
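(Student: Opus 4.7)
I prove the forward direction; the converse is symmetric. Since $\t(v) = t$ is eligible and hence finite, both $\e(v)$ and $\o(v)$ are finite, so an $\o(v)$ path $q$ starting at some unmatched $f_0$ exists. Given the $\e(v)$ path $p$ from $f$, if $f_0 = f$ we are done; otherwise I derive a contradiction with $t < l_m$.

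Form the walk $W := p \bullet q^{-1}$ from $f$ to $f_0$. Alternation is preserved at $v$ since $p$'s last edge is matched ($\e(v)$ even, first edge unmatched at $f$) and $q$'s last edge is unmatched ($\o(v)$ odd, first edge unmatched at $f_0$); the first and last edges of $W$ are unmatched, and $|W| = t$. If $V(p) \cap V(q) = \{v\}$, then $W$ is a simple $M$-augmenting path of length $t < l_m$, contradicting the minimality of $l_m$. So there is a common vertex $u \ne v$; write $a_u, b_u$ for its positions on $p$ and $q$.

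Choose $u$ minimizing $a_u + b_u$ among common vertices $\ne v$. The shortcut $p[f \ \mbox{to} \ u] \bullet q^{-1}[u \ \mbox{to} \ f_0]$ has length $a_u + b_u < t$ and is simple (any revisit would produce a common $w$ with $a_w < a_u$ and $b_w < b_u$, violating minimality). A parity check at $u$ shows this shortcut is $M$-alternating iff $a_u + b_u$ is odd; if so, it is a simple augmenting path of length $< t$, contradicting $t < l_m$. So every common $u \ne v$ satisfies $a_u + b_u$ even. The complementary switch $p[f \ \mbox{to} \ u] \bullet q[u \ \mbox{to} \ v]$ is then $M$-alternating (opposite parity requirement at $u$), of odd length $a_u + \o(v) - b_u$; by a similar minimality-based choice of $u$ it can be taken to be a simple path from $f$ to $v$. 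If $a_u = b_u$ its length equals $\o(v)$, producing an $\o(v)$ path from $f$ and contradicting $f_0 \ne f$; if $a_u < b_u$ its length is less than $\o(v)$, contradicting the minimality of $\o(v)$. Hence every common $u \ne v$ satisfies $a_u > b_u$.

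The main obstacle is to derive a contradiction from this last $a_u > b_u$ constraint. I would choose $u$ maximizing $b_u$ and examine the edges of $p$ and $q$ incident to $u$ on the side closer to $v$. The uniqueness of the matched neighbor at $u$ forces the matched edges of $p$ and $q$ at $u$ to coincide, identifying a new common vertex $u'$ with $a_{u'} = a_u + 1$ and $b_{u'} = b_u + 1$, contradicting the maximality of $b_u$. An analogous argument with the corresponding unmatched edges handles the other sub-case, with the degenerate situation of $u$ adjacent to $v$ resolved by a direct parity computation exploiting $t < l_m$.
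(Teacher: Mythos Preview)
Your argument has a structural slip and two genuine gaps.

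\textbf{Setup.} You write ``otherwise I derive a contradiction with $t < l_m$'' and later ``producing an $\o(v)$ path from $f$ and contradicting $f_0 \neq f$.'' But finding an $\o(v)$ path from $f$ is the goal, not a contradiction; the existence of a path from $f_0$ does not preclude one from $f$. You should instead assume that \emph{no} $\o(v)$ path starts at $f$ and aim for a contradiction, or (as the paper does) simply construct such a path. This is easily repaired, but as written the logic is circular.

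\textbf{The quantifier leap.} You establish that $a_u + b_u$ is even only for the particular $u$ minimizing $a_u + b_u$, yet you immediately assert ``So every common $u \neq v$ satisfies $a_u + b_u$ even.'' This does not follow. In the very next sentence you invoke ``a similar minimality-based choice of $u$'' to make the complementary switch $p[f \ \mbox{to} \ u] \bullet q[u \ \mbox{to} \ v]$ simple; but that is a \emph{different} choice of $u$, and you have not shown that this new $u$ also has $a_u + b_u$ even (which you need for alternation). Moreover, it is unclear which minimality makes this second concatenation simple: you need $p[f \ \mbox{to} \ u)$ disjoint from $q(u \ \mbox{to} \ v]$, and no single quantity like $a_u + b_u$ or $a_u - b_u$ obviously controls that.

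\textbf{The final case.} You explicitly flag the $a_u > b_u$ case as ``the main obstacle'' and only sketch an idea. That sketch does not work. From $a_u + b_u$ even you get two sub-cases: both $a_u, b_u$ odd, or both even. When both are odd, the edges of $p$ and $q$ leaving $u$ toward $v$ are matched and must coincide, as you say. But when both are even, those edges are \emph{unmatched}, and a vertex can have many unmatched neighbors, so there is no forced coincidence and no new common vertex $u'$. Your ``analogous argument with the corresponding unmatched edges'' therefore fails.

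\textbf{How the paper avoids all of this.} The paper makes a single clean choice: let $u$ be the \emph{first} vertex along $q$ (from $f'$) that also lies on $p$. This one choice guarantees that $q[f' \ \mbox{to} \ u)$ is disjoint from $p$, so every concatenation it forms is automatically simple. A short parity analysis at $u$ then either yields an augmenting path shorter than $l_m$ (contradiction), or shows that $p[f \ \mbox{to} \ u] \bullet q[u \ \mbox{to} \ v]$ is an odd alternating path from $f$ of length at most $|q| = \o(v)$; one further case (when this last path is not simple) is handled by looking at the lowest matched edge of $p$ used by $q$. The asymmetric ``first intersection along $q$'' choice is what buys simplicity for free and makes the case analysis terminate; your more symmetric $a_u, b_u$ framework loses that leverage.
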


\begin{proof}
By Lemma \ref{lem.t-matched}, it suffices to prove only the forward direction. By the assumption on $\t(v)$, $v$ has paths of both parity.
Let $p$ be an $\e(v)$ path starting at unmatched vertex $f$ and suppose $q$ is an $\o(v)$ path starting at unmatched vertex $f' \neq f$.
We will show that there must also be an $\o(v)$ path starting at unmatched vertex $f$.

Let $u$ be the lowest vertex w.r.t. $q$ that lies on $p$ as well. If $u$ is even w.r.t. $p$, then there is an augmenting path from $f'$ to $f$ of length
less than $t_m$. Therefore, $u$ is odd w.r.t. $p$. Since $q[f' \ \mbox{to} \ u]  \bullet p[u \ \mbox{to} \ v]$ is a valid even
alternating path,  $|q[f' \ \mbox{to} \ u]|  \geq |p[f \ \mbox{to} \ u]|$, otherwise we can get a shorter $e(v)$ path than $p$.
If $q(u \ \mbox{to} \ v] \cap p[f \ \mbox{to} \ u) = \emptyset$, then $ p[f \ \mbox{to} \ u] \bullet q[u \ \mbox{to} \ v]$ is a valid
odd alternating path of length at most $|q|$ and we are done. 

Otherwise, let $(w' , w)$ be lowest matched edge of $p$ that is traversed by $q$, with $w$ even w.r.t. $p$. If $w$ is odd w.r.t. $q$, then 
again we can get an augmenting path from $f'$ to $f$ of length less than $t_m$. In the remaining case, 
$p[f \ \mbox{to} \ w] \bullet q[w \ \mbox{to} \ v]$ is a shorter odd alternating path than $q$, leading to a contradiction. This proves the lemma.
\end{proof}

Next assume that $v$ has $\e(v)$ and $\o(v)$ paths starting at unmatched vertex $f$. 

\notation{(The set $B_f(v)$)}
Let $f$ be an unmatched vertex and $t$ be an eligible tenacity. Let $v$ be a vertex of tenacity $t$. Define  
\[ B_f(v) = \{F(p, v) ~|~ p \ \mbox{is an} \ \e(v) \ \mbox{or} \ \o(v) \ \mbox{path starting at $f$} \} . \] 

\notation{(The set $V_t(f)$ and the vertex $A_f(v)$)}
Let $f$ be an unmatched vertex and $t$ be an eligible tenacity. We will denote by $V_t(f)$ the set of vertices of tenacity $t$ which have evenlevel and oddlevel paths starting at $f$. For $v \in V_{t}(f)$, consider all $\e(v)$ and $\o(v)$ paths starting at $f$, and consider every vertex of tenacity $> t$ that lies on {\em each} of these paths; in particular, $f$ is such a vertex. Among these vertices, pick the one that is highest and denote it by $A_f(v)$. 

The proof of the next lemma is straightforward and is omitted; the idea behind it is essentially the same as that for Lemma \ref{lem.t-matched}.

\begin{lemma}
\label{lem.min-matched}
Let $(u, v)$ be a matched edge of eligible tenacity. Then $u \in V_{t}(f)$ if and only if $v \in V_{t}(f)$. Furthermore, if $u$ and $v \in V_{t}(f)$, then 
$A_f(u) = A_f(v)$, $B_f(u) = B_f(v)$ and $B(u) = B(v)$. 
\end{lemma}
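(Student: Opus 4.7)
My plan is to establish a bijection between $\e(v)$ paths from $f$ and $\o(u)$ paths from $f$ (and symmetrically between $\o(v)$ paths and $\e(u)$ paths), induced by the matched edge $(u,v)$. Since $v$'s only matched edge is $(u,v)$ and every $\e(v)$ path from $f$ ends with a matched edge at $v$, every such path $p$ has the form $q \bullet (u,v)$ where $q$ is an alternating path from $f$ to $u$ of length $\e(v)-1$. By Lemma \ref{lem.t-matched}, $\o(u) = \e(v)-1$, so $q$ is an $\o(u)$ path. For the reverse direction, given an $\o(u)$ path $q$ from $f$, its last edge is unmatched, so $(u,v) \notin q$; and the only way $v$ could otherwise appear on $q$ would force its unique matched edge $(u,v)$ to lie on $q$, a contradiction. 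Hence $q \bullet (u,v)$ is a simple even alternating path from $f$ to $v$ of length $\e(v)$, i.e., an $\e(v)$ path. The symmetric bijection $p \leftrightarrow p \bullet (v, u)$ between $\o(v)$ and $\e(u)$ paths from $f$ is established identically.

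The key observation I would then record is that corresponding paths under the bijection differ by exactly one terminal vertex (either $v$ or $u$), whose tenacity is $t$ and so does \emph{not} lie in the set ``vertices of tenacity $> t$''. Consequently the sequence of tenacity-$>t$ vertices along the two corresponding paths coincides, and in particular $F(p, v) = F(q, u)$ for any corresponding pair.

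All four claims then follow directly. The existence of both parity paths from $f$ transfers between $u$ and $v$ under the bijections, giving $u \in V_t(f) \Leftrightarrow v \in V_t(f)$. Taking $F$-values ranging over all corresponding pairs yields $B_f(u) = B_f(v)$, and unioning over all unmatched vertices $f$ yields $B(u) = B(v)$. For $A_f$, the family of $\e(v)$ or $\o(v)$ paths from $f$ corresponds via the bijections to the family of $\o(u)$ or $\e(u)$ paths from $f$, with matching tenacity-$>t$ vertex sets on each pair, so the highest vertex common to every member of either family agrees, giving $A_f(u) = A_f(v)$. The only delicate step is verifying simplicity of the extended path, which follows from the uniqueness of each vertex's matched partner, so there is no real obstacle --- consistent with the paper's remark that this lemma's proof is analogous to that of Lemma \ref{lem.t-matched}.
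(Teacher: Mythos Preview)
Your proposal is correct and follows exactly the approach the paper alludes to: the paper omits the proof entirely, remarking only that it is straightforward and ``essentially the same as that for Lemma~\ref{lem.t-matched},'' and your bijection $q \leftrightarrow q \bullet (u,v)$ between $\o(u)$ and $\e(v)$ paths (and symmetrically) is precisely the natural elaboration of the identities $\e(v)=\o(u)+1$, $\e(u)=\o(v)+1$ used there. The one place worth tightening is the ``contradiction'' when $v$ lies on an $\o(u)$ path $q$: spell out that then $(u,v)\in q$ forces $u$ to occur both internally and as the endpoint of $q$, violating simplicity.
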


\subsection{Base Case of the Outer Induction} 
\label{sec.BaseCase-Outer}

In Lemma \ref{lem.min-t}, we establish the base case of the Outer Induction for the first two statements of Theorem \ref{thm.base}. 

\begin{lemma}
\label{lem.min-t}
For every vertex $v$ of tenacity $t_m$, the following hold:
\begin{description}
\item
{\bf Statement 1:}  The set $B(v)$ is a singleton.
\item
{\bf Statement 2:} Every $\mx(v)$ path contains a unique bridge of tenacity $t_m$.
\end{description}
\end{lemma}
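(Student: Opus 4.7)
The plan is to prove Statement 2 first and then leverage it for Statement 1. The recurring workhorse is Theorem \ref{thm.honest}: since $t_m$ is minimal, every vertex $u$ on an $\e(v)$ or $\o(v)$ path $p$ satisfies $\t(u) \geq t_m = \t(v)$, so every such $u$ is BFS-honest on $p$, and every $u$ with $\t(u) > t_m$ is traversed at its minlevel, i.e. $|p[f \ \mbox{to} \ u]| = \mn(u)$.

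For Statement 2, let $p$ be an $\mx(v)$ path starting at an unmatched vertex $f$. Let $k$ be the largest position on $p$ at which the vertex sits at its minlevel, and let $u, u'$ denote the vertices at positions $k$ and $k+1$. By maximality of $k$, the vertex $u'$ sits at its maxlevel on $p$, so Theorem \ref{thm.honest} forces $\t(u') = t_m$ (a larger tenacity would pin $u'$ at its minlevel, and $\mn(u') = \mx(u')$ would make tenacity even). From $\mx(u') = k+1$ and $\t(u') = t_m$ we get $\mn(u') = t_m - k - 1$, and the triangle inequality $|\mn(u) - \mn(u')| \leq 1$ along the edge $(u, u')$ yields $(t_m - 2)/2 \leq k \leq t_m/2$, so $k = (t_m - 1)/2$ and $\mn(u) = \mn(u') = k$. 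A short case split on the parity of $k$, which governs whether the edge $(u, u')$ is unmatched (both $u, u'$ outer) or matched (both inner), directly computes $\t(u, u') = t_m$. The edge cannot be a prop because the predecessor relation would require $|\mn(u) - \mn(u')| = 1$, whereas both minlevels equal $k$. For uniqueness, I would check that any candidate bridge $(x, y)$ of tenacity $t_m$ on $p$ at positions $j, j+1$ is subject to the same BFS-honest bookkeeping; enumerating outer/inner configurations for $x$ and $y$ and eliminating those in which $(x, y)$ would be a prop of one endpoint (via parity constraints that would force an even tenacity) collapses every possibility onto $j = k = (t_m - 1)/2$, so $(x, y) = (u, u')$.

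For Statement 1, I first show that all $\e(v)$ and $\o(v)$ paths to $v$ start at a common unmatched vertex $f$: by Lemma \ref{lem.free} both parities of path exist from each source, so one can assemble an $\e(v)$ path from $f_1$ with an $\o(v)$ path from $f_2$, concatenating through $v$ (matched last edge of the first glued to unmatched first edge of the reverse of the second gives an alternating junction) into an alternating walk of length $\e(v) + \o(v) = t_m$ from $f_1$ to $f_2$. If $f_1 \neq f_2$, reducing this walk to a simple augmenting path would yield one of length $\leq t_m < l_m$, contradicting the definition of $l_m$. With a common starting vertex $f$, the base $F(p, v)$ is for each path $p$ the highest vertex of tenacity $> t_m$ on $p$, pinned to its minlevel by Theorem \ref{thm.honest}. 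I would then argue that $F(p, v)$ is path-independent by pairing each $p$ with a companion $\mx(v)$ path from the same $f$ (swapping parity via Lemma \ref{lem.free} if needed) and invoking the unique bridge $(u, u')$ from Statement 2: walking backward from the bridge through the tenacity-$t_m$ segment lands on a canonical tenacity-$> t_m$ vertex, and the minlevel pinning of Theorem \ref{thm.honest} ensures that this vertex is the same regardless of the choice of $p$.

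The main obstacle is this last unification step. Statement 2 gives a unique tenacity-$t_m$ bridge on each $\mx(v)$ path, but different paths to $v$ may differ in parity and in their internal routing through the tenacity-$t_m$ portion; showing that the ``last tenacity-$> t_m$ vertex'' on $p$ is nevertheless a path-independent invariant requires combining the minlevel rigidity of Theorem \ref{thm.honest} with the parity/tenacity case analysis developed for Statement 2, and this is where the real technical work concentrates.
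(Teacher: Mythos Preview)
Your approach to Statement~2 is close in spirit to the paper's treatment of the analogous claim in the induction step (Claim~\ref{claimc.1}), and the existence of the bridge does go through. However, the justification ``triangle inequality $|\mn(u)-\mn(u')|\le 1$ along the edge $(u,u')$'' is wrong: adjacent vertices can have arbitrarily different minlevels. What actually pins down $k=(t_m-1)/2$ is that $\t(u)=t_m$ as well. For $k$ even this follows because an $\e(u')$ path (of length $t_m-k-1$) extended by the unmatched edge $(u',u)$ yields an odd path to $u$ of length $t_m-k$, so $\o(u)\le t_m-k$; combined with $\e(u)=k$ this forces $\t(u)\le t_m$. For $k$ odd the edge is matched and Lemma~\ref{lem.t-matched} gives $\t(u)=\t(u')=t_m$ directly. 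Your uniqueness argument is only a sketch; once one shows by the same reasoning that every vertex at position $\le (t_m-1)/2$ sits at its minlevel and every vertex at position $\ge (t_m+1)/2$ has tenacity exactly $t_m$ and sits at its maxlevel, uniqueness follows at once.

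The essential gap is Statement~1. Your first step---that all $\e(v)$ and $\o(v)$ paths start at a common unmatched vertex---is false. Take $f_1,f_2$ unmatched, each joined by an unmatched edge to $b$; let $(b,c)$ be matched, and let $c,d,e$ form a triangle with $(d,e)$ matched and $(c,d),(c,e)$ unmatched. There is no augmenting path (any walk from $f_1$ toward $f_2$ gets trapped in the triangle), so $l_m=\infty$, while $\t(d)=\o(d)+\e(d)=3+4=7=t_m$; yet $d$ has $\e(d)$ and $\o(d)$ paths from both $f_1$ and $f_2$. The flaw in your reasoning is the assertion that an alternating \emph{walk} of length $t_m$ between distinct unmatched vertices reduces to a simple augmenting path of length $\le t_m$: a first-intersection analysis leaves a ``same parity at the meeting vertex'' case that produces no contradiction. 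More importantly, even for a fixed source $f$ you have not shown that $F(p,v)$ is independent of $p$. Statement~2 furnishes one bridge per $\mx(v)$ path, but different $\mx(v)$ paths can carry different bridges, and $\mn(v)$ paths carry none; nothing in your outline identifies a single tenacity-$>t_m$ vertex common to all of them. The paper establishes this by an entirely different mechanism: an inner induction on $\mn(v)$ over $V_{t_m}(f)$, building the layered graph $H_v$ and using a DDFS-style highest-bottleneck argument (Claims~\ref{claimb.1}--\ref{claimb.5}) to show that every vertex strictly above the bottleneck has tenacity exactly $t_m$, followed by a separate reconciliation across sources (Claim~\ref{claimb.6}). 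That bottleneck induction is the substance of the lemma, and your proposal supplies no substitute for it.
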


\begin{proof}
The proof is quite elaborate and has been split into several claims. 

Let $f$ be an unmatched vertex such that $v \in V_{t_m}(f)$. We will first prove:
\begin{description}
\item
{\bf Statement 1':} The set $B_f(v)$ is a singleton, and $B_f(v) = A_f(v)$. 
\end{description}

\bigskip

\begin{claimb}
\label{claimb.0}
It suffices to prove Statement 1' and Statement 2 for inner vertices in $V_{t_m}(f)$.
\end{claimb}

\begin{proof}
From the proof of Lemma \ref{lem.t-matched}, it is easy to see that the matched neighbor of any outer vertex, say $v$, is an inner 
vertex\footnote{Observe that the reverse is not true, since both endpoints of a matched edge could be inner vertices.}, say $v'$.
Furthermore, by Lemma \ref{lem.min-matched} if $v$ is in $V_{t_m}(f)$, so is $v'$. Next observe that 
every $\mx(v')$ path yields a $\mx(v)$ path by removing the last edge, i.e, $(v, v')$, and 
every $\mx(v)$ path yields a $\mx(v')$ path by concatenating the edge $(v, v')$. Hence proving Statement 1' for $v'$ also proves it for $v$.
For Statement 2, by Lemma \ref{lem.min-matched}, proving it for $v'$ also proves it for $v$.
\end{proof}

Let $v$ be an inner vertex in $V_{t_m}(f)$. We will next define a graph $H_v$ whose structural properties will lead to a proof of the lemma. 
Let $A_f(v) = b$. Consider all $\e(v)$ and $\o(v)$ paths starting at $f$, and denote by $H_v$ all the vertices and edges on these paths.
Let $\B = \mn(b)$ and $\A = (t_m - 1)/2$; clearly $\A$ is the maximum possible minlevel of a vertex of tenacity $t_m$. 
By definition of $t_m$, every vertex on an $\e(v)$ or $\o(v)$ path is BFS-honest on it. 
$H_v$ is a layered graph, similar to graph $H$ defined in Section \ref{sec.DDFS}, with one exception (made for the sake of
convenient notation), namely it will have edges running between pairs of vertices at layer $\A$. $H_v$ has $\A + 1$ layers numbered $\A$ to 0. 

The layer of each vertex in $H_v$ is defined to be its minlevel. Thus layer 0 has only $f$ in it. Clearly, all the props run between adjacent 
layers of $H_v$. Furthermore, $H_v$ satisfies the DDFS Requirement (stated in Section \ref{sec.DDFS}), 
that starting from any vertex, there is a path to the lowest layer, i.e., vertex $f$.

\begin{claimb}
\label{claimb.bridge}
Let $p$ be an $\e(v)$ path in $H_v$. Then $p$ contains a bridge of tenacity $t$. 
\end{claimb}

\begin{proof}
Clearly, $p$ will contain an edge $(u, u')$ such that $\mn(u) = \mn(u') = \A$. We will show that $(u, u')$ is a bridge of tenacity $t$. 
Any $\mn(u)$ path concatenated with the edge $(u, u')$ gives a $\mx(u')$ path and vice versa.
Hence $\t(u) = \t(u') = t_m$. Furthermore, the predecessors of $u$ and $u'$ are vertices at with minlevel $\A - 1$. Hence $(u, u')$ is a bridge. 
\end{proof}

Let $(u, u')$ be a bridge of tenacity $t_m$ in $H_v$. Consider all possible $\mn(u)$ and $\mn(u')$ paths and say that vertex $w$ is 
a {\em bottleneck} if it lies on all such paths. Denote the highest (in minlevel) bottleneck by $H(u, u')$; clearly, this will be an outer vertex. 

\begin{claimb}
\label{claimb.1}
$H(u, u')$ lies on every $\o(v)$ path.
\end{claimb}

\begin{proof}
Suppose not, and consider an $\o(v)$ path that does not use $w$. This path can be extended to a
$\mn(u)$ or $\mn(u')$ path, thereby contradicting the fact that $H(u, u')$ is a bottleneck for all such paths. 
\end{proof}

Define the set $S(u, u')$ as follows. For each $\mn(u)$ and $\mn(u')$ path, include in $S(u, u')$ all vertices that are higher than $H(u, u')$. 

\begin{claimb}
\label{claimb.S}
For each vertex $w \in S(u, u')$, $\t(w) = t_m$. 
\end{claimb}

\begin{proof}
Clearly, DDFS run on $H_v$ with bridge $(u, u')$ must end with the bottleneck $H(u, u')$, and it will visit each vertex in $S(u, u')$. 
Therefore, for each $w \in S(u, u')$, the concatenation of the two paths established by the DDFS Certificate, together an $\e(H(u, u'))$ path
and the edge $(u, u')$, yields a $\mx(w)$ path, which shows that $\t(w) = t_m$. 
\end{proof}

\medskip

\subsubsection{The Inner Induction}
\label{sec.Inner-Induction}

A proof of Statement 1' requires an induction on $\mn(v)$, for $v \in V_{t_m}(f)$. We will show that for each $l \in [\B + 1, \A]$, all vertices in $H_v$ that have minlevel $l$ must have tenacity $t_m$, thereby proving that $B_f(v) = \{b\}$. 

{\bf Basis for the Inner Induction:} 
Let $v$ be a vertex of minimum minlevel in $V_{t_m}(f)$; clearly $v$ is inner. Let $A_f(v) = b$,
defined above. 

\begin{claimb}
\label{claimb.2}
Let $p$ be an $\o(v)$ path. Then the last edge of $p$ is $(b, v)$. Furthermore, there is a bridge $(u, u')$ in $H_v$ such that $H(u, u') = b$.
\end{claimb}

\begin{proof}
Suppose not, and let the last edge of $p$ be $(w, v)$. Now, an $\e(v)$ path concatenated with edge $(v, w)$ gives an odd
alternating path to $w$ thereby proving that $\t(w) = t_m$. However, $\mn(w) < \mn(v)$, which contradicts the choice of $v$,
thereby proving that the last edge of $p$ is $(b, v)$.

Let $(u, u')$ be a bridge on any $\e(v)$ path, say $q$. Since $p[b \ \mbox{to} \ v]$ is simply the edge $(b, v)$, the only bottleneck on
$q[b \ \mbox{to} \ v]$ is $b$. Hence $H(u, u') = b$.
\end{proof}

By Claim \ref{claimb.2}, for any $\e(v)$ path $p$, every vertex on $p(b \ \mbox{to} \ v]$ has tenacity $t_m$. 
Therefore $B_f(v) = A_f(v) = \{b\}$, hence proving the induction basis.

\medskip

{\bf Step for the Inner Induction:} 
Let $v$ be an inner vertex in $V_{t_m}(f)$ with $\mn(v) = l$, where $l \in (\B + 1, \A]$, and assume that Statement 2 holds for every 
vertex in $V_{t_m}(f)$ having minlevel $< l$. Consider all bridges of tenacity $t_m$ that lie on $\e(v)$ paths. For each one,
say $(u, u')$, consider the bottleneck $H(u, u')$. Among these bottlenecks, let $w$ be one having lowest minlevel.  
As shown in Claim \ref{claimb.1}, $w$ lies on all $\o(v)$ paths. 

\begin{claimb}
\label{claimb.3}
$w$ lies on every $\e(v)$ path.
\end{claimb}

\begin{proof}
Suppose $p$ is an $\e(v)$ path that does not contain $w$, and let $(u, u')$ be the bridge on this path. Now there are two cases: either there 
is an $\o(v)$ path $q$ such that $p$ shares a matched edge $(y, y')$ on $q(w \ \mbox{to} \ v)$ or there is no such $\o(v)$ path. In the first case,
$p[f \ \mbox{to} \ y] \bullet q[y \ \mbox{to} \ v]$ is an $\o(v)$ path not using $w$. In the second case, $H(u, u')$ is below $w$.
Both cases lead to contradictions. 
\end{proof}

Since $w$ is the highest bottleneck for all $\o(v)$ and $\e(v)$ paths, any vertex $z \in G_v$ with $\mn(z) > \mn(w)$, $z \in S(u, u')$ for 
some bridge $(u, u')$ of tenacity $t_m$ that lies on an $\e(v)$ path. Therefore, by Claim \ref{claimb.S}, $\t(z) = t_m$. 
Now there are two cases: $w = b$ and $w \neq b$. In the first case, we have established that $B_f(v) = A_f(v) = b$.

\begin{claimb}
\label{claimb.4}
If $w \neq b$, then $\t(w) = t_m$ and $B_f(w) = A_f(w) = b$.
\end{claimb}

\begin{proof}
If $\t(w) > t_m$, $A_f(v) = w$, leading to a contradiction. 
Furthermore, since $mn(w) < \mn(v)$, the induction hypothesis applies to $w$, and $B_f(w) = A_f(w)$. 

We next establish that $A_f(w) = b$.
Every $\e(w)$ path, $p$, can be extended to an $\o(v)$ path and therefore $F(p, w) = b$.
Suppose there is an $\o(w)$ path $q$ with $F(q, w) \neq b$. Then either there is an $\e(w)$ path $p$
such that $q$ shares a matched edge $(y, y')$ on $p(b \ \mbox{to} \ w)$ or there is no such $\e(w)$ path. In the first case,
$r = q[f \ \mbox{to} \ y] \bullet p[y \ \mbox{to} \ w]$ is an $\e(w)$ path such that $F(r, w) \neq b$. In the second case, 
$q \bullet p[w \ \mbox{to} \ b]$ is an $\o(b)$ path showing that $\t(b) = t_m$; here $p$ is any $\e(w)$ path.
Both cases lead to contradictions. Therefore $F(q, w) = b$, and hence $A_f(w) = b$. 
\end{proof}

\begin{claimb}
\label{claimb.5}
If $w \neq b$, then for every $\e(v)$ and $\o(v)$ path $p$, every vertex on $p(b \ \mbox{to} \ w)$ is of tenacity $t_m$.
\end{claimb}

\begin{proof}
Since $w$ occurs on every $\e(v)$ and $\o(v)$ path $p$ and is BFS-honest on it, $p$ consists of
an $\e(w)$ path concatenated with a appropriate path from $w$ to $v$.
Since $B_f(w) = b$, for any $\e(w)$ path $q$, every vertex on $q(b \ \mbox{to} \ w)$ is of tenacity $t_m$.
Hence every vertex on $p(b \ \mbox{to} \ w)$ is of tenacity $t_m$.
\end{proof}

Therefore we have established that $B_f(v) = A_f(v) = \{b\}$  in second case as well, i.e., $w \neq b$.
This completes the proof of Statement 1'. The next claim will enable us to prove Statement 1.

\begin{claimb}
\label{claimb.6}
Let $f$ and $f'$ be unmatched vertices. For every vertex $v \in V_{t_m}(f) \cap V_{t_m}(f')$, the following holds:
$B_f(v) = B_{f'}(v)$.
\end{claimb}

\begin{proof}
Suppose $B_f(v) \neq B_{f'}(v)$, and let $b = B_f(v)$ and $b' = B_{f'}(v)$, with $\mn(b) \leq \mn(b')$.
All $\e(v)$ and $\o(v)$ paths from $f$ use $b$ and not $b'$, and those from $f'$ use $b'$ and not $b$.
In particular, any $\e(b)$ path is vertex disjoint from any $\e(b')$ path. 

Let $S_1 \subseteq V_{t_m}(f)$ be the set of vertices\footnote{As shown in Figure \ref{fig.subset}, $S_1 \subset V_{t_m}(f)$
is possible.}
whose evenlevel and oddlevel paths contain $b$, and
$S_2 \subseteq V_{t_m}(f')$ be the set of vertices whose evenlevel and oddlevel paths contain $b'$. 
Let $S$ be the set of vertices of minimum minlevel in $S_1 \cap S_2$; clearly they are all inner. First consider the case that 
there is $v \in S$ that is adjacent to $b$ or $b'$, say the former. Then an $\o(v)$ path containing $b$ concatenated with an $\e(v)$ path 
containing $b'$ will be a simple alternating path and hence an augmenting path from $f$ to $f'$ of length $t_m$, contradicting the 
assumption $t_m < l_m$. 

Pick any $v \in S$ and let $(w, v)$ be the last edge on an $\o(v)$ path that contains $b$. Let $p$ be an $\e(v)$ 
path that contains $b'$. Then $p$ concatenated with the edge $(v, w)$ yields an $\o(w)$ path that contains $b'$.
Now applying Lemma \ref{lem.free} we get that $w \in S_1 \cap S_2$. Since $\mn(w) < \mn(v)$, we get a contradiction.
Hence $B_f(v) = B_{f'}(v)$.
\end{proof}

Claim \ref{claimb.6} immediately implies that $B(v)$ is a singleton, thereby proving Statement 1. 
It further implies that $S_1 = S_2$ and hence Statement 2 follows from Claim \ref{claimb.bridge}. 
\end{proof}

\begin{remark}  It is easy to construct an example in which $u, v \in V_{t_m} (f)$ and $B_f(u) \neq B_f(v)$.
\end{remark}


\begin{figure}[h]
\begin{center}
\includegraphics[scale = 0.4]{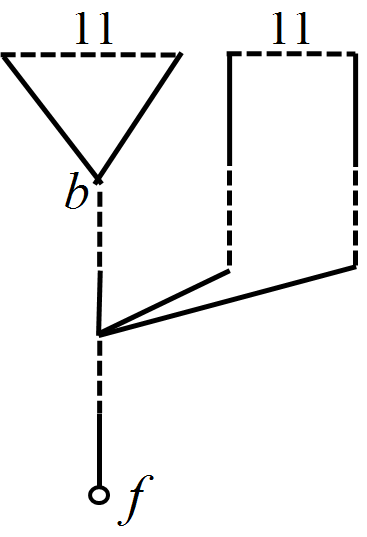}
\caption{$\CB_{b,11}$ is a proper subset of $V_{11}(f)$.}
\label{fig.subset}
\end{center}
\end{figure}


\subsection{Iterated Bases of a Vertex}
\label{sec.iterated-bases}

\definition{(Iterated bases of a vertex of tenacity $\leq t$)}\\
\label{def.iterated}
Let $t$ be an eligible tenacity, and assume \Ct \ holds. \\
Let $v$ be a vertex such that $\t(v) = t' \leq t$. The following are the iterated bases of $v$:\\
Define $\base^1(v) = \base(v)$, and for $k \geq 1$, if $\t(\base^k(v)) \leq t$, then define $\base^{k+1}(v) = \base(\base^k(v))$.

In the graph of Figures \ref{fig.BFSH} and \ref{fig.verten}, $\base^1(v) = a, \base^2(v) = b$ and $\base^3(v) = f$.

\definition{(Shortest path from an iterated base to a vertex)} \\
Let $t$ be an eligible tenacity, and assume \Ct \ holds. \\
Let $v$ be a vertex such that $\t(v) \leq t$, and let $k \in \Zplus$ such that $\t(\base^k(v)) \leq t$. Let $\base^{k+1}(v) = b$.
Then by an $\e(b; v)$ ($\o(b; v)$) path we mean a minimum even (odd) length alternating path from $b$ to $v$ that starts with an 
unmatched edge. 

\begin{proposition}
\label{prop.concat}
Let $t$ be an eligible tenacity, and assume \Ct \ holds. \\
Let $v$ be a vertex such that $\t(v) = t$ and $\base(v) = b$. Then every $\e(v)$ ($\o(v)$) path consists of an $\e(b)$ path 
concatenated with an $\e(b; v)$ ($\o(b; v)$) path.
\end{proposition}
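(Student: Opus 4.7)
The plan is to decompose $p$ at $b$ and verify that each piece is as advertised. Let $p$ be an $\e(v)$ path starting at an unmatched vertex $f$; by Lemma~\ref{lem.t-matched} the $\o(v)$ case reduces to this one via the matched neighbor of $v$. Since $B(v)=\{b\}$ by the definition of $\base$, we have $b=F(p,v)$, so $b$ lies on $p$, is outer, and has $\t(b)>t=\t(v)$. Writing $p=p[f \ \mbox{to} \ b]\bullet p[b \ \mbox{to} \ v]$, the proposition reduces to showing (i) that $p[f \ \mbox{to} \ b]$ is an $\e(b)$ path and (ii) that $p[b \ \mbox{to} \ v]$ is an $\e(b;v)$ path.

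Part (i) is a direct application of Theorem~\ref{thm.honest}: since $\t(b)>\t(v)$, the second assertion of that theorem gives $|p[f \ \mbox{to} \ b]|=\mn(b)$, and since $b$ is outer, $\mn(b)=\e(b)$. So $p[f \ \mbox{to} \ b]$ is an even-length alternating path of length $\e(b)$ from the unmatched vertex $f$ to $b$, i.e., an $\e(b)$ path. For part (ii) I would first observe that $p[b \ \mbox{to} \ v]$ begins with an unmatched edge and has even length $\e(v)-\e(b)$: since $f$ is unmatched, the edges of $p$ alternate starting with an unmatched edge, and $b$ sits at the even position $\e(b)$, so the next edge is unmatched. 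This makes $p[b \ \mbox{to} \ v]$ a valid candidate for an $\e(b;v)$ path and yields $|p[b \ \mbox{to} \ v]|\ge \e(b;v)$ immediately.

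The crux is the reverse inequality. I would argue by contradiction: suppose $\e(b;v)<\e(v)-\e(b)$, and let $q$ be an $\e(b;v)$ path. Then the concatenation $w=p[f \ \mbox{to} \ b]\bullet q$ alternates correctly at $b$ (the last edge of $p[f \ \mbox{to} \ b]$ is matched because the latter has even length, and $q$ begins with an unmatched edge), starts with an unmatched edge at $f$, and has even length $\e(b)+\e(b;v)<\e(v)$. If $w$ happens to be a simple path, then it is an even alternating path from the unmatched vertex $f$ to $v$ of length strictly smaller than $\e(v)$, contradicting the definition of $\e(v)$.

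The main obstacle is that $w$ may fail to be simple, i.e., $q$ may share an internal vertex $u\ne b$ with $p[f \ \mbox{to} \ b)$. My plan for that case is to compare the parity of $u$'s position $d_1$ on $p[f \ \mbox{to} \ b]$ with its position $d_2$ on $q$. When $d_1\equiv d_2\pmod 2$, the shortcut $p[f \ \mbox{to} \ u]\bullet q[u \ \mbox{to} \ v]$ remains an even alternating walk starting unmatched at $f$ and of strictly smaller length; iterating this reduction eventually produces a simple alternating path from $f$ to $v$ of length less than $\e(v)$, giving the contradiction. The delicate sub-case is the parity mismatch, where the matched edge at $u$ must be traversed on both visits of $w$; here one must recurse the analysis to $u$'s matched partner, and when $\t(u)\ge t$ invoke Theorem~\ref{thm.honest} to pin the canonical position of $u$ on $p$ and rule out the mismatch. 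This parity-mismatch step is what makes the proof more than a one-line consequence of Theorem~\ref{thm.honest} and is the main technical burden.
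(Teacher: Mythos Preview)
Your overall architecture matches the paper's: split $p$ at $b$, use Theorem~\ref{thm.honest} for the prefix, and reduce the suffix claim to ruling out a strictly shorter $\e(b;v)$ path $q$ by concatenating it with an $\e(b)$ prefix and deriving a contradiction. That much is fine, and your parity-match case can be made to work cleanly if you pick $u$ to be the \emph{last} vertex along $q$ lying in $p[f\ \mbox{to}\ b)$; then $p[f\ \mbox{to}\ u]\bullet q[u\ \mbox{to}\ v]$ is already simple and no iteration is needed.

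The genuine gap is the parity-mismatch case. Your proposed fix, ``recurse to $u$'s matched partner,'' does not dissolve the mismatch: if $u$ sits at an even position on $p$ and an odd position on $q$, then the matched neighbor $u'$ sits at an odd position on $p$ and an even position on $q$, so you are back where you started. Invoking Theorem~\ref{thm.honest} for $\t(u)\ge t$ pins $u$'s position on $p$ but says nothing about its position on $q$, and the case $\t(u)<t$ is not addressed at all. The paper sidesteps this by working with the lowest \emph{matched edge} $(w,w')$ of $p$ that $q$ also uses (with $w'$ even w.r.t.\ $p$), and splitting on which of $w,w'$ is odd w.r.t.\ $q$. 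One orientation yields an even $f$-to-$v$ path shorter than $\e(v)$, as you anticipated. The other orientation is the idea you are missing: it produces a short \emph{odd} alternating path from $f$ to $b$, giving $\o(b)+\e(b)\le t$ and hence $\t(b)\le t$, which contradicts $\t(b)>t$. That second contradiction, exploiting the tenacity of the base rather than the minimality of $\e(v)$, is exactly what closes the mismatch case.
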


\begin{proof}
Let $p$ be an $\e(b)$ path starting at unmatched vertex $f$ and $q$ be an $\e(b; v)$ path. If their concatenation is longer than 
$\e(v)$ then $q$ must intersect $p$ below $b$. Let $(w, w')$ be the lowest matched edge of $p$ used by $q$, where $w'$ is even w.r.t. $p$. 
Now, using the same arguments as those in the proof of Theorem \ref{thm.honest},
one can show that if $w$ is odd w.r.t. $q$ then there is an even path from $f$ to $v$ that is shorter than $\e(v)$ and
if $w'$ is odd w.r.t. $q$ then there is a short enough odd path from $f$ to $b$ which gives $\t(b) \leq t$. 
\end{proof}

The next lemma can be proven unconditionally and will be needed critically in Theorem \ref{thm.base}.

\begin{lemma}
\label{lem.stretch}
Let $v$ be vertex of eligible tenacity $t$, $p$ be a $\mn(v)$ path, and $b = F(p, v)$. Let $u$ be a vertex on $p[b \ \mbox{to} \ v]$ satisfying $\t(u) < t$, and assume that edge $(u, w)$ lies on $p$ with $w$ higher than $u$ and $\t(w) = t$. Then $u$ is BFS-honest w.r.t. $p$.
\end{lemma}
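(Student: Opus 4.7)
My plan is to derive a contradiction from assuming $u$ is not BFS-honest on $p$, by exploiting the edge $(u,w)$ together with the BFS-honesty of $w$ granted by Theorem~\ref{thm.honest}. First, I would pin down the local structure: by Lemma~\ref{lem.t-matched}, the edge $(u,w)$ cannot be matched, since a matched edge forces $\t(u)=\t(w)=t$, contradicting $\t(u)<t$. Hence $(u,w)$ is unmatched; and because $p$ starts from the unmatched vertex $f$ with an unmatched edge, unmatched edges on $p$ run from even to odd positions. Setting $i:=|p[f\ \text{to}\ u]|$, this places $u$ at the even position $i$ and $w$ at the odd position $i+1$ of $p$. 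Since $\t(w)=t=\t(v)$, Theorem~\ref{thm.honest} applies to $w$, giving $\o(w)=i+1$.

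Next, since $p[f\ \text{to}\ u]$ is itself an even alternating path, $\e(u)\le i$, and I want to prove equality. Suppose for contradiction that $\e(u)<i$, and let $q$ be an $\e(u)$-path from some unmatched vertex $f'$ to $u$. The key object is the walk $q\bullet(u,w)$: because $q$ ends at $u$ with a matched edge (when $\e(u)\ge1$) and $(u,w)$ is unmatched, this walk is alternating and has odd length $\e(u)+1<i+1=\o(w)$. If $q$ does not pass through $w$, the walk is already a simple odd path to $w$ shorter than $\o(w)$, contradicting minimality.

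The substantive step is handling the case where $q$ does pass through $w$, say at position $m<\e(u)$ along $q$. If $m$ is odd, then $q[f'\ \text{to}\ w]$ is itself an odd alternating path to $w$ of length $m<\e(u)<\o(w)$, a direct contradiction. If $m$ is even, then $q[f'\ \text{to}\ w]$ is an even alternating path, so $\e(w)\le m<\e(u)<i$; combining with $\e(w)+\o(w)=t$ gives $\o(w)=t-\e(w)>t-i$, and since $\o(w)=i+1$ this forces $t<2i+1$. This is where I expect the main obstacle — extracting the contradiction in the even subcase — and the resolution is to play this against the global bound from $p$ being a $\mn(v)$-path: since $u\neq v$ lies internally on $p$, we have $i<|p|=\mn(v)$, so $t=\mn(v)+\mx(v)\ge 2\mn(v)\ge 2(i+1)=2i+2$, contradicting $t<2i+1$. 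Since all cases lead to contradiction, $\e(u)=i$, i.e., $u$ is BFS-honest w.r.t.\ $p$.
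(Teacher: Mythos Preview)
Your proof is correct, but it follows a genuinely different route from the paper's.

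The paper's argument mimics the proof of Theorem~\ref{thm.honest}: it takes a shorter $\e(u)$ path $q$, looks at the \emph{first} vertex $z$ of $q$ lying on $p(u\ \mbox{to}\ v]$, and does a case split on the parity of $z$ \emph{on $p$}. In the nontrivial case ($z$ even on $p$), it splices $q[f\ \mbox{to}\ z]$ with the reversed segment $p[z\ \mbox{to}\ w]$ to manufacture a short even path to $w$, and combines this with the obvious bound $\o(w)<\mn(v)$ to force $\t(w)<t$.

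You instead exploit the edge $(u,w)$ directly: you first invoke Theorem~\ref{thm.honest} as a black box to pin down $\o(w)=i+1$, then analyze the single walk $q\bullet(u,w)$, casing on whether $w$ already occurs on $q$ and, if so, on its parity \emph{on $q$}. This avoids the ``first intersection with $p(u\ \mbox{to}\ v]$'' machinery entirely; the only global information about $p$ you need is the length bound $i<\mn(v)$, which together with $t\ge 2\,\mn(v)$ kills the even subcase. The trade-off is that your argument leans on Theorem~\ref{thm.honest} (for $w$) at the outset, whereas the paper's proof is closer to self-contained and reuses the same intersection pattern as that theorem. Both are short; yours is arguably the more transparent of the two, since each case terminates in a one-line numerical contradiction.
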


\begin{proof}
Assume w.l.o.g. that $p$ is an $\e(v)$ path (by Lemma \ref{lem.t-matched}). Clearly, $u$ is even w.r.t. $p$ and $w$ is odd w.r.t. $p$.
Suppose $u$ is not BFS-honest w.r.t. $p$, and let $q$ be an $\e(u)$ path, i.e., $|q| < |p[f \ \mbox{to} \ u]|$.

Consider the first vertex, say $z$, of $q$ that lies on $p(u \ \mbox{to} \ v]$ -- there must be such a vertex because otherwise
there is a shorter even path from $f$ to $v$ than $\e(v)$. If $z$ is odd w.r.t. $p$ then again we get an even path to $v$
that is shorter than $\e(v)$. Hence $z$ must be even w.r.t. $p$. Then, $q[f \ \mbox{to} \ z] \bullet p[z \ \mbox{to} \ w]$
is an even path to $w$ with length less than $\mn(v)$. Furthermore, since $w$ is odd w.r.t. $p$, $\o(w) < \mn(v)$.
Hence $\t(w) < \t(v)$, leading to a contradiction. 
\end{proof}


\subsection{Completing the Outer Induction}
\label{sec.Outer-Ind}

\begin{theorem}
\label{thm.base}
Let $t$ be an eligible tenacity, and let $v$ be a vertex of tenacity $t$. The following hold:
\begin{description}
\item
{\bf Statement 1:} The set $B(v)$ is a singleton.
\item
{\bf Statement 2:} Every $\mx(v)$ path contains a unique bridge of tenacity $t$.
\item
{\bf Statement 3:} The blossoms of tenacity at most $t$ form a laminar family.
\item
{\bf Statement 4:} Let $\base(v) = b$ and $u \in \bt$. Then every $\e(u)$ ($\o(u)$) path consists of an $\e(b)$ path
concatenated with an $\e(b; u)$ ($\o(b; u)$) path. Moreover, every $\e(b; u)$ and $\o(b; u)$ path lies in $\bt \cup \{b\}$.
Furthermore, every edge on every $\e(b; u)$ and $\o(b; u)$ path is of tenacity $\leq t$.
\item
{\bf Statement 5:} Let $\base(v) = b$, $p$ be an $\e(v)$ or $\o(v)$ path, and $u$ lie on $p(b \ \mbox{to} \ v]$.
Then $\base(u)$ lies on $p[b \ \mbox{to} \ v]$.
\end{description}

Additionally, for every vertex $v$ of tenacity $l_m$, every $\mx(v)$ path contains a unique bridge of tenacity $l_m$.
\end{theorem}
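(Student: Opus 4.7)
The plan is an outer induction on the eligible tenacity $t$, with base case $t = t_m$ supplied by Lemma~\ref{lem.min-t} for Statements~1 and~2. At this base, Statement~3 is essentially vacuous (no blossoms of smaller tenacity exist, and distinct tenacity-$t_m$ blossoms with different bases are disjoint by Statement~1), Statement~4 reduces to Proposition~\ref{prop.concat} plus a short contradiction ruling out exits from $\CB_{b,t_m} \cup \{b\}$, and Statement~5 is already implicit in the proof of Lemma~\ref{lem.min-t}, where every vertex of $p(b \ \mbox{to} \ v]$ is shown to have tenacity $t_m$.

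For the induction step, I assume Statements~1--5 hold for all eligible tenacities strictly less than $t$. To prove Statement~1 at tenacity $t$, I mirror Lemma~\ref{lem.min-t} but ``lift'' its construction through the nested smaller blossoms: fix $v$ with $\t(v)=t$ and an unmatched vertex $f$ from which both parities of path to $v$ exist, and build $H_v$ as in Section~\ref{sec.BaseCase-Outer}. By the inductive Statement~4, every $\e(u)$ or $\o(u)$ path for $\t(u) < t$ enters and exits its containing blossom only at the base, so each smaller blossom can be contracted to its base in $H_v$, giving a layered graph to which the inner induction of Section~\ref{sec.Inner-Induction} applies. BFS-honesty at higher-tenacity pivots (Theorem~\ref{thm.honest} and Lemma~\ref{lem.stretch}) guarantees the highest common bottleneck is well-defined and of tenacity $> t$; the case analysis parallels Claims~B.\ref{claimb.3}--B.\ref{claimb.5}, and independence of $f$ follows as in Claim~B.\ref{claimb.6}, using $t < l_m$ to exclude spurious augmenting paths. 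With $\base(v)$ now defined and $\bt$ instantiated via Definition~\ref{def.blossom}, Statement~2 is obtained by the level-balance argument of Claim~B.\ref{claimb.bridge}: every $\mx(v)$ path contains an edge at common midpoint level $(t-1)/2$ forced to be a bridge of tenacity $t$. Statement~4 then combines Proposition~\ref{prop.concat} with a contradiction (an exit from $\bt \cup \{b\}$ would yield either a shorter alternating path to $u$ or an odd path to $b$ violating $\base(v) = b$), and Statement~5 follows by iterating Statement~4 along $p$.

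The main obstacle is Statement~3, the laminarity of the set of blossoms of tenacity $\leq t$. Given inductive laminarity for tenacities $< t$, I would show that if $\CB_{b_1, t}$ and $\CB_{b_2, t'}$ with $t' \leq t$ share a vertex $u$, then one contains the other: Statement~5 applied to a minimum-length path to $u$ inside $\CB_{b_2, t'}$ forces the relevant iterated base of $u$ (namely $b_1$ when $t' = t$) to lie in $\CB_{b_2, t'} \cup \{b_2\}$, and descending through the laminar nesting gives full containment. The near-counter-example highlighted in Sections~\ref{sec.contributions} and~\ref{sec.laminar} shows this step is genuinely subtle, and its resolution appears to rest essentially on Lemma~\ref{lem.stretch} to force BFS-honesty at the ``joining'' vertex. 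Finally, the additional claim for $t = l_m$ is handled by rerunning Claim~B.\ref{claimb.bridge}: any $\mx(v)$ path with $\t(v) = l_m$ contains an edge at level $(l_m - 1)/2$ whose endpoints cannot be each other's predecessor (else an augmenting path of length $< l_m$ would arise), hence that edge is a bridge of tenacity $l_m$, with uniqueness on the given path being a direct length argument.
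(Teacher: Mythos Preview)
Your overall architecture---outer induction on tenacity, inner induction on minlevel, contraction of smaller blossoms in $H_v$---matches the paper's, but there is a real gap in your treatment of Statement~2 at the induction step, and a related ordering problem. You propose to obtain Statement~2 by rerunning ``the level-balance argument of Claim~B.\ref{claimb.bridge}'': the $\mx(v)$ path has an edge at level $(t-1)/2$ which must be a bridge of tenacity $t$. That argument is only valid at $t=t_m$, where every vertex on the path is BFS-honest. For general $t$, the bridge of tenacity $t$ on a $\mx(v)$ path can have \emph{both} endpoints of tenacity strictly less than $t$, sitting inside distinct blossoms of tenacity $t-2$; their minlevels in $G$ need not be $(t-1)/2$, and the midpoint picture breaks down. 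The paper's actual argument (Claim~C.\ref{claimc.1}) is a four-case analysis: partition the tenacity-$\geq t$ vertices on $p$ into those occurring at minlevel ($T_1$) versus maxlevel ($T_2$), take the last of $T_1$ and first of $T_2$, and then trace the path through the intervening tenacity-$<t$ blossoms using Statements~4 and~5 of the \emph{induction hypothesis} to pin down the bridge. This is not a corollary of Claim~B.\ref{claimb.bridge}. Moreover, the paper proves Statement~2 \emph{before} Statement~1' in the step, because the construction of $H_v$ for Statement~1' explicitly places vertices and edges at layer $\alpha$ according to which case of Claim~C.\ref{claimc.1} the bridge falls under; your reversed order leaves the top layer of $H_v$ underspecified.

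Your sketch for Statement~3 also diverges from the paper. The paper does not derive laminarity from Statement~5; it uses the separate chain of iterated-base lemmas in Section~\ref{sec.laminar} (Lemmas~\ref{lem.bb}--\ref{lem.bb2}, Proposition~\ref{prop.laminar}), which require only \Ct, i.e., Statement~1 at tenacity $t$. Your route via Statement~5 risks circularity unless you confine yourself to the inductive version at tenacity $<t$, and even then, comparing two tenacity-$t$ blossoms with distinct bases needs exactly the iterated-base containment of Lemma~\ref{lem.bb2}. Finally, the additional claim at $t=l_m$ in the paper is handled by observing that Claim~C.\ref{claimc.1} uses only the induction hypothesis at tenacities $<l_m$, so its full case analysis applies verbatim; your midpoint-edge argument is again insufficient here for the same reason as in the induction step.
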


\begin{proof}  \\
{\bf Basis for the Outer Induction:} Statements 1 and 2 are proven in Lemma \ref{lem.min-t}. As a result $\base(v) = b$, say, and blossom 
$\CB_{b, t_m}$ are unconditionally defined. If $d$ and $d'$ are distinct vertices of tenacity $> t_m$ then the blossoms $\CB_{d, t_m}$ and
$\CB_{d', t_m}$ are disjoint sets, since a vertex cannot have both $d$ and $d'$ as its base, leading to a proof of Statement 3 holds. 
Observe that vertex $u$ in Statement 4 must have tenacity $t_m$. 
The first part of this statement follows from Proposition \ref{prop.concat}. For the second part, by Lemma \ref{lem.min-t}, 
every vertex $w \neq b$ on an $\e(b; u)$ or $\o(b; u)$ path has tenacity $t_m$ and base $b$, i.e., lies in $\CB_{b, t_m}$. 
The third part is easy to see from the structural properties of graph $H_v$ established in Lemma \ref{lem.min-t}.
Hence Statement 4 holds. Vertex $u$ in Statement 5 must also have tenacity $t_m$, and hence its proof follows from Statement 4.

\bigskip

{\bf Step for the Outer Induction:} Let $t$ be an eligible tenacity, and assume that the theorem holds for tenacity $< t$.

\begin{claimc}
\label{claimc.1}
{\bf Statement 2}  holds.
\end{claimc}

\begin{proof}
Let $p$ be a $\mx(v) = \e(v)$ path and $q$ be a $\mn(v)$ path; clearly $|p| + |q| = t$.
By Theorem \ref{thm.honest}, each vertex $u$ of tenacity $\geq t$ on $p$ is BFS-honest w.r.t. $p$. Among these
let us partition the vertices having minlevel $\geq \B$ into two sets: $T_1$ ($T_2$) consists of vertices $u$ such that 
$|p[f \ \mbox{to} \ u]| = \mn(u)$ ($ = \mx(u)$). Clearly $b \in T_1$ and $v \in T_2$, hence both sets are non-empty. 
Let $a$ be the vertex in $T_1$ having the largest minlevel and $c$ be the vertex in $T_2$ having the smallest maxlevel. 
Now there are three cases.

{\bf Case 1:} $a$ and $c$ are adjacent on $p$ and $(a, c)$ is a matched edge. Since $a \in T_1$ and $c \in T_2$, $\t(a) \geq t$
and $\t(c) \geq t$. The concatenation of $p$ and $q$ can be viewed as the concatenation of an even and an odd path from $f$ to $a$, 
giving $\t(a) \leq t$. Combined with the previous statement we get $\t(a) = t$. By Lemma \ref{lem.t-matched} we get
$\t(a) = \t(c) = \t(a, c) = t$. Since $p$ assigns minlevel to $a$ and maxlevel to $c$, it must be the case that $\mn(a) = \A$ 
and $\mx(c) = \A + 1$. Furthermore, since $\t(a) = \t(c) = t$, we get $\mn(c) = \A$ and $\mx(a) = \A + 1$, i.e.,
$a$ and $c$ are both inner. Therefore $(a, c)$ is not a prop, and hence it is a bridge of tenacity $t$. 

{\bf Case 2:} $a$ and $c$ are adjacent on $p$ and $(a, c)$ is an unmatched edge. By arguments analogous to the previous case,
it is easy to see that $\t(a) = \t(c) = \t(a, c) = t$, and that $a$ and $c$ are both outer.
Therefore $(a, c)$ is not a prop, and hence it is a bridge of tenacity $t$. 

{\bf Case 3:} In the remaining case, $a$ and $c$ are not adjacent on $p$, i.e., there are vertices of tenacity $< t$ between $a$ and 
$c$ on $p$. Let $(c, c')$ be the unmatched edge on $p$. There are two cases:

{\bf Case 3a:} $(c, c')$ is a prop. If so, $c'$ lies in the blossom $\CB_{c, t-2}$. Let $(d, d')$ be the first edge on $p[c \ \mbox{to} \ f]$
that ``comes out of this blossom,'' i.e., $d \in \CB_{c, t-2}$ and $d' \notin \CB_{c, t-2}$. Clearly $(d, d')$ is unmatched.
Now there are two cases. If $d' = a$, then $(d, a)$ must be a bridge. The reason is that $a$, which gets its minlevel from $p$
must be outer, and all predecessors of $d$ lie inside $\CB_{c, t-2} \cup \{c\}$. 

Now, the concatenation of $p$ and 
$q$ can be viewed as the concatenation of an $\e(d)$ path, an $\e(a)$ path and the edge $(d, a)$ -- the only aspect requiring
justification is the first path, which follows from Statement 5 of the induction hypothesis applied to vertex $d \in \CB_{c, t-2}$.
Hence $(a, d)$ is a bridge of tenacity $t$. 

If $d' \neq a$, $\t(d') < t$ and so $d'$ lies in a blossom of tenacity $t-2$, say $\CB_{e, t-2}$, having base $e$. Now, by Statement 5 of
the induction hypothesis applied to $d' \in  \CB_{e, t-2}$ we get that every shortest even path from $d'$ to $f$
must use $e$. Hence the $e$ lies on $p[d' \ \mbox{to} \ a]$. Furthermore, $\t(e) \geq t$. Therefore $e = a$. 
Finally, by analogous arguments to the previous case, the concatenation of $p$ and 
$q$ can be viewed as the concatenation of an $\e(d)$ path, an $\e(d')$ path and the edge $(d, d')$, which shows that
$(d, d')$ is a bridge of tenacity $t$. 

{\bf Case 3b:} $(c, c')$ is a bridge. By similar arguments to the previous case we get that $c' \in \CB_{a, t-2}$ and that
$(c, c')$ is a bridge of tenacity $t$. 

Finally, we show that none of the remaining edges on $p$ is a bridge of tenacity $t$. Consider an edge $(d, e)$ on $p[f \ \mbox{to} \ a]$,
with $d$ below $e$ on $p$. If $\t(e) \geq t$ then $e$ is BFS-honest on $p$ and $(d, e)$ is a prop. If $t(d) < t$ then $d$ lies in a blossom 
of tenacity $t-2$ and hence by Statement 5 of the induction hypothesis, $\t(d, e) < t$. A similar argument holds for the edges 
on $p[c \ \mbox{to} \ v]$. This completes the proof of Statement 2.
\end{proof}

Let $f$ be an unmatched vertex such that $v \in V_{t}(f)$.
The structure of the proof of the next Claim is along the lines of the proof of Statement 1' in Lemma \ref{lem.min-t}.
Therefore, in the proof given below, our emphasis is on providing only the new ideas needed.

\begin{claimc}
\label{claimc.2}
The following holds:
\begin{description}
\item
{\bf Statement 1':} The set $B_f(v)$ is a singleton, and $B_f(v) = A_f(v)$. 
\end{description}
\end{claimc}

\begin{proof}
Once again it suffices to consider inner vertices only, see Claim \ref{claimb.0}. 
Let $v$ be an inner vertex in $V_{t}(f)$. We will define a graph $H_v$ whose structural properties will lead to a proof of 
Statement 1'. Let $A_f(v) = b$. Let $\B = \mn(b)$ and $\A = (t_m - 1)/2$; clearly $\A$ is the maximum possible minlevel of a vertex of 
tenacity $t$. $H_v$ is a layered graph, similar to graph $H$ defined in Section \ref{sec.DDFS}.

The edges of $H_v$ go from higher layers to lower layers and are not distinguished as matched or unmatched. Graph $H_v$ has 
$\A + 1$ layers, which are numbered from 0 to $\A$, with not necessarily all layers having vertices. 
As in Lemma \ref{lem.min-t}, we will make one exception to edges not running within the same layer:  the two end points 
of certain bridges of tenacity $t$ may both lie in layer $\A$. If so, we will add an edge connecting them.

The manner in which $H_v$ is obtained from the original graph $G = (V, E)$ is specified below. Its vertices will be a suitably chosen subset of $V$.  
The layer number of each vertex $w \in G_v$ is $\mn(w)$; in particular, layer 0 will contain only $f$. Each edge also has a specified length; for
edges not having layer $\A$ as one of the end points, the length of the edge is the difference of the layer numbers of its end points. 

We will show that $H_v$ satisfies the DDFS Requirement (stated in Section \ref{sec.DDFS}), namely starting from any vertex, there is a path to 
the lowest layer. Additionally, we will prove the following correspondence between paths in $H_v$ and alternating paths in $G = (V, E)$.

{\bf Correspondence of paths between $H_v$ and $G$:}
Corresponding to each simple path in $H_v$ from $u$ in layer $l$ to $w$ in layer $l'$, with $l > l'$, such that each edge of the path
goes from a higher to a lower layer, there is a simple alternating path of the same length in $G = (V, E)$.  
Furthermore, if the DDFS Guarantee gives disjoint paths from $a$ to $u$ and $c$ to $w$, for some bridge $(a, c)$ of tenacity $t$,
then there are vertex-disjoint simple alternating paths from $a$ to $u$ and $c$ to $w$ in $G = (V, E)$ 
of the same lengths.

Consider all $\e(v)$ and $\o(v)$ paths starting at $f$. By Theorem \ref{thm.honest}, every vertex of tenacity $\geq t$ on such a path 
is BFS-honest on it. Let $S$ denote all such vertices.  The vertex set of $H_v$ is $S \cup S'$, where $S'$ will be defined below. Its
edge set is $E' \cup E''$, where $E'$ is a specially chosen subset of $E$, and $E''$ are additional edges defined below. 
The length of each edge in $E'$ is unit and for edges in $E''$, the length is specified below. 
For each pair of vertices $u, w \in S$, if $(u, v) \in E$ and $\mn(u) \neq \mn(v)$, then $(u, v)$ is included in $E'$. In addition $E'$ will
contain all bridges of tenacity $t$, as pointed out below.

Next we define the edges of $E''$. Intuitively, these edges will replace ``sub-paths that lie inside blossoms of tenacity $t-2$.'' 
Let $w \in S$ and let $p$ be a $\mn(w)$ path that starts at $f$; clearly $p$ is part of an $\e(v)$ 
or $\o(v)$ path. Let $a$ and $c$ be vertices of tenacity $< t$ on $p$, with $a$ lower than $c$. Let $a'$ immediately precede $a$ and $c'$
immediately succeed $c$ on $p$. We will say that $p[a \ \mbox{to} \ c]$ is a {\em maximal contiguous stretch of vertices of tenacity 
$< t$ on $p$} if $\t(a') \geq t$, $\t(c') \geq t$ and all vertices on $p[a \ \mbox{to} \ c]$ are of tenacity $< t$.
By Lemma \ref{lem.stretch}, $c$ is BFS-honest w.r.t. $p$ and by Lemma \ref{lem.t-matched}, $|p[f \ \mbox{to} \ c]|$ must be even.
Hence by Statement 4 of the induction hypothesis, $p[a' \ \mbox{to} \ c] = \e(a'; c)$ and it lies in $\CB_{a', t-2} \cup \{a'\}$. 

Now, in lieu of the path $p[a' \ \mbox{to} \ c']$, the direct edge $(a', c')$ is added to $E''$, and its length is defined to be $|p[a' \ \mbox{to} \ c']|$.
Observe that the length equals the difference in the layer numbers of $c'$ and $a'$. 
Thus edge $(a', c')$ of graph $H_v$ 
represents the path $\e(a'; c) \bullet  (c, c')$ in the original graph $G = (V, E)$. This operation is performed on
every relevant sub-path of every $\e(v)$ and $\o(v)$ path. 

Finally we add vertices and edges to $H_v$ corresponding to each bridge of tenacity $t$; the precise addition depends on the
case in Claim \ref{claimc.1} satisfied by this bridge. In Case 1 and 2, we only need to add the edge $(a, c)$.

In the first case within Case 3a, we add vertex $d$ to $S'$ and assign it layer $\A$.
We also add edge $(c, d)$ to $E''$ and $(d, a)$ to $E'$. The length of $(c, d)$ is $|p[c \ \mbox{to} \ d]| = \e(c; d)$ and it
corresponds to an $\e(c; d)$ path in $G = (V, E)$.

In the second case within Case 3a, we add $d$ and $d'$ to $S'$, both at layer $\A$ and we add $(d, d')$ to $E'$. 
We also add edges $(c, d)$ and $(d' a)$ to $E''$. These correspond to an $\e(c; d)$ path and an $\e(a; d')$ path in $G = (V, E)$, respectively,
and their lengths are defined to be the lengths of these paths. 

In case 3b, we add $c'$ to $S'$, at layer $\A$, together with edge $(c, c')$ in $E'$. We also add edge $(c', a)$ to $E''$; it corresponds to 
an $\e(a; c')$ path in $G = (V, E)$ and its length is defined to be the length of this path. 

This completes the description of graph $H_v$. It is easy to verify that $H_v$ satisfies the DDFS Requirement, stated in Section \ref{sec.DDFS}). We now prove that the
correspondence of paths between $H_v$ and $G = (V, E)$ holds. Observe that if $(a, c)$ and $(d, e)$ are edges
in $E''$ on four distinct vertices, then they correspond to two paths that lie in two distinct blossoms of tenacity $t-2$. 
By Statement 3 of the induction hypothesis, i.e., laminarity  of blossoms, these two blossoms are vertex disjoint, thereby implying that 
the required paths through them are also vertex disjoint. 

We note that the remaining ideas needed to complete the proof of Statement 1' are identical to those used for proving 
Statement 1' in Lemma \ref{lem.min-t}.
\end{proof}

Again, Claim \ref{claimb.6} extends Statement 1' to Statement 1. 
At this point, the base of every vertex of tenacity $t$, and blossoms of tenacity $t$ are unconditionally defined. Hence
Statement 3 follows from Proposition \ref{prop.laminar}. 

The first part of Statement 4 follows from Proposition \ref{prop.concat}. The second part follows from the fact that all vertices
of tenacity $< t$ on $\e(v)$ and $\o(v)$ paths lie in blossoms of tenacity $t-2$, which by the definition of blossoms will be
nested inside $\bt$. The additional vertices referred to in the third part are those of tenacity $< t$ in $\bt$. Such a 
vertex $u$ lies in a blossom $\CB_{d, t-2}$, where $d$ is either $b$ or $d$ is a vertex of tenacity $t$ in $\bt$. The first case,
follows by Statement 4 of the induction hypothesis, and in the second case,
$b$ is an iterated base of $u$ and an $\e(b; d)$ path concatenated with an appropriate
path in $\CB_{d, t-2}$, which is guaranteed by Statement 4 of the induction hypothesis, yields the required path.
The structure of $H_v$ readily implies the third part of Statement 4, hence proving this statement fully.

If $\t(u) = t$, Statement 5 is obvious. Next assume $\t(u) < t$. Let $w$ be the first vertex on $p[u \ \mbox{to} \ v]$ having tenacity
$t$ and let $w'$ be the preceding vertex on this path; clearly $\t(w') < t$. Also, let $a$ be the last vertex on $p[b \ \mbox{to} \ u]$
of tenacity $\geq t$. By Lemma \ref{lem.stretch}, $w'$ is BFS-honest on $p$ and by Statement 4 of the induction hypothesis, 
$p[a \ \mbox{to} \ w']$ is an $e(a; w')$ path. Now, by Statement 5 of the induction hypothesis, $\base(u)$ lies on $p[a \ \mbox{to} \ w']$,
thereby completing the proof of Statement 5. This also completes the proof of the induction step.

To establish the last claim made in the theorem, let $v$ be a vertex of tenacity $l_m$. Observe that the arguments made in Claim \ref{claimc.1} for 
proving Statement 2 do not hinge on proving any of the subsequent statements for that value of tenacity. Hence the proof of this claim will
work for vertices of tenacity $l_m$ as well.
\end{proof}


\suppress{

Theorems \ref{thm.honest} and \ref{thm.base} give:

\begin{corollary}
\label{cor.bases}
For any vertex $v$, its iterated bases occur on every $\e(v)$ and $\o(v)$ path in a BFS-honest manner.
\end{corollary}

}

As stated in Section \ref{sec.Petal-Bud}, the notions of petal and bud are intimately related to the notions of blossom and base. This relationship is formally established in Lemma \ref{lem.all}. At the end of search level $i = (t-1)/2$, i.e., once MAX is done processing all bridges of tenacity $t$, all blossoms of tenacity $t$ can be identified as follows. The proof of this lemma is straightforward and is omitted.

\begin{lemma}
\label{lem.all}
Let $\t(v) = t$, and at the end of search level $i = (t-1)/2$, assume that $\bds(v)$ is $b$. 
Then $\base(v) = b$ and the set $S_{b, t}$ defined in 
Definition \ref{def.blossom}, for blossom $\bt$ is precisely $\{ u ~|~ \t(u) = t \ \mbox{and} \ \bds(u) = b \}$.
Furthermore, blossom $\bt$ consists of the union of all petals whose \bds \ is $b$ at the end of search level $i = (t-1)/2$, together
with each blossom of tenacity $(t - 2)$ whose base is $b$ or any of the vertices of these petals.
\end{lemma}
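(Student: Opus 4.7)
The plan is to prove the lemma by induction on the eligible tenacity $t$, using as the central lever the correspondence between the layered graph $H$ on which DDFS actually runs (constructed from $G$ via the $\bds$ pointers as in Section \ref{sec.H}) and the layered graph $H_v$ built inside the proof of Claim C.2 of Theorem \ref{thm.base}. Once I show these two graphs agree at every search level, the highest bottleneck returned by DDFS on a bridge $(r,g)$ of tenacity $t$ is forced to equal the common base $\base(v)$ of every vertex $v$ of tenacity $t$ in the support of $(r,g)$. From the identification $\bds(v) = \base(v)$, the characterization $S_{b,t} = \{u \mid \t(u) = t,\ \bds(u) = b\}$ is immediate by definition of $S_{b,t}$, and the decomposition of $\bt$ follows by unrolling Definition \ref{def.blossom} and substituting the inductive hypothesis on each nested $\CB_{v, t-2}$.

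For the base case $t = t_m$ no prior petals exist, so $\bds(v) = \bd(v)$ for every vertex $v$ of tenacity $t_m$, and the DDFS graph is already the $H_v$ of Lemma \ref{lem.min-t}. The proof of that lemma explicitly identifies the highest bottleneck $H(u,u')$ of a bridge $(u,u')$ with the base of every vertex in its support, giving $\bd(v) = \base(v)$. Since Definition \ref{def.blossom} yields $\CB_{b,t_m} = S_{b,t_m}$ (no blossoms of tenacity $t_m - 2$ exist), the algorithmic decomposition as a union of petals of tenacity $t_m$ with bud $b$ is immediate. For the inductive step, fix an eligible $t$ and assume the lemma for all smaller eligible tenacities. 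Then at the start of search level $(t-1)/2$, every vertex $w$ with $\t(w) < t$ lies in a petal whose iterated-bud sequence coincides with its iterated-base sequence, and $\bds$ contracts exactly these sequences. Consequently the layered graph $H$ invoked for a bridge $(r,g)$ of tenacity $t$ is precisely the graph of Claim C.2: each maximal contiguous stretch of vertices of tenacity $<t$ on a shortest alternating path in $G$ collapses in $H$ to a single edge connecting two vertices of tenacity $\geq t$. Statement 4 of Theorem \ref{thm.base} places each such stretch inside a blossom $\CB_{v, t-2}$, and Statement 3 (laminarity) makes these blossoms vertex-disjoint, so the DDFS Certificate on $H$ lifts to the disjoint alternating paths in $G$ required for the bottleneck argument. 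The highest bottleneck is therefore $A_f(v) = \base(v)$. Aggregating over all bridges of tenacity $t$ processed at this level yields the sets $S_{b,t}$, and the decomposition of $\bt$ is Definition \ref{def.blossom} with each nested $\CB_{v, t-2}$ replaced by the algorithmic union guaranteed by the inductive hypothesis.

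The main obstacle I anticipate is the order-dependence of the algorithmic petals at a fixed tenacity $t$: when two bridges of tenacity $t$ have overlapping supports, as in Figure \ref{fig.disjoint}, the petal found first absorbs the overlap and the subsequent petals are correspondingly trimmed. To preserve the invariant $\bds(v) = \base(v)$ under any processing order, I must show that later DDFSs on the remaining bridges of tenacity $t$, now run on an $H$ in which earlier petals have already been contracted, still arrive at the correct bottleneck $b$. This reduces to the uniqueness of $\base(v)$ guaranteed by Statement 1 of Theorem \ref{thm.base}, together with the observation that the contracted graph for a later bridge is a quotient of the graph of Claim C.2 in which only vertices of tenacity $\geq t$ persist; such a quotient cannot alter the identity of the highest bottleneck, so the final $\bds$ values are order-invariant and agree with $\base$.
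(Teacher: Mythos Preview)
The paper omits this proof entirely, calling it ``straightforward,'' so there is no paper argument to compare against; your inductive framework (identifying the algorithm's DDFS graph $H$ with the structural graph $H_v$ of Theorem~\ref{thm.base}) is the natural one and is broadly correct. However, one step as written is wrong and needs repair.

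You assert that ``the highest bottleneck returned by DDFS on a bridge $(r,g)$ of tenacity $t$ is forced to equal $\base(v)$'' for every $v$ in its support. This fails already in Figure~\ref{fig.left}: DDFS on $(l_1,l_2)$ terminates with bottleneck $a$, yet $\t(a)=11=t$, so $a$ cannot be the base of any tenacity-$11$ vertex; their base is $b$. A single petal's bud may itself have tenacity $t$. What is actually true is the weaker statement that, \emph{after all bridges of tenacity $t$ have been processed}, the iterated bud $\bds(v)$ reaches a vertex of tenacity $>t$, and that vertex is $\base(v)$. Your final paragraph approaches this but still misstates it: later DDFSs need not themselves ``arrive at the correct bottleneck $b$''; rather, a later DDFS places the earlier bud $a$ into a new petal, extending the $\bds$-chain one step. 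The argument you need is: (i) the bud of any petal formed at this level has tenacity $\geq t$ (vertices of smaller tenacity are already contracted in $H$); (ii) if the bud has tenacity exactly $t$, then by Statement~2 of Theorem~\ref{thm.base} it lies in the support of some bridge of tenacity $t$ and is therefore absorbed into another petal at the same search level; (iii) hence the $\bds$-chain from $v$ terminates at a vertex $c$ of tenacity $>t$; (iv) each bud along the chain lies on every $\e(v)$ and $\o(v)$ path by the bottleneck property together with Statement~4, so $c$ is the highest vertex of tenacity $>t$ on all such paths, i.e., $c=\base(v)$ by Statement~1. With this correction your proof goes through, and the decomposition of $\bt$ then follows directly from Definition~\ref{def.blossom} as you say.
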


Observe that if $\bds(v)$ is computed at the end of search level $j > i$, then it may not be $b$ anymore. However, it will be an
iterated base of $v$.


\section{Blossoms Form a Laminar Family}
\label{sec.laminar}

In this section, we will assume that \Ct \ holds for eligible tenacity $t$, and we will show that 
the set of blossoms of tenacity at most $t$ forms a laminar family. This will prove Statement 3 in the induction step in Theorem \ref{thm.base}.

\begin{figure}[ht]
\begin{minipage}[b]{0.28\linewidth}
\centering
\includegraphics[width=\textwidth]{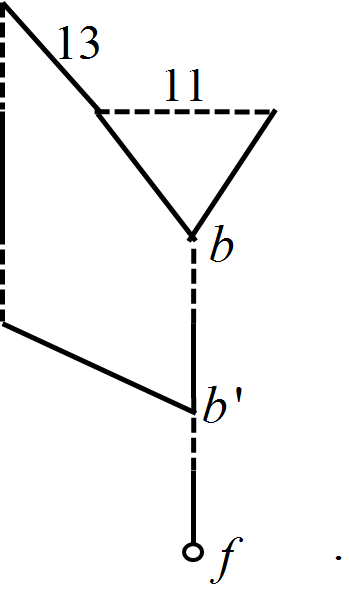}
\caption{$\CB_{b, 11} \subset \CB_{b', 13}$}
\label{fig.lam1}
\end{minipage}
\hspace{4cm}
\begin{minipage}[b]{0.36\linewidth}
\centering
\includegraphics[width=\textwidth]{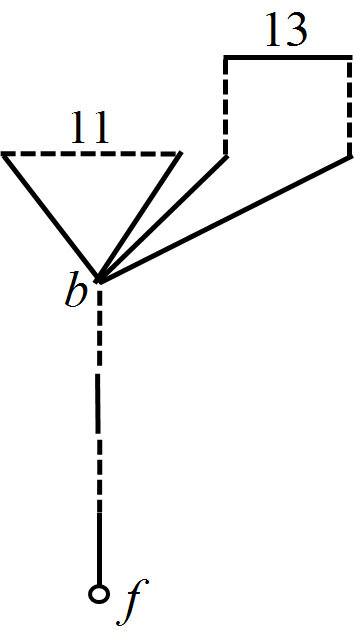}
\caption{$\CB_{b, 11} \subset \CB_{b, 13}$}
\label{fig.lam2}
\end{minipage}
\end{figure}

First let us present an attempt at constructing a counter-example. The subtlety of reason due to which
the counter-example fails should indicate that the proof will be non-trivial. In Figure \ref{fig.lam1}, $\CB_{b, 11} \subset \CB_{b', 13}$
and in Figure \ref{fig.lam2}, $\CB_{b, 11} \subset \CB_{b, 13}$. In Figure \ref{fig.lam3}, we have tried to ``combine'' the blossoms
so that $\CB_{b, 11}$ is contained in both $\CB_{b, 13}$ and $\CB_{b', 13}$ thereby giving a counter-example to laminarity.
However, observe that in the process of ``combining'' the blossoms, the tenacity of $b$ reduces from $\infty$ to 13, so that
$\CB_{b, 13}$ is not a valid blossom anymore.

\definition{\label{def.blossom} (Nesting depth of blossoms)}
Since blossoms were defined recursively, so is their nesting depth.
Let $b$ be an outer vertex and $t$ be an odd number such that $\t(b) > t$ and $t < l_m $. Define the nesting depth of blossom 
$\CB_{b, 1}$ to be $N(\CB_{b, 1}) = 0$.
Define the nesting depth of blossom $\bt$ to be 
\[ N(\bt) = 1 +  \left( \max_{v \in (S_t \cup \{b\}), \ v \ \mbox{outer}} N(\CB_{v, t-2})  \right)  \]
if $S_{b, t} \neq \emptyset$ and $N(\CB_{b, t-2})$ otherwise.

In the graph of Figures \ref{fig.BFSH} and \ref{fig.verten}, the nesting depths of these blossoms $\CB_{a, \alpha}$, $\CB_{b, \kappa}$ 
and $\CB_{f, \tau}$ are 1, 2 and 3, respectively.


\begin{figure}[h]
\begin{center}
\includegraphics[scale = 0.4]{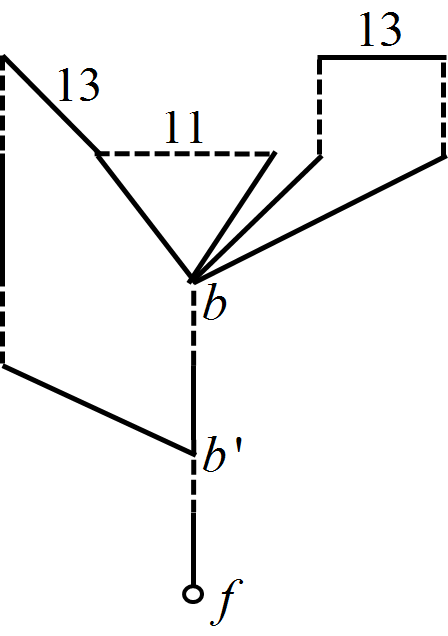}
\caption{Observe that $\t(b) = 13$.}
\label{fig.lam3}
\end{center}
\end{figure}


\begin{lemma}
\label{lem.bb}
Let $v \in \bt$. Then $\exists k$ such that $1 \leq k \leq N(\bt)$ and $b = \base^k(v)$. Furthermore, all the vertices
$\base(v),$ $\ldots , \base^{k-1} (v)$ belong to $\bt$.
\end{lemma}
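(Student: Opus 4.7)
The plan is to induct on the tenacity $t$ (equivalently, on the recursive depth in Definition \ref{def.blossom}), peeling off one layer of the recursive definition of $\bt$ at each step. The base case is $t=1$, which is vacuous since $\CB_{b,1}=\emptyset$. For the inductive step, I would assume the lemma for all blossoms of tenacity $<t$ and split according to the decomposition
\[ \bt = S_{b, t} \cup \left(\bigcup_{v' \in (S_{b, t} \cup \{b\}), \ v' \text{ outer}} \CB_{v', t-2} \right) . \]

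In Case 1, where $v \in S_{b,t}$, we have $\base(v)=b$ directly by definition of $S_{b,t}$, so $k=1$ works; the required bound $k \leq N(\bt)$ holds because $S_{b,t}\neq\emptyset$ forces $N(\bt) \geq 1$, and the intermediate list is empty so the second claim is vacuous. In Case 2, $v \in \CB_{v',t-2}$ for some outer $v' \in S_{b,t} \cup \{b\}$, and the induction hypothesis applied to $\CB_{v',t-2}$ yields $k' \leq N(\CB_{v',t-2})$ with $\base^{k'}(v)=v'$ and $\base^j(v) \in \CB_{v',t-2}$ for $1 \leq j \leq k'-1$.

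I would then split Case 2 into two subcases. If $v'=b$, take $k=k'$; the inequality $k \leq N(\bt)$ follows from the observation that in both branches of the nesting-depth definition one has $N(\bt) \geq N(\CB_{b,t-2})$ (namely $N(\bt) \geq 1 + N(\CB_{b,t-2})$ when $S_{b,t} \neq \emptyset$, and $N(\bt) = N(\CB_{b,t-2})$ otherwise), and the intermediate iterated bases lie in $\CB_{b,t-2} \subseteq \bt$. If instead $v' \in S_{b,t}$, take $k=k'+1$, so $\base^k(v) = \base(v')=b$; since $S_{b,t}\neq\emptyset$,
\[ N(\bt) \;=\; 1 + \max_{u \in S_{b,t} \cup \{b\},\, u \text{ outer}} N(\CB_{u,t-2}) \;\geq\; 1 + N(\CB_{v',t-2}) \;\geq\; k'+1 \;=\; k, \]
and the intermediate bases $\base(v),\ldots,\base^{k'-1}(v)$ lie in $\CB_{v',t-2} \subseteq \bt$ while $\base^{k'}(v)=v' \in S_{b,t} \subseteq \bt$.

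I do not anticipate any real obstacle: the statement is essentially a bookkeeping consequence of Definition \ref{def.blossom} together with the definition of iterated bases. The only mild point to be careful about is matching the two clauses in the nesting-depth definition (in particular the degenerate clause where $S_{b,t}=\emptyset$ collapses $N(\bt)=N(\CB_{b,t-2})$), which the case split above handles cleanly. Note that we use \Ct \ (assumed throughout this section) only to ensure that $\base(\cdot)$ and hence its iterates are well-defined on all vertices of tenacity $\leq t$, so that the statement itself is meaningful.
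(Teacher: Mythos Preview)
Your proposal is correct and follows essentially the same approach as the paper: both proofs peel off one layer of the recursive definition of $\bt$, split into the cases $v\in S_{b,t}$ versus $v\in\CB_{v',t-2}$ for some outer $v'\in S_{b,t}\cup\{b\}$, and then subdivide the latter according to whether $v'=b$ or $v'\in S_{b,t}$. The only cosmetic difference is that the paper inducts on the nesting depth $N(\bt)$ rather than on $t$, but the two inductions unwind the same recursion and the case analysis is identical.
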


\begin{proof}
By induction on the nesting depth of blossom $\bt$. If $N(\bt) = 1$, by definition, $b = \base(v)$. 
To prove the induction step, suppose $N(\bt) = l + 1$. Now, if $v \in S_{b, t}$, i.e., $\t(v) = t$, then $\base(v) = b$ and we are done.

Otherwise, $\exists u \in S_{b, t} \cup \{b\}$ such that $v \in \CB_{u, t-2}$. Clearly $N(\CB_{u, t-2}) \leq l$ and either $u = b$ or 
$\base(u) = b$. By the induction hypothesis, $\exists ! k$ such that $l \geq k \geq 1$ and $u = \base^k(v)$. 
If $u = b$, $b = \base^k(v)$, and by the induction hypothesis, $\base(v),$ $\ldots , \base^{k-1} (v)$ belong to $\CB_{b, t-2}$ and
hence also to $\bt$.

If $u \neq b$, $b = \base^{k+1}(v)$ and $k+1 \leq l+1$.
Now, by the induction hypothesis, $\base(v),$ $\ldots , \base^{k-1} (v)$ belong to $\CB_{u, t-2}$.
Hence $\base(v),$ $\ldots , \base^{k} (v)$ belong to $\bt$.
\end{proof}

\begin{lemma}
\label{lem.bb1}
Let $t \leq t' < \t(b)$ and $t' < l_m$, and let
$\bt$ and $\CB_{b, t'}$ be two blossoms with the same base $b$. Then $\bt \subseteq \CB_{b, t'}$.
\end{lemma}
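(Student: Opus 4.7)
My plan is to reduce the lemma to an iterated application of the single-step containment $\CB_{b, s-2} \subseteq \CB_{b, s}$, and to prove the latter directly from the recursive definition of blossom. More precisely, I would induct on $(t' - t)/2$. The base case $t = t'$ is immediate. For the inductive step, since $t$ and $t'$ are both odd and $t < t'$, we have $t \leq t' - 2$; also $t' - 2 < t' < \t(b)$ and $t' - 2 < l_m$, so $\CB_{b, t'-2}$ is a well-defined blossom with base $b$. The inductive hypothesis (applied with the pair $t, t'-2$) then gives $\bt \subseteq \CB_{b, t'-2}$, and it remains to establish the one-step containment $\CB_{b, t'-2} \subseteq \CB_{b, t'}$.

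To establish this one-step containment, I would unfold Definition \ref{def.blossom}: since $b$ is outer and $b \in \{b\} \subseteq S_{b, t'} \cup \{b\}$, the set $\CB_{b, t'-2}$ appears explicitly as one of the terms in the union
\[
\CB_{b, t'} \;=\; S_{b, t'} \;\cup\; \left(\bigcup_{v \in (S_{b, t'} \cup \{b\}),\ v \text{ outer}} \CB_{v, t'-2}\right),
\]
so $\CB_{b, t'-2} \subseteq \CB_{b, t'}$ on the nose. Composing with the inductive inclusion yields $\bt \subseteq \CB_{b, t'}$, finishing the step.

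The only point that needs genuine justification, and the place I would be most careful, is the claim that $b$ is outer. This is inherited from the hypothesis that $\CB_{b, t'}$ is a blossom with base $b$: by Definition \ref{def.blossom}, a blossom is built only from an outer vertex of suitably high tenacity, and as noted in the text following Definition \ref{def.base}, the base of any vertex is outer. Once this is in hand, the recursion of Definition \ref{def.blossom} does the work, and no appeal to Theorem \ref{thm.base} or to the structural induction of Section \ref{sec.base} is required; the lemma is essentially a bookkeeping consequence of how blossoms are built up by increasing tenacity.
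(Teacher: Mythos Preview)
Your proof is correct and is essentially identical to the paper's: both induct on $t'-t$, handle the base case $t=t'$ trivially, and for the step combine the inductive inclusion $\bt \subseteq \CB_{b,t'-2}$ with the one-step containment $\CB_{b,t'-2} \subseteq \CB_{b,t'}$ read off directly from Definition~\ref{def.blossom}. Your extra care in spelling out why $b$ is outer and why $\CB_{b,t'-2}$ appears as a term in the defining union is fine but not strictly needed, since these are built into the hypotheses of the recursive definition.
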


\begin{proof}
The proof is by induction on $t' - t$. The base case, i.e., $t = t'$, is obvious. Assume the induction hypothesis
that $\bt \subseteq \CB_{b, t' - 2}$. Now, by Definition \ref{def.blossom} it is straightforward that $\CB_{b, t' - 2} \subseteq \CB_{b, t'}$.
Hence $\bt \subseteq \CB_{b, t'}$.
\end{proof}

\begin{lemma}
\label{lem.vinb}
Let $\bt$ be a blossom with base $b$ and tenacity $t < l_m$, and $v$ be a vertex satisfying \; $\base^k (v) = b$ for some $k \geq 1$.
If $t \geq \t(base^{k-1} (v))$ then $v \in \bt$.
\end{lemma}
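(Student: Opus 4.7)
The plan is to prove the lemma by a double induction: an outer induction on the eligible tenacity $t$ and an inner induction on $k$. The base case $k = 1$ is unconditional. If $\base(v) = b$ with $\t(v) = t' \leq t$, then $v \in S_{b, t'} \subseteq \CB_{b, t'}$, and Lemma \ref{lem.bb1} (using $t' \leq t < \t(b)$ and $t < l_m$) gives $\CB_{b, t'} \subseteq \CB_{b, t} = \bt$. The outer base case $t = t_m$ allows only $k = 1$, because tenacities strictly increase along iterated bases and no vertex has tenacity below $t_m$.

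For the inner step with $k \geq 2$, I would let $u = \base(v)$ and observe that $\base^{k-1}(u) = b$ and $\t(\base^{k-2}(u)) = \t(\base^{k-1}(v)) \leq t$, so the inner IH gives $u \in \bt$. Because tenacity strictly increases along iterated bases, $\t(u) < \t(\base^{k-1}(v)) \leq t$, hence $\t(u) \leq t - 2$ and $u \notin S_{b, t}$. By the recursive definition of $\bt$, there exists an outer vertex $w \in S_{b, t} \cup \{b\}$ with $u \in \CB_{w, t-2}$.

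The crux is then an application of Lemma \ref{lem.bb} to $u \in \CB_{w, t-2}$: it yields $w = \base^j(u) = \base^{j+1}(v)$ for some $j \geq 1$, and $\base(u), \ldots, \base^{j-1}(u) \in \CB_{w, t-2}$. Since iterated bases are unique, $j$ is pinned down. If $w = b$, then $j = k-1$, so $\base^{k-1}(v) = \base^{k-2}(u) \in \CB_{b, t-2}$, giving $\t(\base^{k-1}(v)) \leq t-2$; applying the lemma to $\CB_{b, t-2}$ with the same $k$ (permitted by the outer IH, since $t-2 < t$) gives $v \in \CB_{b, t-2} \subseteq \bt$. If $w \neq b$ (so $w \in S_{b,t}$), then $j + 1 < k$ and, together with $\base(v) = u \in \CB_{w, t-2}$, we get $\base(v), \ldots, \base^j(v) \in \CB_{w, t-2}$, so $\t(\base^j(v)) \leq t - 2$; applying the lemma to $\CB_{w, t-2}$ with parameter $j+1$ via the outer IH gives $v \in \CB_{w, t-2} \subseteq \bt$. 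Containment $\CB_{w, t-2} \subseteq \bt$ in each case is immediate from the definition of $\bt$ because $w$ is outer and lies in $S_{b,t} \cup \{b\}$.

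The main obstacle is not any single step but the bookkeeping of the two inductions: handling $k \geq 2$ requires passing to a nested blossom of strictly smaller tenacity (which is why outer induction on $t$ is essential), while showing $u \in \bt$ requires a step at the same tenacity (which is why inner induction on $k$ is essential). The technical lever that makes everything fit is uniqueness of iterated bases, which forces $w$ to coincide with a specific $\base^{j+1}(v)$ and lets us read off from Lemma \ref{lem.bb} that $\t(\base^j(v)) \leq t - 2$.
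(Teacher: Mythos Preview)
Your approach is essentially correct but has a small gap and is more elaborate than needed. The gap: in the inner step you assert $\t(u) < \t(\base^{k-1}(v))$, but for $k=2$ one has $u=\base(v)=\base^{k-1}(v)$, so this is an equality; when additionally $\t(u)=t$, your conclusion $u\notin S_{b,t}$ fails and the ensuing decomposition via $\CB_{w,t-2}$ does not apply. The fix is easy: in that situation $u\in S_{b,t}$ is outer, $\t(v)\le t-2$, and the $k=1$ base case applied to $\CB_{u,t-2}$ gives $v\in\CB_{u,t-2}\subseteq\bt$ directly.

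The paper takes a shorter route: a single induction on $k$, setting $u=\base^{k-1}(v)$ (the \emph{top} of the chain) rather than $\base(v)$. With $r=\t(u)\le t$ one has $\t(\base^{k-2}(v))\le r-2$, so the induction hypothesis for $k-1$ applied to the blossom $\CB_{u,r-2}$ yields $v\in\CB_{u,r-2}$; then $\CB_{u,r-2}\subseteq\CB_{b,r}\subseteq\bt$ by Definition~\ref{def.blossom} and Lemma~\ref{lem.bb1}. Because the statement is universally quantified over all blossoms, changing both the base and the tenacity in the inductive call is legitimate, and this collapses your double induction, eliminates the appeal to Lemma~\ref{lem.bb}, and removes the case split on $w$. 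Your bottom-up choice of $u$ forces you to first locate $u$ inside $\bt$ and then identify the sub-blossom containing it; the top-down choice names that sub-blossom immediately.
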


\begin{proof}
The proof is by induction on $k$. In the base case, i.e., $k = 1$, $\base(v) = b$. Let $\t(v) = r$, clearly $r \leq t$.
By Definition \ref{def.blossom}, $v \in \CB_{b, r}$ and by Lemma \ref{lem.bb1}, $\CB_{b, r} \subseteq \bt$.
Hence $v \in \bt$.

For the induction step, let $\base^{k-1} (v) = u$, $\t(u) = r \leq t$. By the induction hypothesis, 
$v \in \CB_{u, r-2}$. Since $\base(u) = b$, by Definition \ref{def.blossom}, $\CB_{u, r-2} \subseteq \CB_{b, r}$ and by Lemma \ref{lem.bb1},
$\CB_{b, r} \subseteq \bt$. Hence $v \in \bt$.
\end{proof}

\begin{lemma}
\label{lem.bb2}
Let $\bt$ and $\CB_{b', t'}$ be two blossoms such that $b \in \CB_{b', t'}$. Then $\bt \subset \CB_{b', t'}$.
\end{lemma}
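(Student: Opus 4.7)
The plan is to chain the iterated–base descriptions from Lemma \ref{lem.bb} and then invoke the membership criterion of Lemma \ref{lem.vinb} to establish set inclusion; strictness will then come essentially for free because the base of a blossom is never one of its own vertices.

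More concretely, I would first apply Lemma \ref{lem.bb} to the hypothesis $b \in \CB_{b', t'}$: this yields an integer $k \geq 1$ with $b' = \base^k(b)$, and such that the intermediate iterated bases $\base(b), \ldots, \base^{k-1}(b)$ all lie in $\CB_{b', t'}$ (with the convention $\base^0(b) = b$). In particular, each such vertex, having been produced as a member of the blossom $\CB_{b', t'}$, satisfies the tenacity bound $\t(\base^i(b)) \leq t'$ which follows from Definition \ref{def.blossom}.

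Next, take an arbitrary $v \in \bt$ and apply Lemma \ref{lem.bb} again, this time inside $\bt$, to obtain an integer $j \geq 1$ with $b = \base^j(v)$ and $\base(v), \ldots, \base^{j-1}(v) \in \bt$. Concatenating the two iterated–base sequences gives
\[
b' \;=\; \base^{k}(b) \;=\; \base^{k}\bigl(\base^{j}(v)\bigr) \;=\; \base^{j+k}(v),
\]
and in particular $\base^{j+k-1}(v) = \base^{k-1}(b)$. By the first paragraph, this vertex has tenacity at most $t'$ (if $k = 1$ this is $\t(b) \leq t'$ from $b \in \CB_{b', t'}$; if $k > 1$ it follows from $\base^{k-1}(b) \in \CB_{b', t'}$). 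Hence the hypothesis of Lemma \ref{lem.vinb} is met, and we conclude $v \in \CB_{b', t'}$. Since $v$ was arbitrary, $\bt \subseteq \CB_{b', t'}$.

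Finally, to upgrade inclusion to strict inclusion, I would invoke the observation (noted immediately after Definition \ref{def.blossom}) that $b \notin \bt$. Since the hypothesis gives $b \in \CB_{b', t'}$, the vertex $b$ witnesses $\CB_{b', t'} \setminus \bt \neq \emptyset$, so $\bt \subsetneq \CB_{b', t'}$. The only subtle point to handle carefully is the bookkeeping of iterated bases when $k = 1$ versus $k > 1$; aside from this minor case split, the argument is a clean composition of Lemmas \ref{lem.bb} and \ref{lem.vinb} and is not expected to be the hard step.
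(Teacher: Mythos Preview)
Your proof is correct. It uses the same two ingredients as the paper---Lemma \ref{lem.bb} to express both $b$ inside $\CB_{b',t'}$ and $v$ inside $\bt$ via iterated bases, and Lemma \ref{lem.vinb} to pull $v$ into $\CB_{b',t'}$---but you organize them more economically. The paper proceeds by induction on $k$ (the number of iterated-base steps from $b$ to $b'$), using Lemma \ref{lem.bb1} and the recursive definition of blossoms at each step, and invoking Lemma \ref{lem.vinb} only in the inductive step. You instead concatenate the two iterated-base chains to get $b' = \base^{j+k}(v)$ and apply Lemma \ref{lem.vinb} once with exponent $j+k$; since Lemma \ref{lem.vinb} already carries its own induction on the exponent, this absorbs the paper's outer induction and makes the argument shorter. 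Your handling of strictness via $b \in \CB_{b',t'} \setminus \bt$ matches the paper's, and your case split $k=1$ versus $k>1$ for checking $\t(\base^{k-1}(b)) \leq t'$ is exactly the right bookkeeping.
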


\begin{proof}
By Lemma \ref{lem.bb}, there is a $k \geq 1$ such that $b' = \base^k(b)$ 
and $\base^1 (b), \ldots, \base^{k-1} (b) \in \CB_{b', t'}$. Clearly, $t' \geq \t(\base^{k-1} (b))$. To prove the statement,
we will apply induction on $k$. 

For the base case, i.e., $k = 1$, let $\t(b) = r$. Clearly, $t < r \leq t'$ and $\bt \subset \CB_{b, r-2}$.
By Definition \ref{def.blossom}, $\CB_{b, r-2} \subset \CB_{b', r}$, where the containment is proper since $b$ is not in the 
first blossom but it is in the second one. By Lemma \ref{lem.bb1}, $\CB_{b', r} \subseteq \CB_{b', t'}$.
Hence $\bt \subset \CB_{b', t'}$.

For the induction step assume $\base^{k+1} (b) = b'$.
Let $\base^{k} (b) = v$, and let $\t(v) = r$. Since $v \in \CB_{b', t'}$, $r \leq t'$. Clearly, $\t(\base^{k-1} (b)) \leq r-2$. 
Therefore, by Lemma \ref{lem.vinb}, $b \in \CB_{v, r-2}$. Furthermore, since $\base^{k} (b) = v$, by the induction hypothesis, 
$\bt \subset \CB_{v, r-2}$.

Since $\base(v) = b'$, by Definition \ref{def.blossom}, $\CB_{v, r-2} \subseteq \CB_{b', r}$. Since $r \leq t'$, 
$\CB_{b', r} \subseteq \CB_{b', t'}$. Hence, $\bt \subset \CB_{b', t'}$.
\end{proof}

\begin{proposition}
\label{prop.laminar}
For eligible tenacity $t$, assume \Ct \ holds. \\
The set of blossoms of tenacity at most $t$ forms a laminar family, i.e., two such blossoms are either disjoint or one is contained in the other.
\end{proposition}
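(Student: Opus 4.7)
The plan is to reduce the laminarity statement to the combinatorial structure of iterated bases, exploiting the four preceding lemmas. Given two blossoms $\bt$ and $\CB_{b', t'}$ that share a common vertex $v$, Lemma~\ref{lem.bb} yields integers $k, k' \geq 1$ with $b = \base^k(v)$ and $b' = \base^{k'}(v)$. Since the iterated-base sequence $v, \base(v), \base^2(v), \ldots$ of a single vertex is a chain, we may assume without loss of generality that $k \leq k'$; then either $b = b'$ (when $k = k'$) or $b'$ is itself an iterated base of $b$, namely $b' = \base^{k'-k}(b)$.

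In the first case, the two blossoms share a base $b = b'$. By the definition of a blossom base we have $\t(b) > t$ and $\t(b) > t'$, so after ordering the two tenacities (say $t \leq t'$), Lemma~\ref{lem.bb1} applies and gives $\bt \subseteq \CB_{b, t'} = \CB_{b', t'}$. In the second case, Lemma~\ref{lem.bb} applied to $v \in \CB_{b', t'}$ guarantees that all the vertices $\base(v), \ldots, \base^{k'-1}(v)$ belong to $\CB_{b', t'}$; since $k < k'$, the vertex $b = \base^k(v)$ is one of these and therefore $b \in \CB_{b', t'}$. Lemma~\ref{lem.bb2} then delivers the containment $\bt \subset \CB_{b', t'}$.

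In either case one blossom contains the other, which is exactly the laminarity conclusion. I expect the proof itself to be short once the infrastructure is in place; the genuine work has already been carried out in Lemmas~\ref{lem.bb}--\ref{lem.bb2}. The only subtlety to verify is the tenacity inequalities needed to invoke Lemma~\ref{lem.bb1} when $b = b'$, and the symmetry argument that justifies taking $k \leq k'$ without loss of generality. The failed counterexample attempt from Figure~\ref{fig.lam3} hints at where the difficulty secretly lives: once one tries to ``combine'' two blossoms sharing a sub-blossom, the tenacity of the would-be common base is forced down, destroying the candidate blossom --- a phenomenon that is captured precisely by the chain structure of iterated bases underlying Lemma~\ref{lem.bb}.
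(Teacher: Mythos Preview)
Your proposal is correct and follows essentially the same route as the paper: pick a common vertex $v$, invoke Lemma~\ref{lem.bb} to realize both bases as iterated bases $\base^k(v)$ and $\base^{k'}(v)$, handle the case $b=b'$ via Lemma~\ref{lem.bb1}, and in the case $k<k'$ use the second claim of Lemma~\ref{lem.bb} to place $b$ inside the larger blossom and finish with Lemma~\ref{lem.bb2}. The only cosmetic difference is that the paper names the two blossom tenacities $t'$ and $t''$ to avoid overloading the symbol $t$ already fixed by the hypothesis \Ct, and it closes by observing explicitly that none of the invoked lemmas require blossoms of tenacity exceeding $t$.
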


\begin{proof}
Let $t' \leq t$ and $t'' \leq t$.
Suppose $v$ lies in blossoms $\CB_{b, t'}$ and $\CB_{b', t''}$. If $b = b'$, we are done by Lemma \ref{lem.bb1}.
Next assume that $b \neq b'$. Then by the first claim in Lemma \ref{lem.bb}, $b = \base^k(v)$ and $b' = \base^l(v)$, for some
$k$ and $l$. Since $b \neq b'$, $k \neq l$. Let us assume $k < l$.
By the second claim in Lemma \ref{lem.bb}, $b = \base^k(v) \in \CB_{b', t''}$. Finally, by Lemma \ref{lem.bb2}, 
$\CB_{b, t'} \subset \CB_{b', t''}$. Observe that none of the lemmas used assumed existence of blossoms of tenacity $> t$,
hence the proposition follows.
\end{proof}

\suppress{

\section{Every maxlevel path contains a bridge}
\label{sec.bridge}

The ``agent'' that triggers a call to the procedure DDFS is a bridge of a certain tenacity, say $t$. In this call, DDFS finds the support of this
bridge and assigns maxlevels to all these vertices. The next theorem proves that every vertex of tenacity $t$, with $t \leq l_m$, lies
in the support of a bridge of tenacity $t$. Hence if the algorithm does call DDFS with every bridge of tenacity $\leq l_m$, then every
vertex will be assigned a maxlevel.

\begin{theorem}
\label{thm.bridge}
Let $v$ be a vertex of eligible tenacity, and let $p$ be a $\mx(v)$ path. Then there exists a unique bridge of tenacity $t$ on $p$.
\end{theorem}

\begin{proof}
Let $p$ start at unmatched vertex $f$. If $t < l_m$, let $\base(v) = b$. By Lemma \ref{lem.concat}, $p$ consists of an $\e(b)$ path concatenated 
with an $\e(b; v)$ path. Denote the latter by $q$. If $t = l_m$, let $q = p$.

By Theorem \ref{thm.honest}, each vertex $u$ of tenacity $t$ on $q$ is BFS-honest w.r.t. $p$. 
Let us partition these vertices into two sets: $S_1$ ($S_2$) consists of vertices $u$ such that $|p[f \ \mbox{to} \ u]| = \mn(u)$
($ = \mx(u)$). Let $S_1^{'} = S_1 \cup \{ b \}$. Clearly $v \in S_2$. Hence $S_1^{'}$ and $S_2$ are both non-empty. Let $a$ be the vertex 
in $S_1^{'}$ having the largest minlevel and $c$ be the vertex in $S_2$ having the smallest maxlevel. Now there are two cases.

{\bf Case 1:} $a$ and $c$ are adjacent on $p$ and $(a, c)$ is a matched edge. By Lemma \ref{lem.t-matched}, $\t(a) = \t(c) = \t(a, c)$.
Furthermore, for both $a$ and $c$, their minlevel is their oddlevel, therefore $(a, c)$ is not a prop. Hence it is a bridge of tenacity $t$.

{\bf Case 2:} In the remaining case, $a$ and $c$ must both be outer vertices, and in general, there may be several vertices of tenacity 
less than $t$ between $a$ and $c$ on $p$. The latter vertices must be in blossoms of tenacity $(t-2)$ having base $a$ or $c$ --- otherwise
a shorter $\mx(v)$ path can be obtained. For the same reason. the path go from $a$ or the blossom with base $a$ to $c$ exactly once. 
Let $a'$ be the last vertex on $p$ which lies in the blossom with base $a$; if there is no such vertex, let $a' = a$.
Similarly, let $c'$ be the first vertex on $p$ which lies in the blossom with base $c$; if there is no such vertex, let $c' = c$.
Then $(a', c')$ will be an unmatched edge of $p$. We will show that it is a bridge of tenacity $t$.

By Lemma \ref{lem.concat}, $p[a \ \mbox{to} \ a']$ is an $\e(a; a')$ path and $p[c \ \mbox{to} \ c']$ is an $\e(c; c')$ path.
Now, $\t(a', c') = \e(a') + \e(c') + 1$. Substituting $\e(c) = t - \o(u)$, $\e(a') = \e(a) + \e(a; a')$ and $\e(c') = \e(c) + \e(c; c')$,
we get:
\[  \t(a', c') = \e(a) + \e(a; a') + t - \o(u) + \e(c; c') + 1 \ \ = \ \ t .\]
Clearly, $a$ gets its minlevel from its matched neighbor and if $a' \neq a$, $a'$ gets its minlevel from the blossom $\CB_{a, t-2}$.
Similarly, $c$ gets its minlevel from its matched neighbor and if $c' \neq c$, $c'$ gets its minlevel from the blossom $\CB_{c, t-2}$.
Therefore $(a', c')$ is not a prop. Hence it is a bridge of tenacity $t$.

Finally, we show that none of the remaining edges on $q$ is a bridge of tenacity $t$. Consider an edge $(i, j)$ on $p[b \ \mbox{to} \ a]$,
with $i$ below $j$ on $p$. If $\t(j) = t$ then $(i, j)$ must be a prop and if $t(j) < t$ then $j$ lies in a blossom nested in $\bt$
and $t(i, j) < t$. A similar argument holds for the edges on $p[c \ \mbox{to} \ v]$.
\end{proof}

}

\section{Proof of Correctness and Running Time}
\label{sec.proof}

We first need to prove that each vertex reachable from an unmatched vertex by an alternating path  will be assigned its correct minlevel and maxlevel. This is done by an induction on the search level in Theorem \ref{thm.levels}. The proof for minlevels is straightforward.

\begin{lemma}
\label{lem.uses}
Let $(u, v)$ be a bridge with $\t(u, v) \leq l_m$.
Then the following hold.
\begin{itemize}
\item
If $(u, v)$ is matched then $u$ and $v$ are both inner vertices.
\item
If $(u, v)$ is unmatched then if $u$ is outer, $\t(u) \leq \t(u, v)$,
and if $u$ is inner, $\t(u) < \t(u, v)$.
\end{itemize}
\end{lemma}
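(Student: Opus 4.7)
The matched case is immediate. If $(u,v)$ is matched and $u$ is outer, then $\mn(u)=\e(u)$ is even, so every $\mn(u)$-path ends at $u$ with a matched edge; since $(u,v)$ is the unique matched edge incident to $u$, that path must end with $(v,u)$, making $v$ a predecessor of $u$ and $(u,v)$ a prop---contradicting the bridge hypothesis. Hence $u$ is inner, and the symmetric argument (or an appeal to Lemma~\ref{lem.t-matched}) forces $v$ to be inner as well.

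For the unmatched case the workhorse construction is: pick an $\e(v)$-path $p$ from an unmatched vertex $f$ to $v$ (it ends with a matched edge at $v$) and append the unmatched edge $(v,u)$. The result is an alternating walk of odd length $\e(v)+1$ from $f$ to $u$. If $u \notin p$, this walk is a simple alternating path, so $\o(u)\le \e(v)+1$, giving $\t(u)=\e(u)+\o(u)\le \e(u)+\e(v)+1=\t(u,v)$. If $u=v_k\in p$ for some $0\le k<\e(v)$, then the prefix $p[f \text{ to } u]$ is a simple alternating path to $u$ of length $k$.

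For $u$ inner, we must upgrade to strict inequality. If $u\notin p$ and $\o(u)=\e(v)+1$, then $p\bullet(v,u)$ would be an $\o(u)=\mn(u)$ path ending with $(v,u)$, making $v$ a predecessor of $u$ and $(u,v)$ a prop---contradicting that $(u,v)$ is a bridge; hence $\o(u)\le \e(v)$. If $u\in p$ at odd position $k$, then $\o(u)\le k<\e(v)$. If $u\in p$ at even position $k$, the prefix gives $\e(u)\le k$, and innerness ($\o(u)\le \e(u)$) yields $\o(u)\le k<\e(v)$. In every subcase, $\t(u)<\t(u,v)$.

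For $u$ outer, the cases $u\notin p$ and $u\in p$ at odd position are handled directly by the workhorse construction. The remaining subcase---$u$ on $p$ at an even position $k$---is the main obstacle, because outerness ($\e(u)<\o(u)$) precludes converting the even-parity prefix bound into a bound on $\o(u)$. I resolve it by exploiting the bridge condition: the truncated walk $p[f \text{ to } u]\bullet (u,v)$ is a simple odd alternating path to $v$ of length $k+1$, so $\o(v)\le k+1<\e(v)$, which forces $v$ to be \emph{inner}; and since no $\mn(v)=\o(v)$-path ends with $(u,v)$ (bridge condition), in fact $\o(v)\le k-1$. Taking any $\mn(v)$-path $q$ (length $\le k-1$, ending with an unmatched edge $(x,v)$ with $x\ne u$) and splicing it with the reversal of $p[u \text{ to } v]$, namely $v\to v_{\e(v)-1}\to\cdots\to v_{k+1}\to u$, yields an alternating walk from the start of $q$ to $u$ of odd length $\o(v)+(\e(v)-k)\le \e(v)-1$. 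When this walk is simple we conclude $\o(u)\le \e(v)-1$; otherwise $q$ first meets the tail of $p$ at some vertex $v_j$ with $k<j\le \e(v)-1$, and the shortcut $q[f'\text{ to }v_j]\bullet(v_j\to\cdots\to u)$ is a simple alternating path to $u$ of odd length at most $\e(v)+1$, giving $\t(u)\le \t(u,v)$. The main obstacle is precisely verifying that this splicing/shortcutting produces a valid simple alternating path of the right parity; the other subcases are routine consequences of the basic extension construction.
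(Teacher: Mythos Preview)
Your matched case and the inner-$u$ case are correct and close to the paper's argument. The real issue is the last subcase: $u$ outer, $(u,v)$ unmatched, and $u$ lies on the $\e(v)$-path $p$ at an even position $k$. Here your splicing argument is genuinely incomplete, and not only in the way you flag.

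First, you assert that $q$ ``first meets the tail of $p$ at some vertex $v_j$ with $k<j\le \e(v)-1$,'' but you never rule out $j=k$, i.e.\ $u\in q$. Second, even granting $k<j<\e(v)$, your shortcut $q[f'\ \text{to}\ v_j]\bullet p^{-1}[v_j\ \text{to}\ u]$ alternates at $v_j$ only when the position $m$ of $v_j$ on $q$ and the position $j$ on $p$ have \emph{opposite} parity; when they have the same parity the backward splice fails. In that case one must instead go \emph{forward} along $p$ through $v$ and then across $(v,u)$, i.e.\ use $q[f'\ \text{to}\ v_j]\bullet p[v_j\ \text{to}\ v]\bullet(v,u)$; this is simple (by the choice of $v_j$ as first meeting), alternates, is odd, and has length $m+(\e(v)-j)+1\le \e(v)+1$ because $m<\o(v)\le k-1<j$. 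So the argument can be completed, but both the ``$u\in q$'' possibility and the same-parity branch must be handled.

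The paper avoids most of this by splitting on the inner/outer status of \emph{both} endpoints. When $u$ and $v$ are both outer, a short direct argument gives $\e(u)=\e(v)$ and hence $\t(u)=\t(v)=\t(u,v)$, with no splicing at all. When $u$ is outer and $v$ is inner, the bridge condition yields $\e(u)+1>\o(v)$, hence $\t(v)<\t(u,v)$; then either $\t(u)\le\t(v)$ (done), or $\t(u)>\t(v)$, in which case Theorem~\ref{thm.honest} forces $u$ to sit at position $\mn(u)=\e(u)$ on $p$ and, since $\e(u)\ge\o(v)+1$, to be absent from every $\o(v)$-path $q$. This disposes of your missing ``$u\in q$'' case cleanly and reduces the remaining work to the two-branch splice above. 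I recommend restructuring along these lines: the both-outer case becomes trivial, and in the mixed case Theorem~\ref{thm.honest} does the heavy lifting that your position-chasing tries to do by hand.
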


\begin{proof}
First assume that $(u, v)$ is matched. Since $(u, v)$ is not a prop, neither endpoint of this edge assigns a minlevel (of even parity) to the other.
Therefore, the minlevel of both $u$ and $v$ must be odd, and hence they are inner vertices.

Next assume that $(u, v)$ is unmatched. Now, there are three cases:

{\bf Case 1:} $u$ and $v$ are both outer vertices.\\
We will first establish that $\e(u) = \e(v)$. Suppose $\e(u) < \e(v)$.
Then $\e(u) + 1 < \e(v)$. Since an $\e(u)$ path concatenated with edge $(u, v)$ gives an odd alternating path to $v$, we get that 
$\o(v) \leq \e(u) + 1 < \e(v)$, thereby contradicting the assumption that $v$ is outer. 

Hence $\e(u) = \e(v) = i$, say. Clearly, $\o(v) \geq i+1$. Since an $\e(u)$ path concatenated with edge $(u, v)$ gives an odd alternating 
path to $v$ of length $i+1$, $\o(v) = i + 1$. Similarly, $\o(u) = i+1$. Hence $\t(u) = \t(v) = \t(u, v) = 2i+1$.

{\bf Case 2:} $u$ and $v$ are both inner vertices.\\
Since $\mn(u)$ is odd and since $(u, v)$ is not a prop, $\e(v) + 1 > \o(u)$. 
Therefore $\e(u) + \e(v) + 1 > \e(u) + \o(u)$ and hence $\t(u) < \t(u, v)$. Similarly $\t(v) < \t(u, v)$.
   
{\bf Case 3:} $u$ is outer and $v$ is inner.\\   
Since $\mn(v)$ is odd and since $(u, v)$ is not a prop, $\e(u) + 1 > \o(v)$. 
This implies that $\e(u) + \e(v) + 1 > \e(v) + \o(v)$ and hence $\t(v) < \t(u, v)$.
Since an $\e(v)$ path concatenated with edge $(u, v)$ gives an odd alternating path to $u$, we get that 
$\o(v) \leq \e(v) + 1$, and hence $\t(u) \leq \t(u, v)$.
\end{proof}

\begin{remark} In the proof of Lemma \ref{lem.uses}, Case 3, if $\o(v) = \e(v) + 1$ (and hence $\t(u) = \t(u, v)$), the bridge
$(u, v)$ will have non-empty support; in particular, it contains $u$ and its matched neighbor. However, if $\o(v) < \e(v) + 1$, bridge
$(u, v)$ will have empty support.
\end{remark}

\begin{theorem}
\label{thm.levels}
For each vertex $v$ such that $\t(v) < l_m$, Algorithm \ref{alg} assigns $\mn(v)$ and $\mx(v)$ correctly. 
\end{theorem}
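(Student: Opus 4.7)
The plan is to proceed by induction on the search level $i$, maintaining jointly the following three invariants at the end of search level $i$:
(P1) every vertex $v$ with $\mn(v) \leq i+1$ has its correct minlevel recorded;
(P2) every bridge $e$ with $\t(e) \leq 2i+1$ has been placed in $Br(\t(e))$;
(P3) every vertex $v$ with $\t(v) \leq 2i+1$ has its correct maxlevel recorded.
The theorem follows from (P1) and (P3) applied at $i$ large enough to cover every vertex of tenacity $< l_m$. The base case $i = 0$ is immediate: unmatched vertices are initialized to minlevel $0$, and tenacities are at least $3$ so there is nothing to prove for (P2) and (P3).

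For (P1) I would use a standard alternating BFS argument. If $\mn(v) = i+1$, then by the definition of predecessor there is a neighbor $u$ with $\mn(u) = i$ joined to $v$ by a matched edge (if $v$ is outer) or an unmatched edge (if $v$ is inner). By the inductive form of (P1), $u$'s minlevel is known after search level $i-1$, so MIN at search level $i$ scans the edge $(u,v)$ with the appropriate parity and either sets $\mn(v) \la i+1$ or confirms that value. Conversely, no vertex can be assigned a smaller minlevel than the correct one, since such an error would force an erroneous predecessor at an earlier search level, contradicting the inductive hypothesis.

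The delicate step is (P2), which is what (P3) rests on. Fix a bridge $(u,v)$ of tenacity $t = 2i+1$ and argue by cases using Lemma \ref{lem.uses}. If $(u,v)$ is matched, both endpoints are inner, so each minlevel equals the corresponding oddlevel and each is at most $(t-1)/2 = i$; hence when MIN scans matched edges from the smaller-oddlevel endpoint at its own (odd) search level, $(u,v)$ is recognized as a bridge and $\t(u,v) = \o(u) + \o(v) + 1$ is immediately computable. If $(u,v)$ is unmatched and both endpoints are outer, the tenacity identity forces $\e(u) = \e(v) = i$, and MIN at search level $i$ scans the edge with both evenlevels already serving as minlevels. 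If $(u,v)$ is unmatched with at least one inner endpoint, Lemma \ref{lem.uses} forces every inner endpoint $w$ to satisfy $\t(w) < t$, hence $\t(w) \leq 2i-1$ by parity; by the inductive (P3), $\e(w) = \mx(w)$ has already been recorded by MAX at a strictly earlier search level. The edge is therefore recognized as a non-prop and inserted into $Br(2i+1)$ as soon as both endpoint evenlevels are available --- either by MIN upon scanning from an outer endpoint, or, in the both-inner subcase, by the edge-scanning step that MAX executes at an inner vertex when its maxlevel becomes known.

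With (P2) established, the verification of (P3) follows by combining two ingredients. Statement~2 of Theorem \ref{thm.base} guarantees that every vertex $v$ with $\t(v) = 2i+1$ lies on a $\mx(v)$ path containing a bridge of tenacity $2i+1$, and hence lies in the support of that bridge. By (P2), MAX at search level $i$ invokes DDFS on every bridge in $Br(2i+1)$; by Theorem \ref{thm.DDFS} and the DDFS Certificate, each call finds precisely the vertices in the corresponding support, so the assignment $\mx(v) \la 2i+1 - \mn(v)$ is correct because (P1) supplies the correct minlevel. The principal obstacle throughout is the tight synchronization buried in the mutual dependence of (P2) and (P3): correctness of MAX at level $i$ requires every tenacity-$(2i+1)$ bridge to be listed in time, which in turn requires the evenlevels of inner endpoints to have been assigned by MAX at strictly earlier levels. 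The parity gap in Lemma \ref{lem.uses} --- $\t(w) \leq \t(u,v) - 2$ at every inner endpoint of a non-prop edge --- supplies exactly the slack needed to close this loop.
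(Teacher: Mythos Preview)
Your three-invariant induction is exactly the paper's approach (its Tasks 1--3 are your (P1)--(P3)), and the case analysis for (P2) via Lemma~\ref{lem.uses} matches as well. Two technical points, however, are handled explicitly in the paper and only glossed over in your sketch.

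First, in (P1) you assert that a predecessor $u$ of $v$ satisfies $\mn(u) = i$. This need not hold: when $v$ is inner and $(u,v)$ is an unmatched prop, one only gets $\e(u) = i$, and if $u$ itself is inner then $i = \e(u) = \mx(u)$, not $\mn(u)$. The paper closes this by noting that in this case $\t(u) < 2i+1$, so the inductive (P3) --- not (P1) --- guarantees that $u$ already carries level $i$ and MIN will search from it. Thus (P1) at level $i$ depends on (P3) at earlier levels, a dependency your argument omits.

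Second, in (P3) you write that DDFS ``finds precisely the vertices in the corresponding support'' by appeal to Theorem~\ref{thm.DDFS}. That theorem concerns the abstract layered graph $H$, and the correctness of the mapping $G \to H$ at search level $i$ hinges on every vertex of tenacity $< 2i+1$ already sitting in a petal, so that DDFS skips it via $\bds$. Absent this, DDFS would absorb lower-tenacity vertices and assign them a spurious tenacity of $2i+1$; the paper's Figures~\ref{fig.early1} and~\ref{fig.early2} are devoted to exactly this failure mode. So the synchronization you identify at the end is needed not only for (P2) to feed (P3), but also internally for the DDFS step of (P3) to be sound --- and it again rests on the inductive (P3), which your sketch does not invoke at that point.
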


\begin{proof}
The case $l_m = 1$ is straightforward and involves finding a maximal matching in $G$. Henceforth we will assume that $l_m \geq 3$.
We will show, by strong induction on $i$, for $i = 0$ to $(l_m -1)/2$ that at search level $i$, Algorithm \ref{alg} will accomplish:
\begin{description}
\item
{\bf Task 1:} Procedure MIN assigns a minlevel of $i+1$ to exactly the set of vertices having this minlevel. It also identifies all props
that assign a minlevel of $i+1$.
\item {\bf Task 2:}
By the end of execution of procedure MIN at this search level, $Br(2i+1)$ is the set of all bridges of tenacity $2i + 1$.
\item {\bf Task 3:}
Procedure MAX assigns correct maxlevels to all vertices having tenacity $2i+1$.
\end{description}

The base case, $i = 0$, is obvious: MIN will assign an oddlevel of 1 to each neighbor of each unmatched vertex. Clearly, no edge
can have tenacity 1.

Next we assume the induction hypothesis for all search levels less than $i$, and prove that Algorithm \ref{alg} will accomplish the three 
tasks at search level $i$.

{\bf Task 1:}
By the induction hypothesis, the minlevel assigned to vertex $v$ at the beginning of execution of MIN at search level $i$ is $\infty$ if and 
only if $\mn(v) \geq i+1$. Since MIN searches from all vertices having level $i$ along the correct parity edges and assigns a minlevel to a vertex 
only if its currently assigned minlevel is $\geq i+1$, any vertex $v$ that is assigned a minlevel in this search level must indeed satisfy 
$\mn(v) = i+1$, and the edge that reaches $v$ will be correctly classified as a prop.

We next prove that every vertex $v$ with $\mn(v) = i+1$ will be assigned its minlevel in this search level, and every prop that assigns a 
minlevel of $i+1$ will be classified as a prop. Let $\mn(v) = i+1$, let $p$ be a $\mn(v)$ path, and let $(u, v)$ be the last edge on $p$. 
Clearly $(u, v)$ is a prop, and every prop that assigns a minlevel of $i+1$ is of this type. Now, $u$ must be BFS-honest w.r.t. $p$: 
If not, then $v$ must occur on a shorter path to $u$, contradicting $\mn(v) < i+1$.
If $|p[f \ \mbox{to} \ u]| = i = \mx(u)$ then $\t(u) < 2i + 1$. Otherwise,
$|p[f \ \mbox{to} \ u]| = i = \mn(u)$. 

In either case, by the induction hypothesis, $u$ has already been assigned a level of $i$.
Therefore, at search level $i$, MIN will search from $u$ along edge $(u, v)$ and will find $v$. 
By the induction hypothesis, at this point, either the minlevel of $v$ is set to either $\infty$ or $i+1$\footnote{The latter case happens if 
$\e(u) = i$ and $v$ has been reached earlier in this search level while searching along an edge $(u', v)$ with $\e(u') = i$.}.
In either case, $v$ will be assigned a minlevel of $i+1$, $u$ will be declared a predecessor of $v$ and $(u, v)$ will be declared a prop.

{\bf Task 2:}
Let $(u, v)$ be a matched bridge with $\t(u, v) = 2i+1$.
By Lemma \ref{lem.t-matched}, $\t(u) = \t(v) = \t(u, v)$, and by Lemma \ref{lem.uses}, 
$u$ and $v$ are both inner. Therefore, $\o(u) = \o(v) = i$.
Hence during search level $i$, MIN will determine that $(u, v)$ is a bridge, that its tenacity is $2i+1$, and will insert it in $Br(2i+1)$.

Next assume that $(u, v)$ is an unmatched bridge with $\t(u, v) = 2i+1$. We will consider the three cases given in Lemma \ref{lem.uses}.
In Case 1, at search level $i$, MIN will determine that $(u, v)$ is a bridge of tenacity $2i+1$. 

In Case 2, assume that $\t(u) \geq \t(v)$; of course $\t(u) < \t(u, v)$. At search level $(\t(v) -1)/2$, MAX will assign $\e(v)$, and
at search level $(\t(u) -1)/2 < i$, while assigning $\e(u)$, MAX will be able to determine that $(u, v)$ is a bridge of tenacity $2i+1$. 

In Case 3, since $u$ is outer, $\t(u) \leq \t(u, v) = 2i+1$. Therefore $\e(u) = \mn(u) \leq i$, and hence it will be
assigned by MIN at search level $\leq i$. MIN will also determine that $(u, v)$ is a bridge.
Since $v$ is inner, $\t(v) < \t(u, v) = 2i+1$. Hence $\e(v) = \mx(v)$ will be assigned by MAX at 
search level $(\t(v) -1)/2 < i$. Of these two operations, the one that happens later will determine the tenacity of bridge $(u, v)$ and
will insert it in $Br(2i+1)$. Clearly, in either case, this will happen by the end of execution of procedure MIN at search level $i$.

{\bf Task 3:}
Statement 2 of Theorem \ref{thm.base}  shows that every vertex of tenacity $2i+1$ lies in the support of a bridge of tenacity $2i+1$, and
by Task 2, all such bridges are in $Br(2i+1)$ at the start of MAX in search level $i$. These two facts together with the following
gives a proof for the current task.

In a run of MAX, consider the point at which DDFS is called with bridge $(u, v) \in Br(2i+1)$. Let $S$ be the set of vertices of tenacity $2i+1$ found 
by MAX so far. We next prove:
\begin{claim}
The set of vertices found by DDFS at this point is $\support(u, v) -S$.
\end{claim}
By the induction hypothesis, every vertex of tenacity $< 2i+1$ is already in a petal. Therefore, as DDFS follows down predecessor edges
starting from $u$ and $v$, if any such vertex is encountered, DDFS will skip to the $\bds$ of this petal \footnote{The importance of 
this subtle point, which is related to the idea of ``precise synchronization of events'' is explained below with the help of Figures \ref{fig.early1} 
and \ref{fig.early2}.
}.
Every vertex in $S$ is also in a petal, hence the same applies to it.

Let $w \in \support(u, v) -S$. Then there is a $\mx(w)$ path containing $(u, v)$, say $p$; assume $p$ starts at unmatched vertex $f$.
Assume $v$ is higher than $u$ on $p$. Then $w$ is reachable from $v$ by following predecessor edges and skipping currently formed petals on the way. 
The path $p[f \ \mbox{to} \ u]$ gives DDFS a disjoint way of reaching ``below'' $w$. Hence DDFS will find $w$.  
\end{proof}

\begin{figure}[ht]
\begin{minipage}[b]{0.5\linewidth}
\centering
\includegraphics[width=\textwidth]{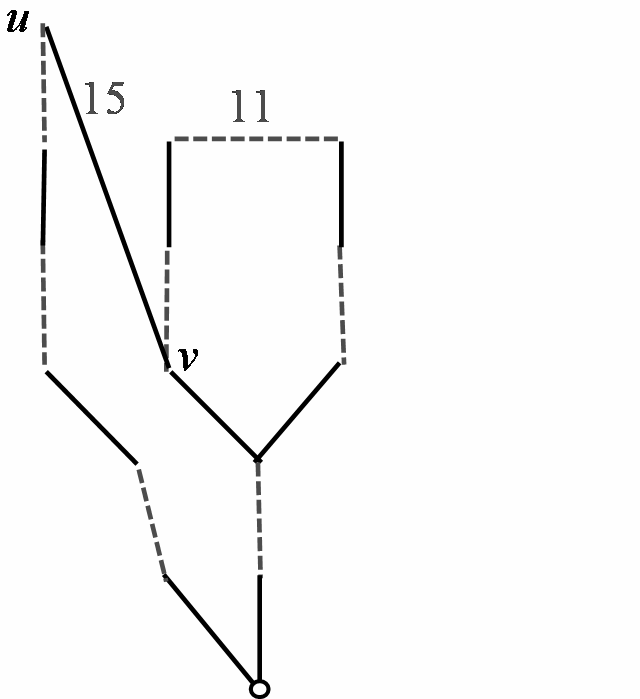}
\caption{Can DDFS be performed on bridge $(u, v)$ at search level 6?}
\label{fig.early1}
\end{minipage}
\hspace{0.5cm}
\begin{minipage}[b]{0.5\linewidth}
\centering
\includegraphics[width=\textwidth]{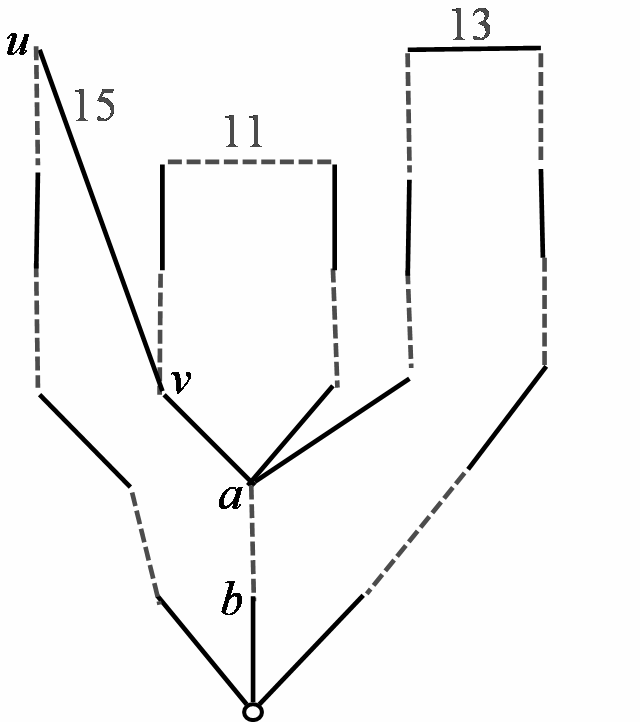}
\caption{If so, vertices $a$ and $b$ will get wrong tenacities.}
\label{fig.early2}
\end{minipage}
\end{figure}

In Figure \ref{fig.early1}, the algorithm determines that $(u, v)$ is a bridge of tenacity 15 at search level 6.
However, according to Algorithm \ref{alg}, DDFS has to be performed on $(u, v)$ at search level 7. The question arises,
``Why wait till search level 7; why not perform DDFS on $(u, v)$ when procedure MAX is run at search level 6?'' To clarify this, let us change 
the algorithm so it runs DDFS on an edge as soon as its tenacity and its status as a bridge have been determined. Assume further
that among the various bridges ready for processing, ties are broken arbitrarily\footnote{By making the example given in Figure 
\ref{fig.early2} slightly bigger, one can easily ensure that there are no such ties.}.

Now consider the enhanced graph of Figure \ref{fig.early2}, in which vertices $a$ and $b$ are in the support of the bridge of 
tenacity 13 and hence have tenacity 13. Assume that when MAX is run at search level 6 the bridge of tenacity 15 is processed first. 
Since the tenacities of vertices $a$ and $b$ are not set yet, DDFS will visit them and assign them a of tenacity 15, which would be incorrect. 
Observe that the correctness of MAX crucially depends on assigning tenacities to each edge of tenacity less than $2i+1$ before
processing bridges of tenacity $2i+1$, i.e., the precise manner in which events are synchronized in Algorithm \ref{alg}. 

\begin{lemma}
\label{lem.maximal}
The procedures given in Section \ref{sec.finding} will find a maximal set of disjoint minimum length augmenting paths in $G$.
\end{lemma}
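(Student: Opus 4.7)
The plan is to verify three things: (i) every path returned by the procedures in Section \ref{sec.finding} is a legitimate augmenting path of length $l_m$; (ii) the returned paths are pairwise vertex-disjoint; and (iii) after all bridges of tenacity $l_m$ are processed, no further length-$l_m$ augmenting path survives in the residual graph. Parts (i) and (ii) will be essentially bookkeeping, while (iii) is the substantive content of the lemma.

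For (i), I would unfold the construction in Section \ref{sec.one}. When DDFS on a bridge $(c,d) \in Br(l_m)$ terminates in Case 2, the DDFS Certificate produces vertex-disjoint paths in $H$ from the roots to two distinct layer-$0$ vertices $f_1,f_2$. Using the mapping of Section \ref{sec.H} and Statement 4 of Theorem \ref{thm.base}, each ``jump'' through a $\bds$-pointer can be replaced by a concrete $\e(b;u)$ or $\o(b;u)$ sub-path inside the corresponding blossom, yielding an honest alternating path in $G$. Concatenating the red leg, the bridge $(c,d)$, and the green leg (with the two blossom expansions plugged in) gives a simple alternating walk of length exactly $l_m$ between two unmatched vertices. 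Simplicity follows from the laminarity of blossoms (Statement 3 of Theorem \ref{thm.base}), which ensures that the two expanded sub-paths live in disjoint blossoms of smaller tenacity whenever they would otherwise threaten to share a vertex.

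For (ii), once a path $p$ is found, its vertices are deleted and RECURSIVE REMOVE is applied; subsequent DDFS calls see a strictly smaller graph and cannot reuse any vertex of $p$. Because bridges, petals, and $\bds$-pointers continue to point only at surviving vertices (RECURSIVE REMOVE precisely enforces the property ``every matched vertex with a minlevel still has a predecessor''), the next augmenting path produced is vertex-disjoint from all earlier ones.

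The main obstacle is (iii), maximality. I would argue by contradiction: suppose that after MAX has processed every bridge in $Br(l_m)$ there still exists a length-$l_m$ augmenting path $p$ in the residual graph, disjoint from the paths already output. Since the endpoints of $p$ are unmatched vertices of tenacity $l_m$ (the minimum oddlevel), Statement 2 of Theorem \ref{thm.base} (and its addendum for tenacity $l_m$) tells us that $p$ contains some bridge $(u,v)$ of tenacity $l_m$. Consider the moment DDFS was invoked on $(u,v)$. Every vertex of $p$ was present at that moment (it still is now, and no vertex that DDFS skipped through a $\bds$-pointer can belong to $p$, since $p$ is disjoint from all produced paths and petals are disjoint from those paths). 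Following predecessor edges from $u$ toward one endpoint of $p$ and from $v$ toward the other endpoint, while obeying the $\bds$-jumps, would furnish two vertex-disjoint descending paths in $H$ from the two roots to distinct layer-$0$ vertices; by Theorem \ref{thm.DDFS} the DDFS on $(u,v)$ must therefore have terminated in Case 2 and produced (some) augmenting path, say $p'$. But $p'$ was then removed together with its vertices, contradicting either the existence of the surviving $p$ or the disjointness of the produced set. The subtle step here, which I would spell out carefully using Statement 4 of Theorem \ref{thm.base}, is that the existence of the in-$G$ path $p$ really does translate into two disjoint descending paths in the layered $H$ constructed at the time $(u,v)$ was processed; once that translation is established, Theorem \ref{thm.DDFS} delivers the contradiction and maximality follows.
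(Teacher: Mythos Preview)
Your three-part decomposition is more explicit than the paper's own argument, and part (iii) articulates the maximality claim more directly than the paper does. But there is a genuine gap, and it is precisely the one structural observation that the paper's proof turns on.

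In part (ii) you assert that after removing a found path $p$ and running RECURSIVE REMOVE, ``bridges, petals, and $\bds$-pointers continue to point only at surviving vertices.'' You justify this only by the predecessor property, but that property alone does not guarantee that buds of surviving petals survive. The paper's proof supplies the missing step: if some $v \in p$ lies in a blossom $\CB$, then by Theorem~\ref{thm.base} the base $b$ of $\CB$ lies on $p$ as well; hence $b$ is removed, and since every $\mn$-path of every vertex in $\CB$ passes through $b$ (Statement~4), RECURSIVE REMOVE then wipes out all of $\CB$. This is what ensures that no surviving vertex has a $\bds$-pointer into removed territory and that the residual graph still satisfies the DDFS Requirement. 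Without this observation your (ii) is an assertion, not an argument, and your (iii) inherits the same hole: the ``translation'' of the surviving $p$ into two vertex-disjoint descending paths in the $H$ built at the moment DDFS was called on $(u,v)$ relies on $\bds$ being well-defined on the relevant vertices and on the fact that the $\bds$-images of the two halves of $p$ do not collide. Both of these need the blossom-removal fact.

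A smaller point in (iii): your final contradiction should be stated as ``the path $p'$ produced by DDFS on $(u,v)$ necessarily contains the bridge $(u,v)$ itself (Section~\ref{sec.one}), so $u,v\in p'\cap p$, and hence $u,v$ were deleted, contradicting the survival of $p$.'' The phrasing ``contradicting either the existence of the surviving $p$ or the disjointness of the produced set'' obscures this and suggests you have not pinned down which alternative actually fails.
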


\begin{proof}
By the properties established in Theorem \ref{thm.base}, the procedure given in Section \ref{sec.one} finds a minimum length augmenting path, say $p$. Clearly, the vertices removed by procedure RECURSIVE REMOVE of Section \ref{sec.maximal} cannot be part of a minimum length augmenting path that is disjoint from $p$. Suppose $v \in p$ is in a blossom and let $\CB$ be the maximal such blossom, with base $b$. Then, by Theorem \ref{thm.base}, $b \in p$ and it will be removed from the graph. Following this, RECURSIVE REMOVE will remove all remaining vertices of $\CB$. Therefore, w.r.t. any remaining bridge, the remaining graph will satisfy the DDFS Requirement. Hence the next path, if any, can be found by the same process. 
\end{proof}

\begin{theorem}
\label{thm.time}
The MV algorithm finds a maximum matching in general graphs in time $O(m \sqrt{n})$ on the RAM model and 
$O(m \sqrt{n} \cdot \alpha(m, n))$ on the pointer model, where $\alpha$ is the inverse Ackerman function.
\end{theorem}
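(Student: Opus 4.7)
The plan is to decompose the running time into the number of phases times the cost of a single phase, and then bound each piece separately. By the Karp--Karzanov result cited in Section~\ref{sec.running}, only $O(\sqrt{n})$ phases are executed before the matching becomes maximum, so it suffices to prove that one phase runs in $O(m)$ time on the RAM model and $O(m \cdot \alpha(m,n))$ on the pointer model. The rest of the proof is dedicated to bounding the per-phase cost.

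I would first handle procedure MIN. Because MIN is essentially one step of alternating BFS at each search level, and because each edge $(u,v)$ can cause a search only when $u$ is at its current minlevel, a simple amortization shows every edge is scanned $O(1)$ times across all search levels of a phase. Coupled with the fact that inserting a bridge into $Br(\cdot)$ once its tenacity is known takes constant time, the total MIN work is $O(m)$. The more delicate half is procedure MAX, where each call to DDFS on a bridge $(u,v) \in Br(2i+1)$ costs $O(|E_b|)$ in Case~1 and $O(|E_p|)$ in Case~2 by Theorem~\ref{thm.DDFS}. The key combinatorial claim, to be proven by induction on search level using the laminarity of blossoms (Statement~3 of Theorem~\ref{thm.base}) and the fact that each vertex enters at most one new petal per search level (Section~\ref{sec.Petal-Bud}), is that the edge sets explored by successive DDFS calls in a phase are (up to a constant) edge-disjoint. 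Consequently the total work across all DDFS invocations of a phase is $O(m)$.

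The next ingredient is the cost of the auxiliary operations needed to run DDFS on $G$ via the mapping to $H$ from Section~\ref{sec.H}. Each time DDFS steps from a vertex $v$ in $G$ it must jump to $\bds(v)$, and each time a new petal is discovered its vertices must be linked to the new bud. These are precisely the \textsc{find} and \textsc{union} operations of a disjoint-set data structure over at most $n$ elements with $O(m)$ operations per phase. Using Tarjan's pointer-machine implementation this gives $O(m \cdot \alpha(m,n))$ per phase, and using the Gabow--Tarjan linear-time algorithm for the special case of union-find on a tree that is revealed incrementally (applicable here because the nesting structure of petals is laminar, again by Statement~3 of Theorem~\ref{thm.base}), this improves to $O(m)$ on the RAM model. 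Path reconstruction in Section~\ref{sec.finding} traverses each edge of the extracted augmenting path once and each blossom is entered at most once via its bridge pointers, so by Lemma~\ref{lem.maximal} a maximal set of disjoint minimum-length augmenting paths is built in $O(m)$ additional time per phase.

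Combining the three components per phase, a phase costs $O(m)$ on the RAM model and $O(m \cdot \alpha(m,n))$ on the pointer model; multiplying by the $O(\sqrt{n})$ phase bound yields the stated overall complexities. The main obstacle is the edge-disjointness claim underlying the MAX bound: one must show that even though the supports of different bridges of the same tenacity may overlap (as in Figure~\ref{fig.disjoint}), once a vertex has been absorbed into a petal any subsequent DDFS will skip over it via $\bds$ and therefore not re-explore its incident edges. This skipping behavior, established in the proof of Theorem~\ref{thm.levels} (Task~3) and illustrated in Figures~\ref{fig.early1}--\ref{fig.early2}, is exactly what makes the amortization go through; without the precise synchronization of MIN and MAX across search levels, the argument would fail.
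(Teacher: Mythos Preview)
Your proposal is correct and follows essentially the same approach as the paper: decompose into $O(\sqrt{n})$ phases, argue that MIN, MAX (via DDFS), path extraction, and RECURSIVE REMOVE each touch every edge $O(1)$ times per phase, and isolate $\bds$ computation as the sole remaining cost, handled by Tarjan's union--find on the pointer model and Gabow--Tarjan on the RAM model. Your write-up is considerably more detailed than the paper's terse proof; one small over-complication is that the amortization of DDFS work and the applicability of Gabow--Tarjan rest on the algorithmic fact that petals are disjoint by construction and nest via the $\bd$ pointers, rather than on the graph-theoretic laminarity of blossoms (Statement~3 of Theorem~\ref{thm.base}), so those citations can be dropped.
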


\begin{proof}
Each of the procedures of MIN, MAX, finding augmenting paths, and RECURSIVE REMOVE examine each edge a constant number of times in each phase. The only operation that remains is
that of computing $\bds$ during DDFS. This can be implemented on the pointer model using the set union algorithm \cite{Tarjan}, which will take $O(m \cdot \alpha(m, n))$ time per phase. Alternatively, it can be implemented on the RAM model using the linear time algorithm for a special case of set union \cite{GTarjan}; this will take $O(m)$ time per phase. Since $O(\sqrt{n})$ phases suffice for finding a maximum matching \cite{Karp,Karzanov}, the theorem follows.
\end{proof}

A question arising from Theorem \ref{thm.time} is whether there is a linear time implementation of $\bds$ in the pointer model. \cite{MV} had claimed, without proof, that path compression by itself suffices to achieve this. They stated that because of the special structure of blossoms, a charging argument could be given that assigns a constant cost to each edge. This claim is left as an open problem. 

Next, let us consider the question, ``What is the best way of implementing $\bds$ computations in practice?'' To answer this, let us compare an 
implementation based on the set union data structure \cite{Tarjan} and an implementation that uses only path compression. In the latter case, there is 
no need to build and maintain a separate data structure on the side: each bud simply
maintains and updates a pointer to the lowest bud it has reached. In the former case, not only is a separate data structure needed, but the
``trees'' obtained in it will, in general, destroy the natural tree structure of nesting of petals. Additionally, in the latter case, one would expect the unions 
to be close to balanced anyway in practice. Considering the implementation effort and computational overhead of the former approach,
we believe the latter one is superior. Using \cite{TarjanV}, the worst case running time for path compression in a phase in the second case is bounded by $O(m \log(n))$.



	\section{Discussion}
\label{sec.discussion}

For the bipartite graph maximum matching problem, a few years ago, the running time was improved to $O(m^{10/7})$ \cite{Madry13} and more recently to $O(m^{11/8})$ \cite{Sidford-L}; both these improvements are for sparse graphs only. A concerted effort has been made to improve the running time for the case of general graphs as well. 

As is well known, general graph matching has numerous applications. We mention a particularly interesting and important application to kidney exchange. This was mentioned in the Scientific Background for the 2012 Nobel Prize in Economics awarded to Alvin Roth and Lloyd Shapley \cite{Nobel}. Assume that agent $A$ requires a kidney transplant and agent $B$ has agreed to donate one of her kidneys to $A$; however, their kidney type is not compatible. Assume further that $(A', B')$ is another pair with an incompatibility. If it turns out that $(A, B')$ and $(A', B)$ are both compatible pairs, then let us say that the two pairs are {\em consistent}. If two pairs are consistent, then the appropriate transplants can be performed. 

Next assume that a number of incompatible pairs are specified, $(A_1, B_1), \ldots, (A_n, B_n)$ and for every two pairs, we know whether they are consistent. The problem is to find the maximum number of disjoint consistent pairs. This can be reduced to maximum matching as follows. Let $G = (V, E)$ be a graph with $V = \{v_1, \ldots , v_n\}$ where $v_i$ represents the pair $(A_i, B_i)$, and $(v_i, v_j) \in E$ if and only if the pairs $(A_i, B_i)$ and $(A_j, B_j)$ are consistent. Clearly a maximum matching in $G$ will yield the answer. 
		
	\section{Acknowledgements}
\label{sec.ack}

I wish to thank Ruta Mehta for diligently helping verify this proof, and Silvio Micali for embarking on a year-long journey, full of youthful exuberance, which led to the discovery of this algorithm. For the four-decade-long journey that led to the discovery of this proof, I must thank matching theory for its gloriously elegant combinatorial structure, which kept me going.

	\bibliographystyle{alpha}
	\bibliography{refs}

\begin{thebibliography}{23}
\providecommand{\natexlab}[1]{#1}
\providecommand{\url}[1]{\texttt{#1}}
\expandafter\ifx\csname urlstyle\endcsname\relax
  \providecommand{\doi}[1]{doi: #1}\else
  \providecommand{\doi}{doi: \begingroup \urlstyle{rm}\Url}\fi

\bibitem[Edmonds(1965{\natexlab{a}})]{Ed.poly}
J.~Edmonds.
\newblock Maximum matching and a polyhedron with 0,1-vertices.
\newblock 69:\penalty0 125--130, 1965{\natexlab{a}}.

\bibitem[Edmonds(1965{\natexlab{b}})]{Edmonds}
Jack Edmonds.
\newblock Paths, trees, and flowers.
\newblock \emph{Canadian Journal of mathematics}, 17\penalty0 (3):\penalty0
  449--467, 1965{\natexlab{b}}.

\bibitem[Even and Kariv(1975)]{EK75}
S.~Even and O.~Kariv.
\newblock An $o(n^{2.5})$ algorithm for maximum matching in general graphs.
\newblock pages 100--112, 1975.

\bibitem[Gabow and Tarjan(1985)]{GTarjan}
H.~N. Gabow and R.~E Tarjan.
\newblock A linear-time algorithm for a special case of disjoint set union.
\newblock \emph{J. Comput. System Sci.}, 30:\penalty0 209--221, 1985.

\bibitem[Gabow and Tarjan(1991)]{GTarjan2}
H.~N. Gabow and R.~E Tarjan.
\newblock Faster scaling algorithms for general graph matching problems.
\newblock \emph{Journal of the ACM}, 38:\penalty0 815--853, 1991.

\bibitem[Gabow(2017)]{Gabow2017weighted}
Harold~N Gabow.
\newblock The weighted matching approach to maximum cardinality matching.
\newblock \emph{Fundamenta Informaticae}, 154\penalty0 (1-4):\penalty0
  109--130, 2017.

\bibitem[Goldberg and Karzanov(2004)]{GKarzanov}
A.~V. Goldberg and A.~V. Karzanov.
\newblock Maximum skew-symmetric flows and matchings.
\newblock \emph{Math. Program., Ser. A}, 100:\penalty0 537--568, 2004.

\bibitem[Hopcroft and Karp(1973)]{Karp}
J.~Hopcroft and R.~M. Karp.
\newblock An $n^{5/2}$ algorithm for maximum matching in bipartite graphs.
\newblock \emph{SIAM Journal on Computing}, 2:\penalty0 225--231, 1973.

\bibitem[Jerrum and Sinclair(1989)]{JS89}
M.R. Jerrum and A.~Sinclair.
\newblock Approximating the permanent.
\newblock \emph{SIAM Journal on Computing}, 18:\penalty0 1149--1178, 1989.

\bibitem[Jerrum et~al.(1986)Jerrum, Valiant, and Vazirani]{count.JVV}
M.R. Jerrum, L.G. Valiant, and V.V. Vazirani.
\newblock Random generation of combinatorial structures from a uniform
  distribution.
\newblock \emph{Theoretical Computer Science}, 43:\penalty0 169--188, 1986.

\bibitem[Karzanov(1973)]{Karzanov}
A.~V. Karzanov.
\newblock An exact estimate of an algorithm for finding a maximum flow, applied
  to the problem on representatives.
\newblock \emph{Problems in Cybernetics}, 5:\penalty0 66--70, 1973.
\newblock Announced at the Seminar on Combinatorial Mathematics (Moscow, 1971).

\bibitem[Kuhn(1955)]{Kuh55}
H.W. Kuhn.
\newblock The {Hungarian} method for the assignment problem.
\newblock \emph{Naval Research Logistics Quarterly}, 2:\penalty0 83--97, 1955.

\bibitem[Liu and Sidford(2020)]{Sidford-L}
Yang~P Liu and Aaron Sidford.
\newblock Faster energy maximization for faster maximum flow.
\newblock In \emph{Proceedings of the 52nd Annual ACM SIGACT Symposium on
  Theory of Computing}, pages 803--814, 2020.

\bibitem[Lov\'{a}sz and Plummer(1986)]{LP.book}
L.~Lov\'{a}sz and M.D. Plummer.
\newblock \emph{Matching Theory}.
\newblock North-Holland, Amsterdam--New York, 1986.

\bibitem[Madry(2013)]{Madry13}
Aleksander Madry.
\newblock Navigating central path with electrical flows: From flows to
  matchings, and back.
\newblock In \emph{Foundations of Computer Science (FOCS), 2013 IEEE 54th
  Annual Symposium on}, pages 253--262. IEEE, 2013.

\bibitem[Micali and Vazirani(1980)]{MV}
S.~Micali and V.~V. Vazirani.
\newblock An ${O}(\sqrt{V} {E})$ algorithm for finding maximum matching in
  general graphs.
\newblock 1980.

\bibitem[Mucha and Sankowski(2004)]{Mucha}
M.~Mucha and P.~Sankowski.
\newblock Maximum matchings via gaussian elimination.
\newblock 2004.

\bibitem[Mulmuley et~al.(1987)Mulmuley, Vazirani, and Vazirani]{MVV}
Ketan Mulmuley, Umesh~V Vazirani, and Vijay~V Vazirani.
\newblock Matching is as easy as matrix inversion.
\newblock \emph{Combinatorica}, 7\penalty0 (1):\penalty0 105--113, 1987.

\bibitem[Roth and Shapley(2012)]{Nobel}
A.~Roth and L.~S. Shapley.
\newblock Stable allocations and the practice of market design.
\newblock \emph{The Royal Swedish Academy of Sciences}, 2012.
\newblock
  https://www.nobelprize.org/uploads/2018/06/advanced-economicsciences2012.pdf.

\bibitem[Tarjan(1975)]{Tarjan}
R.~E Tarjan.
\newblock Efficiency of a good but not linear set union algorithm.
\newblock \emph{Journal of the ACM}, 22:\penalty0 215--225, 1975.

\bibitem[Tarjan and Leeuwen(1984)]{TarjanV}
R.~E Tarjan and J.~Van Leeuwen.
\newblock Worst case analysis of set union algorithms.
\newblock \emph{Journal of the ACM}, 31(2):\penalty0 245--281, 1984.

\bibitem[Valiant(1979)]{Val.permanent}
L.G. Valiant.
\newblock The complexity of computing the permanent.
\newblock \emph{Theoretical Computer Science}, 8:\penalty0 189--201, 1979.

\bibitem[Vazirani(1994)]{va.matching}
V.~V. Vazirani.
\newblock A theory of alternating paths and blossoms for proving correctness of
  the ${O}(\sqrt{V} {E})$ general graph maximum matching algorithm.
\newblock 14(1):\penalty0 71--109, 1994.

\end{thebibliography}
	
	
\end{document}